\def\@subsubsecfont{\sffamily\bfseries}
\tiny\color[gray]{0.3},
\theoremstyle{plain}
\newtheorem{theorem}{Theorem}
\newtheorem{proposition}{Proposition}
\newtheorem{lemma}{Lemma}
\theoremstyle{definition}
\newtheorem{remark}{Remark}
\newtheorem{example}{Example}
\renewcommand*{\mathcal}{\CMcal}
\newcommand{\xmultimap}[2][]{\ext@arrow 0359\multimapfill@{#1}{#2}}
\newcommand*{\multimapfill@}{%
	\arrowfill@\relbar\relbar\multimap}
\newcommand{\labOne}{\ell} \newcommand{\labTwo}{k} \newcommand{\labThree}{t} \newcommand{\labFour}{q}
\newcommand*{\lvOne}{a} \newcommand{\lvTwo}{b}
\newcommand{\wtypeOne}{w}
\newcommand{\lcOne}{Q} \newcommand{\lcTwo}{L}  
\newcommand{\gateOne}{g} 
\newcommand*{\Hadamard}{\ensuremath{\mathit{H}}}
\newcommand*{\CNOT}{\ensuremath{\mathit{CNOT}}}
\newcommand*{\circuitOne}{C} \newcommand*{\circuitTwo}{D} \newcommand*{\circuitThree}{E}
\newcommand{\mtypeOne}{T} \newcommand{\mtypeTwo}{U}
\newcommand*{\sig}{\Sigma}
\newcommand*{\sigQC}{\sig^{\mathrm{QC}}}
\newcommand{\cidentity}[1]{id_{#1}}
\newcommand{\unitt}{\mathbb{1}}
\newcommand*{\munitt}{I}
\newcommand{\qubitt}{\mathsf{Qubit}}
\newcommand{\qubitts}{\mathsf{Q}}
\newcommand{\bitt}{\mathsf{Bit}}
\newcommand{\bitts}{\mathsf{B}}
\newcommand{\natt}{\mathsf{Nat}}
\newcommand*{\vect}[2]{\mathsf{Vec} \; #1 \; #2}
\newcommand{\circjudgment}[3]{#1 : #2 \to #3}
\newcommand*{\configjudgment}[5]{#1 \vdash \config #2 #3 : #4; #5}
\newcommand*{\configjudgmentEff}[6]{#1 \vdash \config #2 #3 : #4; #5; #6}
\newcommand{\struct}[1]{\bar{#1}}
\newcommand{\varOne}{x} \newcommand{\varTwo}{y} 
\newcommand{\contextOne}{\Gamma}
\newcommand{\pcontextOne}{\Phi}
\newcommand{\termOne}{M} \newcommand{\termTwo}{N} \newcommand{\termThree}{P}
\newcommand{\valOne}{V} \newcommand{\valTwo}{W} \newcommand{\valThree}{X}  
\newcommand{\typeOne}{A} \newcommand{\typeTwo}{B} \newcommand{\typeThree}{C}
\newcommand{\ptypeOne}{P} \newcommand{\ptypeTwo}{R}
\newcommand{\emptycontext}{\emptyset}
\newcommand{\unitv}{*}
\newcommand{\tuple}[2]{\langle#1,#2\rangle}
\newcommand{\abs}[3]{\lambda #1_{#2}.#3}
\newcommand{\app}[2]{#1\,#2}
\newcommand{\liftoperator}{\operatorname{\mathsf{lift}}}
\newcommand{\lift}[1]{\liftoperator #1}
\newcommand{\force}[1]{\operatorname{\mathsf{force}} #1}
\newcommand{\boxoperator}{\operatorname{\mathsf{box}}}
\newcommand{\boxt}[2]{\boxoperator_{#1} #2}
\newcommand{\boxedCirc}[3]{(#1,#2,#3)}
\newcommand{\applyoperator}{\operatorname{\mathsf{apply}}}
\newcommand{\apply}[2]{\applyoperator(#1,#2)}
\newcommand{\letoperator}{\mathsf{let}}
\newcommand{\letin}[3]{\letoperator\; #1 = #2 \;\mathsf{in}\; #3}
\newcommand{\letinLessSpace}[3]{\letoperator \mkern2mu #1 = #2 \mkern2mu \mathsf{in}\, #3}
\newcommand{\dest}[4]{\letoperator\; \tuple{#1}{#2} = #3 \;\mathsf{in}\; #4}
\newcommand*{\ifzoperator}{\mathsf{ifz}}
\newcommand{\ifz}[3]{\ifzoperator\; #1 \; \mathsf{then} \; #2 \; \mathsf{else} \; #3}
\newcommand{\returnoperator}{\mathsf{return}}
\newcommand{\return}[1]{\returnoperator\; #1}
\newcommand{\lineararrow}{\multimap}
\newcommand*{\arrowst}[3]{#1 \multimap_{#3} #2}
\newcommand*{\arroweff}[4]{#1 \stackrel{#4}{\multimap_{#3}} #2}
\newcommand{\bang}[1]{\operatorname{!}#1}
\newcommand{\bangeff}[2]{\operatorname{!}^{#2}#1}
\newcommand{\circt}[3]{\operatorname{\mathsf{Circ}}^{#1}(#2,#3)}
\newcommand{\tensor}[2]{#1 \otimes #2}
\newcommand*{\cjudgst}[3]{#1 \vdash_c #2 : #3}
\newcommand*{\cjudgstLessSpace}[3]{#1 \mkern-1mu \vdash_c \mkern-1mu #2 \! :\! #3}
\newcommand*{\cjudgeff}[4]{#1 \vdash_c #2 : #3; #4}
\newcommand{\vjudgst}[3]{#1 \vdash_v #2 : #3}
\newcommand*{\permequiv}{\cong_{\sigma}}
\newcommand{\config}[2]{(#1,#2)}
\newcommand{\eval}{\Downarrow}
\newcommand{\freshlabelsfunction}{\operatorname{freshlabels}}
\newcommand{\freshlabels}[1]{\freshlabelsfunction(#1)}
\newcommand{\appendfunction}{\operatorname{append}}
\newcommand{\append}[3]{\appendfunction(#1,#2,#3)}
\newcommand{\sub}[2]{[#1/#2]}
\newcommand{\rcount}[1]{\#(#1)}
\newcommand*{\defeq}{\stackrel{\mathrm{def}}{=}}
\newcommand{\powerset}{\operatorname{\EuScript{P}}}
\newcommand{\PQ}{\textsf{Proto-Quipper}}
\newcommand{\PQM}{\textsf{Proto-Quipper-M}}
\newcommand{\PQC}{\textsf{Proto-Quipper-C}}
\newcommand{\PQS}{\textsf{Proto-Quipper-S}}
\newcommand{\PQD}{\textsf{Proto-Quipper-D}}
\newcommand{\PQDyn}{\textsf{Proto-Quipper-Dyn}}
\newcommand{\PQR}{\textsf{Proto-Quipper-R}}
\newcommand{\PQRA}{\textsf{Proto-Quipper-RA}}
\newcommand{\Qsharp}{\texttt{Q\#}}
\newcommand{\Quipper}{\texttt{Quipper}}
\newcommand{\Qiskit}{\texttt{Qiskit}}
\newcommand{\Cirq}{\texttt{Cirq}}
\newcommand{\Python}{\texttt{Python}}
\newcommand{\Haskell}{\texttt{Haskell}}
\newcommand*{\VQPL}{\textsf{VQPL}}
\newcommand*{\QWIRE}{\textsf{QWire}}
\newcommand*{\EWIRE}{\textsf{EWire}}
\newcommand*\sem[1]{\llbracket #1 \rrbracket}
\newcommand*{\semM}[1]{\sem{#1}_{\mathcal{M}}}
\newcommand*{\semP}[1]{\sem{#1}_{P}}
\let\category=\mathcal
\newcommand\catA{\category{A}}
\newcommand\catC{\category{C}}
\newcommand\catE{\category{E}}
\newcommand*\catK{\category{K}}
\newcommand*{\catS}{\category S}
\newcommand*{\catM}{\category M}
\newcommand*{\catV}{\category V}
\newcommand*\Set{\mathbf{Set}}
\newcommand*\CPM{\mathbf{CPM}}
\newcommand*\Q{\mathbf{Q}}
\newcommand*{\Asrt}{\mathcal{A}\mkern-1mu\mathrm{srt}}
\DeclareFontFamily{U}{min}{}
\DeclareFontShape{U}{min}{m}{n}{<-> udmj30}{}
\newcommand\yo{\!\text{\usefont{U}{min}{m}{n}\symbol{'207}}\!}
\newcommand*\op{\mathrm{op}}
\newcommand*{\disc}{\mathbf{disc}}
\newcommand\obj{\mathbf{Obj}}
\newcommand*{\setObjOne}{X}
\newcommand*{\setObjTwo}{Y}
\newcommand*{\setObjThree}{Z}
\newcommand*{\mObjOne}{T}
\newcommand*{\mObjTwo}{U}
\newcommand*{\eObjOne}{\mathtt{t}}
\newcommand*{\eObjTwo}{\mathtt{u}}
\newcommand*{\eObjThree}{\mathtt{s}}
\newcommand*{\eObjUnit}{\mathtt{i}}
\newcommand*{\ltensor}{\ltimes}
\newcommand*{\rtensor}{\rtimes}
\newcommand*{\rtensorTwo}{\mathrlap{<}{\bigcirc}}
\newcommand*{\ltensorF}{\mathrlap{\,\ltensor}{\bigcirc}}
\newcommand*{\rtensorF}{\mathrlap{\,\rtensor}{\bigcirc}}
\newcommand*{\Fam}[1]{\mathbf{Fam}(#1)}
\newcommand*{\id}{\mathrm{id}}
\newcommand*{\unit}{\eta}
\newcommand*{\mult}{\mu}
\newcommand*{\Klarr}[3]{#1 \Rightarrow_{#3} #2}
\newcommand*{\unitp}{\iota}
\newcommand*{\strength}{\tau}
\newcommand*{\monad}{\mathcal T}
\newcommand*{\@circmonadSymb}{{\mathcal T}_{\catM}}
\newcommand*{\circmonadNoSt@r}[3]{\@circmonadSymb(#1, #2, #3)}
\newcommand*{\circmonadSt@r}{\@circmonadSymb}
\newcommand*{\circmonad}{\@ifstar\circmonadSt@r\circmonadNoSt@r}
\newcommand*{\symm}{s}
\newcommand*{\ev}{\mathbf{ev}}
\newcommand*{\applysem}{\mathbf{apply}}
\newcommand*{\functionSet}[2]{#1 \Rightarrow_{\scriptstyle \Set} #2}
\newcommand*{\boxsem}{\mathbf{box}}
\newcommand*{\dup}{\mathbf{dup}}
\newcommand*{\perm}{\mathbf{perm}}
\newcommand*{\Id}{\mathrm{Id}}
\newcommand*{\effOne}{e}
\newcommand*{\effTwo}{d}
\newcommand*{\nulleff}{\varepsilon}
\newcommand*{\abstraction}{\alpha}
\newcommand*{\refine}[3][]{#2 \rhd_{#1} #3}
\newcommand*{\catEff}{\tilde{\catE}}
\newcommand*{\circmonadEff}[1]{\@circmonadSymb^{#1}}
\newcommand*{\ple}{\lesssim}
\newcommand*{\MatSet}[3]{M_{#2, #3}(#1)}
\protected\def\unicodeotimes{\ensuremath{\otimes}}
\begin{document}

\title[On Circuit Description Languages, Indexed Monads, and Resource Analysis]{On Circuit Description Languages, Indexed Monads,\texorpdfstring{\\}{} and Resource Analysis}

\author{Ken Sakayori}
\orcid{https://orcid.org/0000-0003-3238-9279}
\affiliation{%
  \institution{The University of Tokyo}
  \country{Japan}}

\author{Andrea Colledan}
\orcid{0000-0002-0049-0391}
\affiliation{%
  \institution{University of Bologna}
  \country{Italy}
}
\affiliation{%
  \institution{Centre Inria d'Université Côte d'Azur}
  \country{France}
}

\author{Ugo Dal Lago}
\orcid{0000-0001-9200-070X}
\affiliation{%
  \institution{University of Bologna}
  \country{Italy}
}
\affiliation{%
  \institution{Centre Inria d'Université Côte d'Azur}
  \country{France}
}

\begin{abstract}
In this paper, a monad-based denotational model is introduced and shown 
adequate for the \PQ\ family of calculi, themselves being idealized 
versions of the \Quipper\ programming language. The use of a monadic approach 
allows us to separate the \emph{value} to which a term reduces from the 
\emph{circuit} that the term itself produces as a side effect. In turn, this
enables the denotational interpretation and validation of rich type systems in 
which the size of the produced circuit can be controlled. Notably, the proposed 
semantic framework, through the novel concept of circuit algebra, suggests 
forms of effect typing guaranteeing quantitative properties about the 
resulting circuit, even in presence of optimizations.
\end{abstract}

\begin{CCSXML}
<ccs2012>
   <concept>
       <concept_id>10010583.10010786.10010813.10011726</concept_id>
       <concept_desc>Hardware~Quantum computation</concept_desc>
       <concept_significance>300</concept_significance>
       </concept>
   <concept>
       <concept_id>10003752.10010124.10010131.10010133</concept_id>
       <concept_desc>Theory of computation~Denotational semantics</concept_desc>
       <concept_significance>500</concept_significance>
       </concept>
   <concept>
       <concept_id>10003752.10010124.10010138.10010143</concept_id>
       <concept_desc>Theory of computation~Program analysis</concept_desc>
       <concept_significance>500</concept_significance>
       </concept>
   <concept>
       <concept_id>10011007.10011006.10011050.10011017</concept_id>
       <concept_desc>Software and its engineering~Domain specific languages</concept_desc>
       <concept_significance>300</concept_significance>
       </concept>
 </ccs2012>
\end{CCSXML}

\ccsdesc[500]{Theory of computation~Denotational semantics}
\ccsdesc[500]{Theory of computation~Program analysis}
\ccsdesc[300]{Software and its engineering~Domain specific languages}
\ccsdesc[300]{Hardware~Quantum computation}

\maketitle

\section{Introduction}
\label{sec:introduction}

Quantum computing promises to revolutionize various sub-fields within computer 
science by solving complex problems exponentially faster than classical 
computing~\cite{Shor94}. This technology leverages the concept of quantum bits 
(or qubits), a unit of information whose dynamics is governed by the rules of 
quantum mechanics, thus enabling superposition and entanglement, keys to the 
aforementioned speedup. To harness this potential, several programming 
languages have been developed specifically for quantum computing, such as 
\Qsharp~\cite{SvoreGTAGHR18}, \Qiskit~\cite{JavadiAbhariTKWLGMNBCJG24}, and \Cirq~[\citeyear{Circ25}]. In turn, fields like program verification have
adapted well-known techniques like abstract interpretation~\cite{Perdrix08,YuPalsberg21}, type systems~\cite{AmyTS19,ColledanDalLago25}, and
Hoare logic~\cite{YingYW17,LiuZWYLLYZ19,ZhouYY19} to these languages.

We can identify at least two ways to design a quantum programming language. On the one hand, we could simply allow programs written in traditional programming languages to access not only \emph{classical} data but also \emph{quantum} data. The latter cannot be used the same way as the former, and is rather supported by specific \emph{initialization}, \emph{modification}, and \emph{reading} operations. As an example, the reading of a qubit value, often called a \emph{measurement}, can alter its value and has, in general, a probabilistic outcome, thus being substantially different from the corresponding classical operation. In programming languages of this kind, the quantum data are assumed to be stored in an external device accessible interactively through the aforementioned operations. This model, often indicated with the acronym QRAM~\cite{Knill22}, is adopted by a multitude of proposals in the literature (see e.g.~\cite{Selinger04,SandersZ00,BettelliCS03,SelingerValiron05}).

In theory, QRAM languages are the natural adaptation of classical programming languages to the quantum world. In practice, however, quantum hardware architectures can hardly be programmed \emph{interactively}: not only is the number of qubits available very small, but the \emph{time} within which computation must be completed should itself be minimized, given that the useful lifespan of a qubit is short. Consequently, quantum architectures typically take as input a whole quantum circuit, i.e. a precise description of \emph{all} the necessary qubits and the operations to be performed on them. This circuit must therefore be available in its entirety, preferably already subjected to an aggressive optimization process. In such a context, so-called \emph{circuit description languages} (CDLs for short) are to be preferred, and most mainstream languages in the field, including \Qiskit\ and \Cirq, are of this nature. CDLs are high-level languages used to describe and generate circuits, a quintessential example being the quantum circuit. Circuits are typically seen like any other ordinary data structure, with specific operations on them available through, e.g., methods or subroutines. Directly manipulating circuits from within a classical program offers the advantage of having more direct control over their shape and size. This is crucial given the state of quantum hardware architectures today, which provide a \emph{limited number} of \emph{error-prone} qubits, and for which \emph{not all operations} can be implemented at the same cost.

A peculiar circuit description language is \Quipper{}~\cite{GreenLRSV13}. In \Quipper, circuits are not just like any other data structure. Rather, they are seen as the by-product of certain effect-producing computations that modify some underlying quantum circuit when executed.
This, combined with the presence of higher-order functions and operations meant to turn any term (of an appropriate type) into a circuit, makes \Quipper{} a very powerful and flexible idiom. Its metatheory has been the subject of quite some investigations by the programming language community in the last years, with contributions ranging from advanced type systems~\cite{FuKRS20,FuKS22} to fancy features like dynamic lifting~\cite{FuKRS22,FuKRS23,ColledanDalLago23} to denotational semantics~\cite{RiosSelinger17,LindenhoviusMZ18,FuKS22,FuKRS23,FuKRS22}.
This last aspect of \Quipper, in particular, has been studied by providing semantic models for some of the languages of the so-called \PQ\ family, which includes various calculi, such as \PQM~\cite{RiosSelinger17}, \PQD~\cite{FuKS22}, \PQDyn~\cite{FuKRS23}, etc. In these cases, such semantics are built around concepts such as that of a presheaf and turn out to take the shape of a LNL model~\cite{Benton94}. A by-product of the use of presheaves is that the interpretation of the term and the underlying circuit are somehow \emph{merged} into a single mathematical object. As a result, it is difficult to read interesting features of the underlying circuit from the interpretation of a term or closure: what the circuit does and what the program does to produce the circuit are inextricably coupled.

This coupling, in turn, prevents those models from adequately accounting for variants of the \PQ\ family that are specifically designed to control the shape of the produced circuits, and more specifically to derive upper bounds on the size of the latter~\cite{ColledanDalLago24a,ColledanDalLago25}.
The correctness of these systems has been proved by purely operational means, and a denotational semantics for them is still missing. This ultimately makes such systems somewhat rigid and complicates their definition.

The aim of this paper is precisely to give a denotational semantics to languages in the \PQ\ family in which the interpretation of terms is \emph{kept separate} from that of the produced circuit. This is achieved by seeing circuit building as an indexed monad~\cite{Atkey09}.\footnote{Indexed monads are also known as and were originally called parameterized monads. We avoid this name since other ``parameters'', such as parameters of circuits or grades of a monad, appear in this work.} Remarkably, this new point of view allows us to give semantics to languages such as Colledan and Dal Lago's \PQR, and even allows us to justify some of their peculiarities. The introduced semantic framework suggests a natural way to unify so-called \emph{local} and \emph{global} circuit metrics, at the same time allowing the definition of metrics that go substantially beyond those proposed by Colledan and Dal Lago, in particular accounting for simple forms of circuit optimization.

The contributions of this paper can be thus summarized as follows:
\begin{itemize}
\item 
    First, we give a simple type system for \PQ. The introduced system is a slight variation on the theme of \PQM~\cite{RiosSelinger17}, whereas the input to the circuit produced by each effectful functional term needs to be exposed in its type and thus becomes an integral part of the arrow type, turning it into a \emph{closure type}. This change, as we will explain in the next section, seems inevitable since, without it, it would not be possible to know even the nature, i.e. the type, of the circuit produced by the term in question. Noticeably, closure types are present in \citeauthor{ColledanDalLago25}'s~[\citeyear{ColledanDalLago25}] most recent contribution.
    We call this calculus with closure type \PQC{}.
\item
    We then show that Atkey's indexed monad is an appropriate framework for giving denotational semantics to \PQC.
    By treating circuits as (pre)monoidal morphisms and considering the category-action indexed monad---a many-sorted generalization of the writer monad---we maintain a clear separation between the value a term evaluates to and the circuit produced alongside the evaluation.
    \PQC{} is interpreted in the parameterized Freyd category induced by this indexed monad, making explicit how ``parameters'', computations, and circuits interact as these three notions are interpreted in different categories.
    The semantics is proved both sound and computationally adequate.
\item
    We then move to richer type systems, where simple types are enriched by a form of effect typing. The model based on indexed monads remains adequate, and suggests an abstract notion of circuit algebra, through which it is possible to capture various circuit metrics (including all those considered by Dal Lago and Colledan), but also new forms of metrics induced by assertion-based circuit optimization schemes~\cite{HanerHT20}.
  \item We briefly discuss how dependent types might be incorporated into both the syntax and semantics of the variant of \PQC{}  (without effect typing) introduced earlier.
    On the semantic side, this is achieved by applying the families construction to the denotational model of \PQC{} in the spirit of fibered adjunction models~\cite{AhmanGP16}.
\end{itemize}

The rest of this paper is structured as follows. After the next section, which serves to frame the problem without going into the technical details, we move on to Section~\ref{sec:simple-type}, in which \PQC{}, i.e.\ a slight variation of the \PQM\ calculus, is introduced and endowed with an operational semantics. In Section~\ref{sec:categorica-semantics-st}, then, a monadic denotational semantics for \PQC{} is introduced and proved adequate. In Section~\ref{sec:effect-system}, an extension of these results to a calculus with effect typing, along the lines of \PQR, is presented.
Section~\ref{sec:dependent-type} discusses the possibility of extending our framework to incorporate dependent types.
Section~\ref{sec:related-work} discusses related work, and Section~\ref{sec:conclusion} concludes the paper.

\section{A Monadic Semantics for CDLs: Why and How?}
In this section, we will describe in a little more detail the problem of giving a monadic denotational semantics to CDLs, focusing on \emph{how} this can be done, but also on \emph{why} such an effort might be worth it.

Typically, a program written in a CDL is just a program in a mainstream programming language (e.g. \Python, \Haskell, or dialects thereof) whose purpose is that of facilitating the construction of (quantum) circuits which, once built, can then be sent to quantum hardware for their evaluation, or merely simulated through high-performance classical hardware. As already mentioned, the CDL we are mainly concerned with is \Quipper, which is embedded in \Haskell.

A program written in any CDL, and particularly in \Quipper, does not describe a \emph{single} circuit but a \emph{family} of circuits, depending on some parameters, e.g.~a number \( n \) representing the number of input qubits, or, in the case of Shor's algorithm, the size of the natural number to be factored. The ability to describe families of circuits enables \Quipper\ to succinctly and elegantly describe quantum algorithms, thus having a pragmatic impact and attracting the attention of the programming language community~\cite{RiosSelinger17,LindenhoviusMZ18,FuKS22,FuKRS23}, which proposed idealized languages capturing the essence of \Quipper. Such formal calculi come equipped with an operational semantics, type system, and often with a denotational semantics.

Most forms of denotational semantics for the \PQ\ family are based on presheaves, and enjoy a \emph{constructive property}, which states that the interpretation of a judgment in a certain form is indeed a parameterized family of circuits. For example, in the categorical semantics of \PQM{}~\cite{RiosSelinger17}, the judgment \( n : \textsf{nat}, x : \qubitt \vdash \termOne : \qubitt \) is interpreted as a function \(\sem{\termOne} \colon \mathbb{N} \to \catM(\qubitt, \qubitt) \), where \( \catM(\qubitt, \qubitt) \) is the set of circuits whose input and output interfaces both consist of a single qubit. In the above, the type \textsf{nat} can be replaced by any ``classical data type'' and \( \qubitt \) can be replaced by any ``wire type''.
However, some judgments cannot be interpreted as a family of circuits in the same way, including terms with free variables with a function type. As an example, a term of type 
\begin{equation}\label{eq:exampletype}
	n : \textsf{nat}, x : (\qubitt\multimap\qubitt) \vdash \termTwo :  \qubitt
\end{equation}
which evaluates to a qubit type value while generating a circuit, would be interpreted as \( \sem{\termTwo} \colon \mathbb N \to \mathrm{Nat}(\sem{\qubitt \to \qubitt}, \sem{\qubitt})\) where \( \alpha \in \mathrm{Nat}(\sem{\qubitt \to \qubitt}, \sem{\qubitt})\) is a natural transformation (i.e.~a morphism in the presheaf category).
Each component \( \alpha_\mtypeOne \) has a type \( \sem{\qubitt \to \qubitt}(\mtypeOne) \to \catM(\mtypeOne, \qubitt) \) because \( \sem{\qubitt} \) is defined via the Yoneda embedding \( \yo \colon \catM \to [\catM^{\op}, \Set] \) and \( \sem{\qubitt}(T) = \yo(\qubitt)(T) = \catM(\mtypeOne, \qubitt) \).
In other words, and more informally, \( \termTwo \) is interpreted as a family of polymorphic functions
\(\{f_i\}_{i\in\mathbb{N}}\), each of them having type
\[
\forall (T \colon \mathit{WireType}).\,\mathit{Clos}(\qubitt,\qubitt)[T]\rightarrow
\catM(T, \qubitt)
\]
The type \(\mathit{Clos}(\qubitt,\qubitt)[T]\) represents a \emph{closure type} disclosing the type \(T\) of the data that the closure captures. Given a ``closure'' as input, \(f_i\) returns a circuit whose input and output interfaces are \(T\) and \(\qubitt\), respectively.
In a sense, then, not even the \emph{input type} of the generated circuit can be read from \( \sem{\termTwo}(n)\) because we need to know the \emph{actual data} that will be passed to \( x \) to determine the interface of the circuit. This implies that modular reasoning about the produced circuits cannot be easily performed \emph{within} the model, in which it would be hard to interpret type systems specifically built for intensional analysis~\cite{ColledanDalLago24a}.

This paper introduces a denotational semantics for a CDL in which \emph{every} judgment is interpreted as a family of circuits whose types are uniquely defined. More specifically, any judgment \( \contextOne \vdash \termOne : \typeOne \) is interpreted as a function
\begin{equation}
  \sem{\termOne} \colon \sem{\flat \contextOne} \to \sem {\flat \typeOne} \times \catM(\sem{\sharp \contextOne}, \sem{\sharp \typeOne}),\label{eq:intro:interpretation}
\end{equation}
where \( \sharp \) and \( \flat \) are operations extracting the ``circuit part'' and ``parametric part'' of any type, respectively.
The mathematical object $\sem{\termOne}$, in other words, is a family of pairs \( \{(v_i, \circuitOne_i)\}_{i \in \sem{\flat \contextOne}} \) indexed by \( \sem{\flat \contextOne}\) where \( v_i \) is a ``value'' in \( \sem {\flat \typeOne} \) and \( \circuitOne_i \) is a circuit in  \( \catM(\sem{\sharp \contextOne}, \sem{\sharp \typeOne}) \); a family of circuits is a special case where \( v_i \) is a unit value.
The fact that \emph{every} judgment is interpreted as a family of circuits is important since it allows us to \emph{compositionally} reason about the family of circuits generated by a program. For example, upper bounds to the width of the circuits generated by any program can be computed by looking at the interpretation of the subprograms.

The just sketched construction evidently has the form of a monad~\cite{Moggi91}, structurally very similar to a writer monad. There is, however, one important difference: the type of circuit produced during the execution of a term is \emph{not} fixed a priori. As a consequence, the classical notion of monad, being somehow monomorphic, cannot be applied directly. Instead, indexed monads~\cite{Atkey09} can be fruitfully employed to model the circuit generated by the underlying program.
In fact, the interpretation~\eqref{eq:intro:interpretation} can be rewritten as
\begin{equation}
   \sem{\termOne} \colon \sem{\flat \contextOne} \to \monad(\sem{\sharp \contextOne}, \sem{\sharp \typeOne}, \sem {\flat \typeOne}),
   \label{eq:intro:monadicinterpretation}
 \end{equation}
 where \( \monad(\mObjOne, \mObjTwo, \setObjOne)  \) is an indexed monad defined as \( \setObjOne \times \catM(\mObjOne, \mObjTwo)\).
 Using such a monadic approach allows us to structure the interpretation of languages in the \PQ\ family in a new way, fundamentally different from that considered in the literature on the subject: every term is interpreted as a mathematical object in which the value produced and the underlying circuit are kept separate.
 Technically, we interpret terms in a parameterized Freyd category obtained by applying an indexed version of the well-known Kleisli construction to the indexed monad above.

If we look at Equation (\ref{eq:intro:monadicinterpretation}) in more detail, we soon realize that a monadic interpretation like the one we are discussing requires knowing $\sem{\sharp \contextOne}$ whenever the effectful term $M$ is a value $\lambda x.M$ (where $x$ is any of the variables in $\Gamma$). In other words, the ``circuit portion'' of $\Gamma$ must become part of the (functional) type of $\lambda x.M$. This last observation justifies the small discrepancies between \PQC{} (the language we present in Section~\ref{sec:simple-type} below) and \PQM\ and provides a denotational reading to some of the type-theoretical tricks in~\cite{ColledanDalLago24a}. As an example, the
typing judgment \eqref{eq:exampletype} becomes
\(n : \textsf{nat}, x : (\arrowst{\qubitt}{\qubitt}{\mtypeOne} ) \vdash \termTwo :  \qubitt\),
where $T$ is the type of circuit variables the argument function captures.

The advantage of moving to a monadic view like the one just described is that we have now \emph{exposed} the space \( \catM\) of circuits in the interpretation. As we will see in Section~\ref{sec:effect-system}, in fact, this naturally suggests a way to control the size of the generated circuits through a form of effect typing: any well-behaved functor from \( \catM \) to a category \( \catE \) which captures the relevant characteristics of the underlying circuit, e.g. its size, induces a form of effect typing which is sound by construction.
This is what denotational semantics is good for: not only is the programming language in question interpreted compositionally, but the interpretation naturally suggests what in the language might be subject to modification or adaptation while, at the same time, indicating what the underlying axiomatics should be, and factoring out most proofs.

\section{Simple Types}
\label{sec:simple-type}
In this section, we introduce the syntax and operational semantics of \PQC{}, our dialect of \PQM{}~\cite{RiosSelinger17}. We will point out the differences with the language
\PQM{}, still trying to keep the presentation as self-contained as possible.

\subsection{Type and Syntax}
\label{sec:type-and-syntax}
We first introduce the simple type system of \PQC. The grammar for types is
as follows:
\begin{align*}
  \text{Types} \quad	\typeOne,\typeTwo
  &\Coloneqq \ptypeOne \mid \mtypeOne \mid \tensor{\typeOne}{\typeTwo} \mid \arrowst \typeOne \typeTwo \mtypeOne \\
  \text{Parameter Types}	\quad \ptypeOne,\ptypeTwo
  &\Coloneqq \unitt \mid \natt \mid \bang{\typeOne} \mid \tensor{\ptypeOne}{\ptypeTwo} \mid \circt{}{\mtypeOne}{\mtypeTwo} \\
  \text{Bundle Types} \quad  \mtypeOne,\mtypeTwo
  &\Coloneqq \munitt \mid \wtypeOne \mid \tensor{\mtypeOne}{\mtypeTwo}
\end{align*}
There are three kinds of types. In addition to generic types, which are 
intended for (not necessarily duplicable) terms, there are 
\emph{parameter} types whose inhabitants are freely duplicable and which 
include circuits and values of type $!A$.
There are the unit and natural number type as base types, but the specific choice of base types is not that important.
We also need \emph{bundle}
types, which are used to give a type to circuit wires. Observe that the
tensor product operator $\otimes$ is available in the three kinds of types, 
while the construction of functions is not available in parameter and bundle types.

Typing is almost the same as in \PQM{} as defined in~\cite{RiosSelinger17}. 
There is a significant difference in the definition of the arrow type, however.
We annotate the arrow type with a bundle type \( \mtypeOne \), which describes 
the types of the free variables captured by the function, effectively turning
it into a form of \emph{closure type}.
Note that we only care about the bundle type variables that are captured by the 
function and drop the information of variables with parameter types.
A similar kind of annotation was used by~\citet{ColledanDalLago24a}, although in their
work the label was rather a natural number abstracting the type \( \mtypeOne \). We have already
argued about the need for this change to the type system, and will come back to 
that in the next section. We might write \(\typeOne \lineararrow \typeTwo  \) 
for \( \arrowst \typeOne \typeTwo \munitt \) for the sake of
simplifying the notation.

Another difference compared to \PQM{} (or \PQR{}, as defined 
in~\cite{ColledanDalLago24a}) is that we do not have a type for lists.
We removed lists as they cannot be directly interpreted in our semantic model.
However, as we shall see in Section~\ref{sec:dependent-type}, we can extend our language with a vector type (i.e.~a list with specified length).

The syntax of \PQC{} terms is defined as follows. We use a fine-grained
call-by-value style syntax~\cite{LevyPT03} as done by~\citet{ColledanDalLago24a}.\footnote{It is easy to extend the language with effect-free constants such as arithmetic operations or or meta-operations on circuits as in~\cite{RiosSelinger17}, but we we omit these, since they are immaterial to our main semantic results.}
\begin{align*}
  \text{Terms} \quad \termOne,\termTwo
  &\Coloneqq \app{\valOne}{\valTwo} \mid 
    \dest{\varOne}{\varTwo}{\valOne}{\termOne} \mid
    \ifz \valOne \termOne \termTwo
  \\
  &\,\, \mid \force{\valOne} \mid \boxt{\mtypeOne}{\valOne} 
   \mid \apply{\valOne}{\valTwo} \\
  &\,\,\mid \return{\valOne} \mid \letin{\varOne}{\termOne}{\termTwo} \displaybreak[1] \\
  \text{Values} \quad \valOne,\valTwo
  &\Coloneqq \unitv \mid n \mid \varOne \mid \labOne \mid
  \abs{\varOne}{\typeOne}{\termOne} \mid \lift{\termOne} \mid \boxedCirc{\struct\labOne}{\circuitOne}{\struct{\labTwo}} \mid
  \tuple{\valOne}{\valTwo}\\
  \text{Wire Bundles} \quad \struct\labOne,\struct\labTwo &\Coloneqq \unitv \mid \labOne \mid \tuple{\struct{\labOne}}{\struct{\labTwo}}.
\end{align*}
The informal behavior for terms is in line with that of \PQM, which adds to
the usual constructs of a call-by-value linear lambda-calculus 
(abstractions, applications, let bindings, pairs, etc.) specific 
operators for circuit manipulation:
\begin{itemize}
	\item $\labOne$ is a \emph{label}, that is, a pointer to a wire in the underlying circuit.
	\item $\boxedCirc{\struct\labOne}{\circuitOne}{\struct\labTwo}$ is a \emph{boxed circuit} and represents the circuit $\circuitOne$ as a value in the language. Wire bundles $\struct\labOne$ and $\struct\labTwo$ represent the input and output interfaces of $\circuitOne$, respectively.
	\item $\apply{\valOne}{\valTwo}$ appends a boxed circuit $\valOne$ to the wires identified by $\valTwo$ among the outputs of the underlying circuit.
	\item $\boxt{\mtypeOne}{\valOne}$ evaluates a circuit building function $\valOne$ in isolation and returns the result as a boxed circuit.
\end{itemize}

At this point, we should make it clear what we mean by a ``circuit''.
We do not fix what a circuit is (except when considering concrete examples), as is often the case for \PQ{} calculi.
Usually, when giving a semantics to \PQ{}, circuits are seen as morphisms in a \emph{monoidal category} \( \catM \) and the semantics is parametric to the choice of \( \catM \).
In this work, we assume that the category \( \catM \) of circuits is a \emph{premonoidal} category.
Roughly, a premonoidal category is a monoidal category without the interchange law
\[
  \begin{quantikz}[column sep=1em,row sep=0.2em]
    & \gate[3]{\circuitOne} & & & \\
    &  &\setwiretype{n} & & \\
    &  & & &\\
    \setwiretype{n} & & & \gate[3]{\circuitTwo} & \setwiretype{q} \\
    &  & & &\setwiretype{n} \\
    \setwiretype{n} & & & & \setwiretype{q}
  \end{quantikz}
  =
    \begin{quantikz}[column sep=1em,row sep=0.2em]
    & & & \gate[3]{\circuitOne}  & \\
    & & & &\setwiretype{n}  \\
    & & & & \\
    \setwiretype{n}& \gate[3]{\circuitTwo} & \setwiretype{q} & & \\
    & & \setwiretype{n} & & \\
    \setwiretype{n} & & \setwiretype{q} & &
  \end{quantikz}
\]
which is too strong in a cost sensitive scenario.
(For example, we may say that the two circuits above are different because the width of the circuit on the left-hand-side is \( 4 \) whereas the width of the circuit on the right-hand-side is \( 5 \).)

\begin{definition}[Premonoidal Category~\cite{PowerRobinson97}]
  A \emph{binoidal category} is a category \( \catA \) equipped with, for each \( a \in \obj(\catA) \), endofunctors \( a \rtensor - \) and \( - \ltensor a \) from \( \catA \) to \( \catA \) such that \( a \ltensor  b = a \rtensor b  \defeq a \otimes b\) for every pair of objects \( (a, b) \) of \( \catA \).
  A morphism \( f \colon a \to b \) in \( \catA \) is \emph{central} if it interchanges with any morphism \( g \colon a' \to b'\): \( (f \ltensor a'); (b \rtensor g) = (a \rtensor g); (f \ltensor b') \) and \( (a' \rtensor f); (g \ltensor b) = (g \ltensor a); (b' \rtensor f) \).\footnotemark
  \footnotetext{We use diagrammatic order for compositions in this paper.}
  In case two composites agree, we write \( f \otimes g\) and \( g \otimes f \), respectively.
  A \emph{premonoidal category} is a binoidal category equipped with an object \( I \), the ``monoidal unit'',  central natural (separately at each given component) isomorphisms for associativity and right and left units satisfying the standard pentagon and triangle equations.
  A premonoidal category is \emph{strict} if all the coherence morphisms are identities, and is \emph{symmetric} if it has a central isomorphism \( a \otimes b \cong b \otimes a\) that is natural (at each component) and satisfies the usual axioms of a symmetry.
\end{definition}

Typically, the category of circuits \( \catM \) is defined by giving a syntactic description of circuits.
In other words, it is the free category generated by some collection of base types and gates.
While we do not fix the category  \( \catM \), we assume that \( \catM \) has some distinguished objects \( \qubitts \) and \( \bitts \) that are used to interpret qubits and classical bits, respectively.
We also assume that \( \catM \) is small and a \emph{strict} symmetric premonoidal category.\footnotemark
\footnotetext{Every premonoidal category is equivalent to a strict one since the coherence theorem holds~\cite{PowerRobinson97}.}

A typing judgment for terms is of the form \( \cjudgst \contextOne \termOne \typeOne \),
and intuitively means that \( \termOne \) is well-typed under the typing context \( \contextOne \); similarly, we have a typing judgment for values of the form \( \vjudgst \contextOne \valOne \typeOne \).
A \emph{typing context} \(\contextOne\) is a finite sequence of bindings each 
of which is either of the form \( x : \typeOne \) or \( \labOne : \wtypeOne \).
The reason for using finite sequences rather than finite sets is to simplify the definition of the semantics; if two contexts \( \contextOne_1 \) and \( \contextOne_2\) are equal up to permutation, then we write \( \contextOne_1 \permequiv \contextOne_2 \).
We use metavariables \( \lvOne, \lvTwo, \ldots \) to denote variables or labels.
A typing context is a \emph{label context} if it is of the form \( \labOne_1 : \wtypeOne_1, \ldots, \labOne_n : \wtypeOne_n  \).
Label contexts are denoted by \( \lcOne, \lcTwo \).
A \emph{parameter context}, written \( \pcontextOne \), is a typing context that only contains variables with  parameter types.

Typing rules are in Figure~\ref{fig:st-typing-rules}, and most of them are 
self-explanatory.
In the typing rules, when we write \( \pcontextOne, \contextOne \), we stipulate that \( \contextOne \) does not contain any variable with a parameter type.
Rules \textit{circ}, \textit{box} and \textit{apply} are specific to a CDL, and thus warrant discussion.
A boxed circuit \( \boxedCirc{\struct\labOne}{\circuitOne}{\struct\labTwo} \) is well-typed if the labels \( \struct \labOne\) and \( \struct \labTwo \) acting as language-level interfaces to \( \circuitOne \) have types that match with the (co)domain of \( \circuitOne \).
The notation \( \circuitOne \colon \lcOne \to \lcTwo \) means that \( \circuitOne \) is a morphism from \( \sem{\lcOne} \) to \( \sem{\lcTwo} \) in \( \catM \);
\( \sem{\lcOne} \) is the obvious
interpretation, which we formally define in Section~\ref{sec:categorica-semantics-st}.
The \emph{box} rule says that if \( \valOne \) is a circuit building function that, once applied to an input of type \( \mtypeOne \), builds a circuit
of output type \( \mtypeTwo \), then \( \valOne \) can be turned into a circuit whose interface has types \( \mtypeOne \) and \( \mtypeTwo \).
Note that the \textit{box} rule requires the typing context to be \( \pcontextOne \) so as to ensure that the function is not capturing any variable with a bundle type.
The rule \textit{apply}, on the other hand, can be read as a special version of the typing rule for function application.

Once again, the main novelty with respect to \PQM{} has to do with the arrow 
type. The operation \( \sharp \) used in the \textit{abs} rule extracts a
bundle type from any type \( A \).\footnotemark
\footnotetext{A similarly looking operation was called \emph{wire count} 
in~\cite{ColledanDalLago24a}.
We do not use this name since we extract the whole type and not just a natural 
number counting ``how many wires are used''.}
Formally, the operation \( \sharp \) is inductively defined as follows:
\begin{gather*}
  \sharp(\ptypeOne) \defeq \munitt  \qquad \sharp(\munitt) \defeq \munitt \qquad \sharp(\wtypeOne) \defeq \wtypeOne \qquad \sharp(\arrowst \typeOne \typeTwo \mtypeOne) \defeq \mtypeOne \\
  \sharp(\tensor \ptypeOne \ptypeTwo) \defeq \tensor {\sharp(\ptypeOne)} {\sharp(\ptypeTwo)} \qquad \sharp(\tensor \mtypeOne \mtypeTwo) \defeq \tensor {\sharp(\mtypeOne)} {\sharp(\mtypeTwo)}.
\end{gather*}
We let the operator \( \sharp \) act on
typing contexts by \( \sharp(\lvOne_1 : \typeOne_1, \ldots \lvOne_n : \typeOne 
_n ) \defeq \sharp(\typeOne_1) \otimes \cdots \otimes \sharp(\typeOne_n) \).
It should now be clear that the rule \textit{abs} does nothing more than 
inserting the \emph{bundle} type of the variables free in the abstraction we 
are typing into its arrow type.
\begin{figure}[t]
	\centering
	\fbox{
    \begin{minipage}{0.97\linewidth}
    \begin{mathpar}
			\inference[\textit{unit}]
			{ }
			{\vjudgst{\pcontextOne}{\unitv}{\unitt}}
			\and
      \inference[\textit{nat}]
			{ }
			{\vjudgst{\pcontextOne}{n}{\natt}}
      \and
			\inference[\textit{lab}]
			{ }
			{\vjudgst{\pcontextOne, \labOne:\wtypeOne}{\labOne}{\wtypeOne}}
			\and
			\inference[\textit{var}]
			{ }
			{\vjudgst{\pcontextOne,\varOne:\typeOne}{\varOne}{\typeOne}}
			\and
			\inference[\textit{abs}]
			{\cjudgst{\contextOne,\varOne:\typeOne}{\termOne}{\typeTwo}}
			{\vjudgst{\contextOne}{\abs{\varOne}{\typeOne}{\termOne}}{\arrowst{\typeOne}{\typeTwo}{\rcount{\contextOne}}}}
			\and
			\inference[\textit{app}]
			{\vjudgst{\pcontextOne,\contextOne_1}{\valOne}{\arrowst{\typeOne}{\typeTwo}{\mtypeOne}}
				&
				\vjudgst{\pcontextOne,\contextOne_2}{\valTwo}{\typeOne}}
			{\cjudgst{\pcontextOne,\contextOne_1,\contextOne_2}{\app{\valOne}{\valTwo}}{\typeTwo}}
			\and
			\inference[\textit{lift}]
			{\cjudgst{\pcontextOne}{\termOne}{\typeOne}}
			{\vjudgst{\pcontextOne}{\lift{\termOne}}{\bang{\typeOne}}}
			\and
			\inference[\textit{force}]
			{\vjudgst{\pcontextOne}{\valOne}{\bang{\typeOne}}}
			{\cjudgst{\pcontextOne}{\force{\valOne}}{\typeOne}}
			\and
			\inference[\textit{circ}]
			{\circjudgment{\circuitOne}{\lcOne}{\lcTwo}
        &
        \lcOne \permequiv \lcOne'
        &
        \lcTwo \permequiv \lcTwo'
				\\\
				\vjudgst{\lcOne'}{\struct\labOne}{\mtypeOne}
				&
				\vjudgst{\lcTwo'}{\struct\labTwo}{\mtypeTwo}
				} {\vjudgst{\pcontextOne}{\boxedCirc{\struct\labOne}{\circuitOne}{\struct\labTwo}}{\circt{}{\mtypeOne}{\mtypeTwo}}}
			\and
			\inference[\textit{box}]
			{\vjudgst{\pcontextOne}{\valOne}{{\arrowst{\mtypeOne}{\mtypeTwo}{I}}}}
			{\cjudgst{\pcontextOne}{\boxt{\mtypeOne}{\valOne}}{\circt{}{\mtypeOne}{\mtypeTwo}}}
			\and
			\inference[\textit{apply}]
			{\vjudgst{\pcontextOne,\contextOne_1}{\valOne}{\circt{}{\mtypeOne}{\mtypeTwo}}
				&
				\vjudgst{\pcontextOne,\contextOne_2}{\valTwo}{\mtypeOne}}
			{\cjudgst{\pcontextOne,\contextOne_1,\contextOne_2}{\apply{\valOne}{\valTwo}}{\mtypeTwo}}
			\and
			\inference[\textit{dest}]
			{\vjudgst{\pcontextOne,\contextOne_1}{\valOne}{\tensor{\typeOne}{\typeTwo}}
				\\\
				\cjudgst{\pcontextOne,\contextOne_2,\varOne:\typeOne,\varTwo:\typeTwo}{\termOne}{\typeThree}}
			{\cjudgst{\pcontextOne,\contextOne_2,\contextOne_1}{\dest{\varOne}{\varTwo}{\valOne}{\termOne}}{\typeThree}}
      \and
      \inference[\textit{ifz}]
			{\vjudgst{\pcontextOne}{\valOne}{\natt}
				\\\
				\cjudgst{\pcontextOne,\contextOne}{\termOne}{\typeOne}
        \and
        \cjudgst{\pcontextOne,\contextOne}{\termTwo}{\typeOne}
      }
			{\cjudgst{\pcontextOne,\contextOne}{\ifz{\valOne}{\termOne}{\termTwo}}{\typeOne}}

			\and
			\inference[\textit{pair}]
			{\vjudgst{\pcontextOne,\contextOne_1}{\valOne}{\typeOne}
				&
				\vjudgst{\pcontextOne,\contextOne_2}{\valTwo}{\typeTwo}}
			{\vjudgst{\pcontextOne,\contextOne_1,\contextOne_2}{\tuple{\valOne}{\valTwo}}{\tensor{\typeOne}{\typeTwo}}}
			\and
			\inference[\textit{return}]
			{\vjudgst{\contextOne}{\valOne}{\typeOne}}
			{\cjudgst{\contextOne}{\return{\valOne}}{\typeOne}}
			\and
			\inference[\textit{let}]
			{\cjudgst{\pcontextOne,\contextOne_1}{\termOne}{\typeOne}
				&
				\cjudgst{\pcontextOne,\contextOne_2,\varOne:\typeOne}{\termTwo}{\typeTwo}}
			{\cjudgst{\pcontextOne,\contextOne_2,\contextOne_1}{\letin{\varOne}{\termOne}{\termTwo}}{\typeTwo}} \\
			\inference[\textit{ex}]
			{\cjudgst{\contextOne_1, \lvOne : \typeOne, \lvTwo : \typeTwo, \contextOne_2}{\termOne}{\typeThree}}
			{\cjudgst{\contextOne_1, \lvTwo : \typeTwo, \lvOne : \typeOne, \contextOne_2}{\termOne}{\typeThree}}
    \end{mathpar}
    \end{minipage}}
	\caption{Typing Rules for \PQC{}.}
	\label{fig:st-typing-rules}
\end{figure}

The \textit{box} rule is slightly different from the one commonly seen in the 
\PQ{} family, in which \( \boxt{} \) is a coercion from \( \bang(\mtypeOne
\lineararrow \mtypeTwo) \) to \( \circt{} \mtypeOne \mtypeTwo \).
In our rule, instead, we drop the \( \bang \) operator.
Intuitively, \( \bang \) is needed to ensure that the function with type \( 
\mtypeOne \lineararrow \mtypeTwo \) does not capture any variable with bundle 
type. But this information is already explicit in our type system by the 
subscript \( I \), and there is no reason to additionally require the of-course 
modality. We will later also give a semantic explanation against this design 
choice (see Proposition~\ref{prop:box-as-iso} and the remark after it).

\subsection{Operational Semantics}
\label{sec:operational-semantics}
The operational semantics is defined as a big-step evaluation relation on \emph{configurations}.
A configuration is a pair \( \config \circuitOne \termOne \), where \( 
\circuitOne \) is a circuit being generated and \( \termOne \) is the term 
being evaluated.
The definition of the big-step evaluation relation \( \eval \) is in 
Figure~\ref{fig:operational-semantics}. (The rule for evaluating the else branch of \( \ifzoperator \) is omitted.)

The \textit{box} rule relies on the \emph{freshlabels} function, which is used to produce a fresh label context $\lcOne$ and a wire bundle $\struct\labOne$ such that $\vjudgst{\lcOne}{\struct\labOne}{\mtypeOne}$.
On the other hand, the \emph{apply} rule relies on the \emph{append} function, which attaches the circuit $\circuitTwo$ to the wires identified by $\struct\labThree$ among the outputs of $\circuitOne$.
This operation often requires a renaming of the labels in $\circuitTwo$, so that its input interface $\struct\labOne$ matches $\struct\labThree$. More formally, we say that two boxed circuits $\boxedCirc{\struct\labOne}{\circuitTwo}{\struct\labTwo}$ and $\boxedCirc{\struct{\labOne'}}{\circuitTwo'}{\struct{\labTwo'}}$ are \emph{equivalent}, and we write $\boxedCirc{\struct\labOne}{\circuitTwo}{\struct\labTwo} \cong \boxedCirc{\struct{\labOne'}}{\circuitTwo'}{\struct{\labTwo'}}$, if they only differ by a renaming of labels. What $\operatorname{append}$ does, then, is find $\boxedCirc{\struct{\labThree}}{\circuitTwo'}{\struct{\labFour}} \cong \boxedCirc{\struct\labOne}{\circuitTwo}{\struct\labTwo}$ and return $\struct{\labFour}$, along with a circuit $\circuitThree$ defined as follows:
\[
\begin{quantikz}[row sep = 3pt, column sep = 20pt]
  &\gate[4]{\circuitOne}&&&\\
  &&&&\\
  &&\midstick[2]{\ $\struct{\labThree}$\ }&\gate[2]{\circuitTwo'}&\rstick[2]{\ $\struct{\labFour}$}\\
  &&&&
\end{quantikz}
\]
Overall, this semantics is the same as the one given in~\cite{ColledanDalLago24a}, except for the \textit{box} rule, which is modified to align with the modifications made in the 
typing rules.

\begin{figure}[t]
	\centering
	\fbox{
    \begin{minipage}{0.97\linewidth}
    \begin{mathpar}
			\inference[\textit{app}]
			{\config{\circuitOne}{\termOne\sub{\valOne}{\varOne}}
				\eval \config{\circuitTwo}{\valTwo}}
			{\config{\circuitOne}{\app{(\abs{\varOne}{\typeOne}{\termOne})}{\valOne}}
				\eval \config{\circuitTwo}{\valTwo}}
			\and
			\inference[\textit{dest}]
			{\config{\circuitOne}{\termOne\sub{\valOne}{\varOne}\sub{\valTwo}{\varTwo}}
				\eval \config{\circuitTwo}{\valThree}}
			{\config{\circuitOne}{\dest{\varOne}{\varTwo}{\tuple{\valOne}{\valTwo}}{\termOne}}
				\eval \config{\circuitTwo}{\valThree}}
			\and
      \inference[\textit{if-zero}]
			{\config{\circuitOne}{\termOne}
				\eval \config{\circuitTwo}{\valOne}}
			{\config{\circuitOne}{\ifz{0}{\termOne}{\termTwo}}
				\eval \config{\circuitTwo}{\valOne}}
      \and
			\inference[\textit{force}]
			{\config{\circuitOne}{\termOne} \eval \config{\circuitTwo}{\valOne}}
			{\config{\circuitOne}{\force{(\lift{\termOne})}} \eval \config{\circuitTwo}{\valOne}}
			\and
			\inference[\textit{apply}]
			{\config{\circuitThree}{\struct{\labFour}} = \append{\circuitOne}{\struct{\labThree}}{\boxedCirc{\struct\labOne}{\circuitTwo}{\struct{\labTwo}}}}
			{\config{\circuitOne}{\apply{\boxedCirc{\struct\labOne}{\circuitTwo}{\struct{\labTwo}}}{\struct{\labThree}}} \eval \config{\circuitThree}{\struct{\labFour}}}
			\and\!\!
			\inference[\textit{box}]
			{(\lcOne,\struct{\labOne})=\freshlabels{\mtypeOne}
				&
				\config{\cidentity{\lcOne}}{\app{\valOne}{\struct{\labOne}}} \eval \config{\circuitTwo}{\struct{\labTwo}}}
			{\config{\circuitOne}{\boxt{\mtypeOne}{\valOne}} \eval \config{\circuitOne}{\boxedCirc{\struct{\labOne}}{\circuitTwo}{\struct{\labTwo}}}}
			\and
			\inference[\textit{return}]
			{ }
			{\config{\circuitOne}{\return{\valOne}} \eval \config{\circuitOne}{\valOne}}
			\and
			\inference[\textit{let}]
			{\config{\circuitOne}{\termOne} \eval \config{\circuitThree}{\valOne}
				&
				\config{\circuitThree}{\termTwo\sub{\valOne}{\varOne}} \eval \config{\circuitTwo}{\valTwo}}
			{\config{\circuitOne}{\letin{\varOne}{\termOne}{\termTwo}} \eval \config{\circuitTwo}{\valTwo}}
    \end{mathpar}
    \end{minipage}}
	\caption{\PQC{} big-step operational semantics.}
	\label{fig:operational-semantics}
\end{figure}

\subsection{Type Preservation}

We write \( \configjudgment \lcOne \circuitOne \termOne \typeOne {\lcOne'} \) and say that the configuration \( \config \circuitOne \termOne \) is \emph{well-typed under \( \lcOne \) and \( \lcOne' \)} if \( \circuitOne : \lcOne \to \lcTwo, \lcOne' \) and \( \cjudgst \lcTwo \termOne \typeOne \) for some label context \( \lcTwo \) disjoint from \( \lcOne' \).
Similarly, we write \( \configjudgment {\lcOne} {\circuitOne} {\valOne} \typeOne {\lcOne'} \) if \( \configjudgment {\lcOne} \circuitOne {{\return \valOne}} \typeOne {\lcOne'} \).
We have the following type preservation theorem.

\begin{theorem}[Type Preservation]
  \label{thm:type-preservation}
  If \( \configjudgment \lcOne \circuitOne \termOne \typeOne {\lcOne'} \) and \( \config \circuitOne \termOne \eval \config{\circuitTwo}{\valOne} \), then \( \configjudgment {\lcOne} {\circuitTwo} {\valOne} \typeOne {\lcOne'} \).
\end{theorem}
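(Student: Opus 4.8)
The plan is to proceed by induction on the derivation of the big-step evaluation $\config{\circuitOne}{\termOne} \eval \config{\circuitTwo}{\valOne}$, analysing the last rule applied. Since the operational semantics coincides with that of \cite{ColledanDalLago24a} except for the \textit{box} rule, the overall shape of the argument follows the corresponding subject-reduction proof there, with the \textit{box} case adapted to the $\bang$-free formulation. Two auxiliary results are needed first. The first is a collection of inversion lemmas: a derivation of $\cjudgst{\contextOne}{\termOne}{\typeOne}$, read modulo the exchange rule \textit{ex}, must end in the introduction rule associated with the head constructor of $\termOne$, so that the types of the immediate subterms can be recovered. The second, and more delicate, is a substitution lemma: if $\cjudgst{\pcontextOne, \contextOne_2, \varOne : \typeOne}{\termOne}{\typeTwo}$ and $\vjudgst{\pcontextOne, \contextOne_1}{\valOne}{\typeOne}$, then $\cjudgst{\pcontextOne, \contextOne_1, \contextOne_2}{\termOne\sub{\valOne}{\varOne}}{\typeTwo}$, together with its value-level counterpart. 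Linearity makes this lemma non-trivial: the two hypotheses may share only the parameter context $\pcontextOne$, and the lemma is itself proved by induction on the typing of $\termOne$, threading the splitting of the linear part of the context through every rule.

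With these in place, the cases \textit{return}, \textit{if-zero}, and \textit{force} are immediate: the circuit is either unchanged or produced by a subderivation to which the induction hypothesis applies verbatim, and a single inversion step supplies the typing of the resulting value. The cases \textit{app}, \textit{dest}, and \textit{let} combine inversion, the substitution lemma, and the induction hypothesis. For \textit{let}, I would invert the typing of $\letin{\varOne}{\termOne}{\termTwo}$ to obtain $\cjudgst{\pcontextOne, \contextOne_1}{\termOne}{\typeOne}$ and $\cjudgst{\pcontextOne, \contextOne_2, \varOne : \typeOne}{\termTwo}{\typeTwo}$; applying the induction hypothesis to the first evaluation premise yields a well-typed intermediate configuration $\config{\circuitThree}{\valOne}$, the substitution lemma plugs $\valOne$ into $\termTwo$, and a second appeal to the induction hypothesis on the remaining premise closes the case. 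The only real care needed is the bookkeeping of label contexts as the circuit grows from $\circuitOne$ through $\circuitThree$ to $\circuitTwo$: when $\termOne$ is evaluated first, the wires reserved for $\termTwo$ (those typed by $\contextOne_2$) must be folded into the untouched frame, so that the frame of the inner evaluation is correspondingly enlarged and then released for the continuation.

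For the \textit{box} case I would use the defining property of $\freshlabels{\mtypeOne}$, namely that it returns a pair $(\lcOne, \struct{\labOne})$ with $\vjudgst{\lcOne}{\struct{\labOne}}{\mtypeOne}$. Inversion on the \textit{box} rule gives $\valOne$ of type $\arrowst{\mtypeOne}{\mtypeTwo}{\munitt}$ under a parameter context, so $\app{\valOne}{\struct{\labOne}}$ is typeable at $\mtypeTwo$ under $\lcOne$, making the isolated configuration $\config{\cidentity{\lcOne}}{\app{\valOne}{\struct{\labOne}}}$ well-typed from the identity circuit; the induction hypothesis then produces a well-typed output bundle $\struct{\labTwo}$ of type $\mtypeTwo$, and the \textit{circ} rule assembles $\boxedCirc{\struct{\labOne}}{\circuitTwo}{\struct{\labTwo}}$ at type $\circt{}{\mtypeOne}{\mtypeTwo}$, while the ambient circuit $\circuitOne$ is untouched and its typing carries over unchanged. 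I expect the main obstacle to be the \textit{apply} case, which hinges on a typing property of $\append$ that must be established separately: appending a boxed circuit of type $\circt{}{\mtypeOne}{\mtypeTwo}$ along the wires $\struct{\labThree}$ of type $\mtypeOne$ to a circuit $\circuitOne \colon \lcOne \to \lcTwo, \lcOne'$ returns a pair $\config{\circuitThree}{\struct{\labFour}}$ in which $\circuitThree \colon \lcOne \to \lcTwo', \lcOne'$ and $\vjudgst{\lcTwo'}{\struct{\labFour}}{\mtypeTwo}$, with the frame $\lcOne'$ left untouched and $\lcTwo'$ obtained from $\lcTwo$ by replacing the consumed wires of $\struct{\labThree}$ with the fresh wires of $\struct{\labFour}$. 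Proving this descends to the premonoidal structure of $\catM$ and to the label renaming built into $\append$ (which finds an equivalent boxed circuit whose input interface matches $\struct{\labThree}$), and it is exactly here that the disjointness condition in the configuration typing judgment---that $\lcTwo$ be disjoint from $\lcOne'$---is indispensable, guaranteeing that the append does not clobber the wires reserved for the continuation.
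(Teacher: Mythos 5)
Your proposal is correct and follows exactly the route the paper takes: the paper's proof is stated only as ``by induction on the derivation of \( \config \circuitOne \termOne \eval \config{\circuitTwo}{\valOne} \)'', and your elaboration (inversion and substitution lemmas, the label-context bookkeeping for \textit{let}, the \(\freshlabels{}\) property for \textit{box}, and the typing property of \(\appendfunction\) for \textit{apply}) supplies the standard supporting machinery that such an induction implicitly relies on.
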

\begin{proof}
  By induction of the derivation of \( \config \circuitOne \termOne \eval \config{\circuitTwo}{\valOne} \).
\end{proof}

\section{A Monadic Semantics for \PQC{}}
\label{sec:categorica-semantics-st}
Here we define a new monadic denotational semantics for \PQC{},
introduced in Section~\ref{sec:simple-type}. We first explain the 
categorical structure we use such as the indexed monad for
circuits. We then define the interpretation, and prove its soundness and 
adequacy.

\subsection{The Circuit Monad}
We now review the notion of indexed monads~\cite{Atkey09} (aka parameterized monads), which plays a key role in our model.
Intuitively, an indexed monad is a ``multi-sorted generalization'' of a monad.
Its formal definition is given as follows.
\begin{definition}[Indexed Monad~\cite{Atkey09}]
  Let \( \catC \) be any cartesian category and \( \catS \) be a category.
  A (strong) \( \catS \)-indexed monad on \( \catC \) is a quadruple \( (\monad, \unit, \mult, \strength) \) where
  \begin{itemize}
	\item \( \monad \colon \catS^\op \times \catS \times \catC \to \catC \) is a functor
	\item the \emph{unit} \( \unit \) is a family of morphisms \( \unit_{S, X} \colon X \to \monad (S, S, X) \) natural in \( X \) and dinatural in \( S \)
	\item the \emph{multiplication} \( \mult \) is a family of morphisms \( \mult_{S_1, S_2, S_3, X} \colon \monad (S_1, S_2, \mathcal \monad(S_2, S_3, X)) \to \monad(S_1, S_3, X) \) natural in \( S_1\), \( S_3 \) and \( X \) and dinatural in \( S_2 \).
	\item the \emph{strength} \( \strength \) is a family of morphisms \( \strength_{X, S_1, S_2, Y} \colon  X \times \monad (S_1, S_2, Y) \to \monad (S_1, S_2, X \times Y)\) natural in \( X \), \( S_1 \), \( S_2 \) and \( Y \).
  \end{itemize}
  The unit and multiplication must obey the evident monad laws and the axiom for strength (cf.\ Definition~\ref{def:cat-graded-monad}).
\end{definition}

The indexed monad we are interested in is the circuit monad \( \circmonad* \colon \catM^\op \times \catM \times \Set \to \Set \).
The circuit monad is defined as follows\footnotemark
\footnotetext{This is just an indexed monad for category actions~\cite{Atkey09}
and we do not claim any novelty in this definition.}:
\begin{align*}
  \circmonad \mObjOne \mObjTwo \setObjOne &\defeq \setObjOne \times \catM(\mObjOne, \mObjTwo) \\
  \unit_{\mObjOne, \setObjOne}(x) &\defeq (x, \id_\mObjOne) \\
  \mult_{\mObjOne_1, \mObjOne_2, \mObjOne_3, \setObjOne}((x, f), g) &\defeq (x, f; g) \\
  \strength_{\setObjOne, \mObjOne, \mObjTwo, \setObjTwo}(x, (y, f)) &\defeq ((x,y), f).
\end{align*}
So, the unit augments a value with the identity circuit and multiplication is just a sequential composition of circuits.
Since we assumed that the category of circuits is premonoidal, this premonoidal structure lifts to \( \circmonad* \) (in the sense of \cite[Def.~5]{Atkey09}).
That is, there is a natural transformation \( (\mObjOne \rtensor -)^{\dagger}_{\mObjTwo_1, \mObjTwo_2, \setObjOne} \colon \circmonad {\mObjTwo_1} {\mObjTwo_2} {\setObjOne} \to  \circmonad {\mObjOne \otimes \mObjTwo_1} {\mObjOne \otimes \mObjTwo_2} {\setObjOne} \) satisfying certain desired properties.
In elementary terms, this is merely the map that associates \( (x, \circuitOne) \) to \( (x, \mObjOne \rtensor \circuitOne) \).

As one might expect, we will interpret terms in the Kleisli category of \( \circmonad* \), which we now briefly explain.
Given an \( \catS \)-indexed monad  \( \monad \) on \( \catC \), its \emph{Kleisli category} \( \catC_\monad \) is a category whose objects are pairs of \( \catC \) and \( \catS \) objects and homsets \( \catC_\monad((\setObjOne, \mObjOne), (\setObjTwo, \mObjTwo)) \defeq \catC(\setObjOne, \monad (\mObjOne, \mObjTwo, \setObjTwo)) \).
The identity morphisms and composition of morphisms are defined using units and multiplication as the obvious generalization of those in the Kleisli category of an ordinary monad.
It is known that the Kleisli category induces a \emph{parameterized Freyd category}~\cite{Atkey09} \( J \colon \catC \times \catS \to \catC_{\monad}\) as is the case for the ordinary monad and Freyd category; here \( J \) is an identity on objects functor that strictly preserves the premonoidal structure of \( \catC \).
Since Freyd categories are known to have a better match with the fine-grained call-by-value syntax, our semantical model will be based on the parametrized Freyd category induced by \( \circmonad* \).
The circuit monad \( \circmonad* \) also has Kleisli exponentials: there is a functor \( \Klarr \setObjOne {-} \mObjOne \colon \Set_{\circmonad*} \to \Set \) for every objects \( \setObjOne \), \( \mObjOne \),  and there is a natural isomorphism \( \Lambda_{\setObjTwo, (\setObjOne, \mObjOne), (\setObjThree, \mObjTwo)} \colon \Set_{\circmonad*}((\setObjTwo \times \setObjOne, \mObjOne), (\setObjThree, \mObjTwo)) \cong \Set(\setObjTwo, \Klarr \setObjOne {(\setObjThree, \mObjTwo)} \mObjOne ) \).
This means that the parameterized Freyd category induced by \( \circmonad* \) is a closed parameterized Freyd category.
We note that the object \( \Klarr \setObjOne {(\setObjThree, \mObjTwo)} \mObjOne \) is just \( \functionSet \setObjOne {\setObjThree \times \catM(\mObjTwo, \mObjOne) }\), the set of functions from \( \setObjOne \) to \( \setObjThree \times \catM(\mObjTwo, \mObjOne) \).
The counit of this adjunction is written as \( \ev \colon ((\Klarr \setObjOne {(\setObjTwo, \mObjTwo)} \mObjOne) \times \setObjOne, \mObjOne) \to  (\setObjTwo, \mObjTwo) \).
The Kleisli category \( \Set_{\circmonad*} \) is a premonoidal category, where \( (\setObjOne, \mObjOne)\otimes  (\setObjTwo, \mObjTwo) = (\setObjOne \times \setObjTwo, \mObjOne \otimes \mObjTwo ) \) and the binoidal functors are defined in a way analogous to how the premonoidal structure was lifted to \( \circmonad* \).
\footnotemark
\footnotetext{This should not be confused with the premonoidal structure of a parameterized Freyd category with respect to the cartesian category, which exist even if \( \catS \) is not premonoidal.
See also Appendix~\ref{appx:monad-Freyd} for a review on parameterized Freyd categories.
}
\begin{figure}
  \[
  \begin{tikzcd}[row sep=large, column sep=large,wire types={n,n}]
    \Set \times \disc(\obj(\catM)) = \catV \arrow[d,hook]  & \arrow[l,"\unitp_\mObjOne"',hook'] \Set   \\
    \Set \times\catM  \arrow[r, "J"] & \Set_{\circmonad*} \arrow[u,"\Klarr \setObjOne {-} \mObjOne"']
  \end{tikzcd}
  \]
  \caption{Overview of the model for \PQM{}.}
  \label{fig:model-overview-st}
\end{figure}

\subsection{Interpreting Types, Programs, and Configurations}
We now give the denotational semantics of \PQC{} using the Freyd category induced by \( \circmonad* \).
In Figure~\ref{fig:model-overview-st}, we summarize the overall structure of the interpretation.
Terms will be interpreted in the Kleisli category \( \Set_{\circmonad*} \), and values will be interpreted in \( \Set \times \disc(\obj(\catM)) \), where \( \disc(\obj(\catM)) \) is the discrete category whose objects are those of \( \catM \).

The interpretation of types is given in Figure~\ref{fig:interpretation-of-types-st}.
Types are interpreted as objects in \( \Set \times \catM \), i.e.~as pairs of a set and an object of \( \catM \).
We write \( \sharp \colon \Set \times \catM \to \catM \) and \( \flat \colon \Set \times \catM \to \Set \) for the obvious forgetful functors.
The object \( \Klarr \setObjOne \setObjTwo \mObjOne \) used in the interpretation of the arrow type is the parameterized Kleisli arrow of the closed parameterized Freyd category.
It is used to model the type of a ``code'' under the idea that a closure is a pair of a code and an environment.
The subsctipt \( T \) represents the type of the additional bundled typed arguments corresponding to the free variables.
Note that \( !A \) is interpreted as \((\Klarr 1 {\sem A} I, I) \) representing closures that do not capture any free variables having a bundle type.
The objects \( \qubitts \) and \( \bitts \) are the interpretation of qubits and bits, respectively, that we assumed to exist in \( \catM \).
\begin{figure}
  \begin{align*}
  &\text{\fbox{\( \sem{\typeOne} \in \Set \times \catM \)}} \\[-0.3em]
	&\;\;\sem \ptypeOne \defeq (\semP \ptypeOne, I) \quad
  \sem \mtypeOne \defeq (1, \semM \mtypeOne) \quad
  \sem {\arrowst \typeOne \typeTwo \mtypeOne} \defeq (\Klarr {\flat \sem \typeOne} {\sem \typeTwo} {\semM \mtypeOne \otimes \semM{\sharp A}}, \semM \mtypeOne) \\
  &\;\;\sem{\tensor \typeOne \typeTwo} \defeq (\setObjOne \times \setObjTwo, \mObjOne \otimes \mObjTwo) \text{ where } \sem \typeOne = (\setObjOne, \mObjOne) \text{ and } \sem \typeTwo = (\setObjTwo, \mObjTwo) \\
  &\text{\fbox{\( \semP{\ptypeOne} \in \Set\)}} \\[-0.3em]
  &\;\;\semP \unitt \defeq 1 \quad
    \semP \natt \defeq \mathbb N \quad \\
    &\;\;\semP{\tensor \ptypeOne \ptypeTwo} \defeq \semP \ptypeOne \times \semP \ptypeTwo \quad
    \semP{!A} \defeq \Klarr 1 {\sem A} {I} \quad
  \semP{\circt {} \mtypeOne \mtypeTwo} \defeq \catM(\semM \mtypeOne, \semM \mtypeTwo) \\
  &\text{\fbox{\( \semM{\mtypeOne} \in \catM \)}} \\[-0.3em]
    &\;\;\semM{\qubitt}\defeq \qubitts \quad
    \semM{\bitt} \defeq \bitts \quad
    \semM{\munitt} \defeq I \quad
    \semM{\tensor \mtypeOne \mtypeTwo} \defeq \tensor {\semM \mtypeOne} {\semM \mtypeTwo}
  \end{align*}
  \caption{Interpretation of Simple Types.}
  \label{fig:interpretation-of-types-st}
\end{figure}

\begin{lemma}
  Given a type \( \typeOne \), we have \( \semM{\sharp A} = \sharp \sem \typeOne \).
\end{lemma}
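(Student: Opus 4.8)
The plan is to unfold the statement and carry out a routine structural induction on the type \( \typeOne \). The first thing I would make explicit is that the two occurrences of \( \sharp \) denote different operations: on the left it is the \emph{syntactic} map on types defined earlier, whereas on the right it is the forgetful functor \( \sharp \colon \Set \times \catM \to \catM \), i.e.\ the second projection. Thus the goal \( \semM{\sharp \typeOne} = \sharp \sem \typeOne \) is exactly the assertion that the \( \catM \)-component of the pair \( \sem \typeOne \) coincides with the interpretation \( \semM{\sharp \typeOne} \) of the bundle type \( \sharp \typeOne \), both sides being read off directly from Figure~\ref{fig:interpretation-of-types-st}.

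For the base cases I would compute both sides from the definitions. If \( \typeOne \) is a parameter type \( \ptypeOne \), then \( \sharp \ptypeOne = \munitt \) while \( \sem \ptypeOne = (\semP \ptypeOne, \munitt) \), so both sides equal \( \semM \munitt = I \). For the bundle unit \( \munitt \) one has \( \sharp \munitt = \munitt \) and \( \sem \munitt = (1, I) \), again giving \( I \) on both sides; and for a wire type \( \wtypeOne \), \( \sharp \wtypeOne = \wtypeOne \) and \( \sem \wtypeOne = (1, \semM \wtypeOne) \), so both sides are \( \semM \wtypeOne \). The arrow type is the case designed into the interpretation: since \( \sharp(\arrowst \typeOne \typeTwo \mtypeOne) = \mtypeOne \) and the \( \catM \)-component of \( \sem{\arrowst \typeOne \typeTwo \mtypeOne} \) is \( \semM \mtypeOne \) by definition, the two sides coincide at once, with no appeal to the induction hypothesis.

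The only genuine inductive step is the tensor product \( \tensor \typeOne \typeTwo \). Here I would combine three facts: the syntactic \( \sharp \) distributes over \( \otimes \); the interpretation \( \semM{-} \) sends \( \otimes \) to \( \otimes \); and the interpretation clause \( \sem{\tensor \typeOne \typeTwo} = (\setObjOne \times \setObjTwo, \tensor \mObjOne \mObjTwo) \) is componentwise, so the projection \( \sharp \) strictly preserves \( \otimes \), i.e.\ \( \sharp\bigl((\setObjOne, \mObjOne) \otimes (\setObjTwo, \mObjTwo)\bigr) = \tensor \mObjOne \mObjTwo \). Chaining these with the induction hypotheses \( \semM{\sharp \typeOne} = \sharp \sem \typeOne \) and \( \semM{\sharp \typeTwo} = \sharp \sem \typeTwo \) for the two factors closes the case.

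I do not expect a real obstacle: the identity holds essentially because the arrow clause of the interpretation was chosen so that its \( \catM \)-component is literally \( \semM{\sharp \typeOne} \), and forgetting the \( \Set \)-component commutes with every other type former. The only points demanding care are bookkeeping. Because the type grammar has three overlapping layers, I would read the tensor clause of \( \sharp \) uniformly as \( \sharp(\tensor \typeOne \typeTwo) = \tensor{\sharp \typeOne}{\sharp \typeTwo} \), noting that the parameter- and bundle-tensor clauses listed in the definition are special instances of it; and I would keep the syntactic \( \sharp \) and the projection \( \sharp \) carefully distinguished throughout, so that the induction genuinely runs on the former.
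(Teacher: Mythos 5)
Your proof is correct, and it matches the paper's (implicit) argument: the paper states this lemma without proof, treating it as immediate from the definitions, and the routine structural induction you describe --- with the arrow case holding by construction and the tensor case being the only step needing the induction hypothesis --- is exactly the intended justification. Your observation that the syntactic \( \sharp \) must be read as distributing over \emph{all} tensor types (not just the parameter- and bundle-type instances listed) is a fair and necessary reading of the definition.
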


In our interpretation, circuit types and function types for bundle types are isomorphic, and this supports our design choice for the typing rule \textit{box}.
\begin{proposition}
  \label{prop:box-as-iso}
  We have an isomorphism \( \boxsem:  \flat \sem{\arrowst \mtypeOne \mtypeTwo {}}  \cong \semP{\circt {} \mtypeOne \mtypeTwo}\) in \( \Set \).
  Therefore, \( \sem{\arrowst \mtypeOne \mtypeTwo {}} \cong \sem{\circt {} \mtypeOne \mtypeTwo} \)  in \( \Set_{\circmonad*} \).
\end{proposition}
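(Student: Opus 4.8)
The plan is to prove the first bijection by unfolding both sides down to a bare circuit hom-set, and then to obtain the second isomorphism almost for free by transporting the first through the identity-on-objects Freyd functor \(J\). First I would unfold the left-hand side. The empty subscript abbreviates \(\munitt\), and \(\mtypeOne,\mtypeTwo\) are bundle types, so the defining clause for the arrow type gives \(\sem{\arrowst \mtypeOne \mtypeTwo {}} = (\Klarr{\flat\sem{\mtypeOne}}{\sem{\mtypeTwo}}{\semM{\munitt} \otimes \semM{\sharp\mtypeOne}}, \semM{\munitt})\). I would then simplify using three facts that are immediate from the definitions and the preceding Lemma: \(\flat\sem{\mtypeOne} = 1\), since \(\sem{\mtypeOne} = (1, \semM{\mtypeOne})\); \(\semM{\munitt} = I\); and \(\semM{\sharp\mtypeOne} = \semM{\mtypeOne}\), because \(\sharp\) is the identity on bundle types. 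As \(\catM\) is a strict (pre)monoidal category, \(I \otimes \semM{\mtypeOne} = \semM{\mtypeOne}\) holds on the nose, so the Kleisli arrow has index object exactly \(\semM{\mtypeOne}\) and \(\catM\)-component \(I\); that is, \(\sem{\arrowst \mtypeOne \mtypeTwo {}} = (\Klarr{1}{\sem{\mtypeTwo}}{\semM{\mtypeOne}}, I)\).

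Next I would collapse the Kleisli exponential object. By its description as a function set, \(\Klarr{1}{\sem{\mtypeTwo}}{\semM{\mtypeOne}} = \functionSet{1}{1 \times \catM(\semM{\mtypeOne}, \semM{\mtypeTwo})}\), where the orientation of the hom-set — from the index object \(\semM{\mtypeOne}\) to the \(\catM\)-component \(\semM{\mtypeTwo}\) of the codomain — is the one forced by the defining adjunction. Since \(\functionSet{1}{1 \times X} \cong X\) naturally in \(\Set\), this collapses to \(\catM(\semM{\mtypeOne}, \semM{\mtypeTwo}) = \semP{\circt{}{\mtypeOne}{\mtypeTwo}}\). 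Composing these canonical bijections gives the required \(\boxsem\); concretely it sends a function \(f \colon 1 \to 1 \times \catM(\semM{\mtypeOne}, \semM{\mtypeTwo})\) to the circuit \(\pi_2(f(\ast))\), with inverse sending a circuit \(C\) to the constant map \(\ast \mapsto (\ast, C)\). This settles the first statement.

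For the ``Therefore'', I would observe that both objects have \(\catM\)-component \(I\): indeed \(\sem{\circt{}{\mtypeOne}{\mtypeTwo}} = (\semP{\circt{}{\mtypeOne}{\mtypeTwo}}, I)\) by the clause for parameter types, and \(\sem{\arrowst \mtypeOne \mtypeTwo {}} = (\flat\sem{\arrowst \mtypeOne \mtypeTwo {}}, I)\) as just computed. Hence \((\boxsem, \id_I)\) is an isomorphism in \(\Set \times \catM\), and since the identity-on-objects functor \(J \colon \Set \times \catM \to \Set_{\circmonad*}\) preserves isomorphisms, its image \(\unit \circ \boxsem\) — the Kleisli morphism that tags \(\boxsem\) with the identity circuit — is an isomorphism in \(\Set_{\circmonad*}\), with inverse \(\unit \circ \boxsem^{-1}\). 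This is precisely the claimed \(\sem{\arrowst \mtypeOne \mtypeTwo {}} \cong \sem{\circt{}{\mtypeOne}{\mtypeTwo}}\).

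I expect no genuine obstacle here: the argument is essentially a definitional unfolding. The two points deserving care are keeping the hom-set \(\catM(\semM{\mtypeOne}, \semM{\mtypeTwo})\) oriented to match \(\semP{\circt{}{\mtypeOne}{\mtypeTwo}}\) when the Kleisli exponential collapses, and the (standard) fact that a base isomorphism lifted through \(J\) with identity circuits is automatically an isomorphism in \(\Set_{\circmonad*}\) — which is exactly what turns the first statement into the second.
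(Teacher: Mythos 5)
Your proof is correct and follows essentially the same route as the paper's: unfold \(\flat\sem{\arrowst{\mtypeOne}{\mtypeTwo}{}}\) to \(\Klarr{1}{(1,\semM{\mtypeTwo})}{\semM{\mtypeOne}}\), collapse the Kleisli exponential to \(\functionSet{1}{1\times\catM(\semM{\mtypeOne},\semM{\mtypeTwo})}\cong\catM(\semM{\mtypeOne},\semM{\mtypeTwo})\), and obtain the second isomorphism as \(J(\boxsem,\id_I)\). The extra care you take with the orientation of the hom-set and with the strictness used to simplify \(I\otimes\semM{\mtypeOne}\) is consistent with what the paper leaves implicit.
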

\begin{proof}
  By unrolling the definition, we have the following obvious isomorphisms for sets
  \begin{align*}
    \flat \sem{\arrowst \mtypeOne \mtypeTwo {}}
    & = \Klarr 1 {(1, \semM{\mtypeTwo})} {\semM \mtypeOne} \\
    &\cong \functionSet 1 { 1 \times \catM(\semM \mtypeOne, \semM \mtypeTwo)} \\
    &\cong \catM(\semM \mtypeOne, \semM \mtypeTwo) \\
    &= \semP{\circt {} \mtypeOne \mtypeTwo}.
  \end{align*}
  The isomorphism between \( \sem{\arrowst \mtypeOne \mtypeTwo {}} \cong \sem{\circt {} \mtypeOne \mtypeTwo} \)  is given by \( J(\boxsem, \id_I )\).
\end{proof}

\begin{remark}
  As briefly mentioned, in \PQ{} calculi, the \( \boxt{} \) operator usually coerces a function of type \( !(\mtypeOne \multimap \mtypeTwo )\) to a circuit type.
  This is because, in presheaf models of \PQ{}, there is an isomorphism \( \sem{!(\arrowst \mtypeOne \mtypeTwo {})} \cong \sem{\circt {} \mtypeOne \mtypeTwo} \).
  In our model, adding a bang to the type \( \arrowst \mtypeOne \mtypeTwo {} \) means to additionally thunk a function that has no free variable having a bundle type.
  Invoking this thunk may cause some effects, i.e.~produce a circuit while returning a function that corresponds to a boxed circuit.
  Hence, we do not have an isomorphism between \( \sem{!(\arrowst \mtypeOne \mtypeTwo {})} \) and \( \sem{\circt {} \mtypeOne \mtypeTwo} \).
  While our calculus draws inspiration from those by~\citet{ColledanDalLago24a,ColledanDalLago25}, they also coerce \( !(\mtypeOne \multimap \mtypeTwo )\)
  to \( \circt{} \mtypeOne \mtypeTwo \).
  In their works, the issue of the effect is circumvented using the effect system; either by requiring that \( !(\mtypeOne \multimap \mtypeTwo ) \) has zero effect or by adding an effect annotation to the circuit type.
\end{remark}

Now we define the interpretation of typing judgments.
The interpretation is defined in Figure~\ref{fig:interpretation-val-judgment-st}~and~\ref{fig:interpretation-judgment-st}.
A valid typing judgment for values \( \vjudgst \contextOne \valOne \typeOne\) is interpreted as a morphism \( \sem{\vjudgst \contextOne \valOne \typeOne} \colon  \sem{\contextOne} \to \sem \typeOne \) in \( \catV ( = \Set \times \disc(\catM))\) capturing the fact that values only produces trivial circuits.
In contrast, a computational judgment \( \cjudgst \contextOne \termOne \typeOne \) is interpreted as \( \sem{\cjudgst \contextOne \termOne \typeOne} \colon  \sem{\contextOne} \to \sem \typeOne \) in \( \Set_{\circmonad*} \).
We sometimes denote these morphisms by \( \sem{\valOne} \) and \( \sem{\termOne} \).
The morphisms, \( \Delta \), \( \pi_x \) and \( ! \) in Figure~\ref{fig:interpretation-val-judgment-st} are the diagonal map, projection, and the unique map to the terminal object \( 1 \), respectively, which exist in \( \Set \); the morphism \( \dup_\setObjOne \) in Figure~\ref{fig:interpretation-judgment-st} is defined as \( J(\Delta_\setObjOne, \id_I) \) and it is used for duplicating a values with a parameter type.
Some obvious coherence isomorphisms of the cartesian product in \( \Set \) are omitted for simplicity.

\begin{figure}
  \begin{flushleft}
  \fbox{\( \sem{\vjudgst \contextOne \valOne \typeOne} \)}
  \end{flushleft}
  \begin{align*}
	&\sem{\vjudgst{\pcontextOne}{\unitv}{\unitt}} \defeq (!_{\sem \pcontextOne}, \id_{I}) \\
	&\sem{\vjudgst{\pcontextOne, \labOne:\wtypeOne}{\labOne}{\wtypeOne}} \defeq (!_{\semP \pcontextOne}, \id_{\semM w}) \\
	&\sem{\vjudgst{\pcontextOne,\varOne:\typeOne}{\varOne}{\typeOne}} \defeq (\pi_x, \id_I) \\ &\sem{\vjudgst{\contextOne}{\lambda x. \termOne}{\arrowst{\typeOne}{\typeTwo}{\rcount{\contextOne}}}}
	  \defeq (\Lambda(\sem{\termOne}), \id_{\semM{\rcount{\contextOne}}}) \\
	&\sem{\vjudgst{\pcontextOne}{\lift{\termOne}}{\bang{\typeOne}}} \defeq  (\Lambda(\sem{\cjudgst{\pcontextOne}{\termOne}{\typeOne}}), \id_I) \\
  &\sem{\vjudgst{\pcontextOne}{\boxedCirc{\struct\labOne}{\circuitOne}{\struct\labTwo}}{\circt{}{\mtypeOne}{\mtypeTwo}}} \defeq (!_{\sem \pcontextOne}; \widehat {\semM{\boxedCirc{\struct \labOne}\circuitOne{\struct \labTwo}}}, \id_I)  \\
  &\text{where }
    \begin{aligned}[t]
      \semM{\boxedCirc{\struct \labOne} \circuitOne {\struct \labTwo}} &\defeq \semM{\mtypeOne} \xrightarrow{\cong} \semM{\lcOne} \xrightarrow{\circuitOne} \semM{\lcTwo} \xrightarrow{\cong} \semM{\mtypeTwo} \\
      \hat \circuitOne &\defeq 1 \xrightarrow{\Lambda(J(\id_1, \circuitOne))} \Klarr 1 {(1, \semM \mtypeTwo)} {\semM \mtypeOne} \xrightarrow[\cong]{\mathmakebox[4em]{\boxsem}} \catM(\semM \mtypeOne, \semM \mtypeTwo)
    \end{aligned}\\ &\sem{\vjudgst{\pcontextOne,\contextOne_1,\contextOne_2}{\tuple{\valOne}{\valTwo}}{\tensor{\typeOne}{\typeTwo}}}
    \defeq
    \begin{aligned}[t]
        \sem \pcontextOne  \otimes \sem {\contextOne_1} \otimes \sem {\contextOne_2}
        &\xrightarrow{(\Delta_{\sem \pcontextOne}, \id_I) \otimes \id} \sem \pcontextOne \otimes \sem \pcontextOne  \otimes \sem {\contextOne_1} \otimes \sem {\contextOne_2} \\
        &\xrightarrow{\mathmakebox[3em]{\cong}} (\sem \pcontextOne  \otimes \sem {\contextOne_1}) \otimes (\sem \pcontextOne \otimes \sem {\contextOne_2}) \\
        &\xrightarrow{\sem \valOne \otimes \sem \valTwo} \sem{\typeOne \otimes \typeTwo}
    \end{aligned}
  \end{align*}
  \caption{Interpretation of Values.}
  \label{fig:interpretation-val-judgment-st}
\end{figure}

\begin{figure*}
  \begin{flushleft}
    \fbox{\( \sem{\cjudgst \contextOne \termOne \typeOne} \)}
  \end{flushleft}
  \begin{align*}
	\sem{\cjudgst{\pcontextOne,\contextOne_1,\contextOne_2}{\app{\valOne}{\valTwo}}{\typeTwo}}
      &\defeq \begin{aligned}[t]
        \sem \pcontextOne  \otimes \sem {\contextOne_1} \otimes \sem {\contextOne_2}
        &\xrightarrow{\dup_{\sem \pcontextOne}\otimes \sem {\contextOne_2} \otimes \sem {\contextOne_1}} \sem \pcontextOne \otimes \sem \pcontextOne  \otimes \sem {\contextOne_1} \otimes \sem {\contextOne_2}  \\
        &\xrightarrow{\mathmakebox[4em]{\cong}} (\sem \pcontextOne  \otimes \sem {\contextOne_1}) \otimes (\sem \pcontextOne \otimes \sem {\contextOne_2}) \\
        &\xrightarrow{J(\sem \valOne) \otimes J(\sem \valTwo)} \sem{\arrowst \typeOne  \typeTwo {\mtypeOne} } \otimes \sem \typeOne \xrightarrow{\ev} \sem \typeTwo
      \end{aligned} \\
	\sem{\cjudgst{\pcontextOne}{\force{\valOne}}{\typeOne}}
      &\defeq J(\sem{\vjudgst{\pcontextOne}{\valOne}{\bang{\typeOne}}}); \ev\\
    \sem{\cjudgstLessSpace{\pcontextOne,\contextOne_1,\contextOne_2}{\apply{\valOne}{\valTwo}}{\mtypeTwo}}
      &\defeq \begin{aligned}[t]
        \sem \pcontextOne  \otimes \sem {\contextOne_1} \otimes \sem {\contextOne_2}
        &\xrightarrow{\dup_{\sem \pcontextOne} \otimes \sem {\contextOne_2} \otimes \sem {\contextOne_1}} \sem \pcontextOne \otimes \sem \pcontextOne  \otimes \sem {\contextOne_1} \otimes \sem {\contextOne_2}  \\
        &\xrightarrow{\mathmakebox[4em]{\cong}} (\sem \pcontextOne  \otimes \sem {\contextOne_1}) \otimes (\sem \pcontextOne \otimes \sem {\contextOne_2} ) \\
        &\xrightarrow{J(\sem \valOne) \otimes J(\sem \valTwo)} \sem{\circt{} \mtypeOne \mtypeTwo} \otimes \sem \mtypeTwo \\
        &\xrightarrow{\applysem} \sem \mtypeTwo
      \end{aligned} \\
    \sem{\cjudgst{\pcontextOne}{\boxt{\mtypeOne}{\valOne}}{\circt{}{\mtypeOne}{\mtypeTwo}}}
      &\defeq \sem \pcontextOne \xrightarrow{J(\sem \valOne)} \sem{\arrowst \mtypeOne \mtypeTwo {}}  \xrightarrow[\cong]{J(\boxsem, \id_I)} \sem{\circt{} \mtypeOne \mtypeTwo} \\ %
    \sem{\cjudgst{\contextOne}{\return{\valOne}}{\typeOne}}
    &\defeq J(\sem{\vjudgst{\contextOne}{\valOne}{\typeOne}}) \\
    \sem{\cjudgstLessSpace{\pcontextOne,\contextOne_2,\contextOne_1}{\letinLessSpace{\varOne}{\termOne}{\termTwo}}{\typeTwo}}
    &\defeq \begin{aligned}[t]
        \sem \pcontextOne  \otimes \sem {\contextOne_2} \otimes \sem {\contextOne_1}
        &\xrightarrow{(\dup_{\sem \pcontextOne} \otimes \sem{\contextOne_1} \otimes \sem{\contextOne_2});\cong} (\sem \pcontextOne \otimes \sem {\contextOne_2})  \otimes (\sem \pcontextOne \otimes \sem {\contextOne_1}) \\
        &\xrightarrow{(\sem{\pcontextOne} \otimes \sem{\contextOne_2})\rtensor  \sem \termOne} \sem \pcontextOne \otimes \sem {\contextOne_2} \otimes \sem \typeOne \xrightarrow{\sem \termTwo} \sem \typeTwo
    \end{aligned} \\
    \sem{\cjudgst{\contextOne_1, \lvOne : \typeOne, \lvTwo : \typeTwo, \contextOne_2}{\termOne}{\typeThree}}
			&= \sem{\contextOne_1} \otimes \sem{\typeOne} \otimes \sem{\typeTwo} \otimes \sem{\contextOne_2} \xrightarrow{\cong} \sem{\contextOne_1} \otimes \sem{\typeTwo} \otimes \sem \typeOne \otimes \sem{\contextOne_2} \xrightarrow{\sem \termOne} \sem \typeThree
  \end{align*}
  \caption{Interpretation of computational judgments of the simple type system (excerpt).}
  \label{fig:interpretation-judgment-st}
\end{figure*}

The interpretation of values is standard, except for $\lambda$-abstractions and 
boxed circuits.
A $\lambda$-abstraction is interpreted as a closure.
The first element of \( \sem{\lambda x. \termOne} \) is the semantic 
counterpart of the function that takes variables of types \( 
\sharp(\contextOne) \) as additional parameters, and the second element
\( \id_{\sem {\sharp \contextOne}} \) is the ``environment''.
The interpretation of \( \lift M \) is just the interpretation of a thunk \( \lambda(). M \).
In the interpretation of boxed circuits, we use isomorphisms \( \semM{\mtypeOne} \cong \semM{\lcOne} \) and \( \semM{\lcTwo} \cong \semM{\mtypeTwo} \) that exist thanks to the premises of the typing rule \textit{circ} such as \( \lcOne \permequiv \lcOne' \) and \( \vjudgst {\lcOne'} {\struct \labOne} \mtypeOne \).
The important part of the interpretation of a boxed circuit \( \circuitOne \) is the map \( \hat \circuitOne \), which is just the global element \( \hat \circuitOne(*) \defeq \circuitOne \).
The definition using  \( \Lambda \) and  \( \boxsem \) emphasizes the idea that boxed circuits can be seen as special functions.

As for the interpretation of terms, the interpretation of \( \apply \valOne \valTwo \) and \( \boxt \mtypeOne \valOne \) are the most interesting cases.
The morphism \( \applysem_{\mObjOne, \mObjTwo} \colon (\catM(\mObjOne, \mObjTwo), I) \otimes (1, \mObjOne) \to (1, \mObjTwo) \) is defined by
\begin{align*}
  (\catM(\mObjOne, \mObjTwo), I) \otimes (1, \mObjOne)
    \xrightarrow{J(\boxsem^{-1}, \id_I) \otimes \id_{(1, \mObjOne)}}
  (\Klarr 1 {(1, \mObjTwo)} \mObjOne, I) \otimes (1, \mObjOne)
  \xrightarrow{\ev} (1, \mObjTwo).
\end{align*}
The box operator is interpreted by the post-composition of the isomorphism between function types and circuit types given in Proposition~\ref{prop:box-as-iso}.
The interpretation of the \( \letoperator \) operator is also worth explaining.
The premonoidal product \( (\sem{\pcontextOne} \otimes \sem{\contextOne_2})\rtensor  \sem \termOne \) adds wires of type \( \sharp \sem{\contextOne_2} \) (and \(\sharp \sem{\pcontextOne} = \munitt \) which can be ignored) to the circuit produced by \( \termOne \) so that \( \termTwo \) can use these wires.

\begin{remark}
  The interpretation is somewhat unorthodox in that some syntactic constructs do not have a corresponding semantic operator.
  This is because we are doing two things at once.
  The interpretation can be factorized into (1) a syntactic translation from \PQC{} to the internal language of parametrized Freyd categories (called the command calculus)~\cite{Atkey09} and (2) interpreting the translated term.
  The syntactic translation resembles a variant of closure conversion, but we are not sure if a category theoretic explanation can be given to this translation.
\end{remark}

The categorical semantics is correct with respect to the big-step operational semantics.
To further elaborate on this result we first extend the interpretation to configurations.
Intuitively, \( \sem {\config \circuitOne \termOne} \) is the morphism obtained
by post-composing \( \sem{\termOne} \), together with some parallel wires, to
\( \circuitOne \).

\begin{definition}[Interpretation of Configurations]
  Suppose that \( \configjudgment \lcOne \circuitOne \valOne \typeOne {\lcOne'} \) and suppose that \( \lcTwo \) and \( \lcTwo' \) are the label contexts that satisfy \( C \colon \lcOne \to  \lcTwo', \lcOne' \); \( \lcTwo \permequiv \lcTwo' \); and \( \vjudgst {\lcTwo} \valOne \typeOne \).
  Then we define\( \sem {\config \circuitOne \valOne} \) as \( (\id_1, \circuitOne; \perm); (\sem{\vjudgst {\lcTwo} \valOne \typeOne} \otimes \id_{\sem {\lcOne'}}) \), which is a morphism in \( \Set \times \catM \).
  Here, \( \perm \) is the isomorphism \(\semM{\lcTwo'} \otimes \semM{\lcOne'} \xrightarrow{\cong} \semM{\lcTwo} \otimes \semM{\lcOne'} \).
  Similarly, for \( \configjudgment \lcOne \circuitOne \termOne \typeOne {\lcOne'} \), we define \( \sem {\config \circuitOne \termOne} \) as \( J(\id_1, \circuitOne); (\sem{\cjudgst {\lcTwo} \termOne \typeOne} \ltensor \sem{\lcOne'})\), which is a morphism in \( \Set_{\circmonad*} \).
  Here, \( \lcTwo \) is the label context that types \( \termOne \) as in the case of \( \valOne \).
\end{definition}

\subsection{Main Results}

We are now ready to state the soundness and computational adequacy properties.
\begin{restatable}[Soundness]{theorem}{soundnessSt}
  \label{thm:soundness}
  Suppose that \( \configjudgment \lcOne \circuitOne \termOne \typeOne {\lcOne'} \) and \( \config \circuitOne \termOne \eval \config {\circuitOne'} \valOne \).
  Then \( \sem{\config \circuitOne \termOne} = J(\sem{\config {\circuitOne'} \valOne}) \).
\end{restatable}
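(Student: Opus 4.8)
The plan is to proceed by induction on the derivation of the big-step evaluation \( \config \circuitOne \termOne \eval \config{\circuitOne'}\valOne \), with one case per rule of Figure~\ref{fig:operational-semantics}. In each case I would unfold both sides of the desired equation using the definition of the configuration interpretation: the left-hand side is \( J(\id_1,\circuitOne);(\sem\termOne \ltensor \sem{\lcOne'}) \), while the right-hand side is \( J\big((\id_1,\circuitOne';\perm);(\sem\valOne\otimes\id_{\sem{\lcOne'}})\big) \). The goal is then to show that post-composing the interpretation of the redex \( \sem\termOne \) onto the input circuit \( \circuitOne \) (threading the residual output wires \( \sem{\lcOne'} \) through via the \( \ltensor \) action) yields exactly the interpretation of the value \( \valOne \) post-composed onto \( \circuitOne' \). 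Since \( J \) is an identity-on-objects functor that strictly preserves the premonoidal structure, both sides live in \( \Set_{\circmonad*} \) and can be compared directly.

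Before the induction I would establish the indispensable ingredient, a \emph{substitution lemma}, proved separately by induction on typing derivations: for \( \vjudgst{\pcontextOne,\contextOne_1}\valOne\typeOne \) and \( \cjudgst{\pcontextOne,\contextOne_2,\varOne:\typeOne}\termOne\typeTwo \), the interpretation \( \sem{\termOne\sub\valOne\varOne} \) equals the composite that first duplicates \( \sem\pcontextOne \), pairs the \( \sem\valOne \) component into the environment, and then applies \( \sem\termOne \); a companion statement handles substitution into values. This lemma is precisely what the \emph{app}, \emph{dest}, \emph{force} and \emph{let} cases need, since in each of them the evaluated term arises from the redex by a \( \beta \)-style reduction (undone semantically by the adjunction \( \ev\circ\Lambda = \id \)) followed by evaluation of a substituted body. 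The \emph{return} case is immediate from \( \sem{\return\valOne} = J(\sem\valOne) \) and functoriality of \( J \), and the \emph{if-zero} case follows directly from the inductive hypothesis once the guard is a literal.

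The two genuinely CDL-specific cases are \emph{apply} and \emph{box}. For \emph{apply}, I would show that the semantic operator \( \applysem \) applied after \( \circuitOne \) realizes exactly what \( \operatorname{append} \) computes: because boxed circuits are taken up to renaming of labels, and renamings act as identities on the objects of \( \catM \), the chosen representative \( \boxedCirc{\struct\labThree}{\circuitTwo'}{\struct\labFour}\cong\boxedCirc{\struct\labOne}{\circuitTwo}{\struct\labTwo} \) contributes the same morphism \( \semM{\boxedCirc{\struct\labOne}{\circuitTwo}{\struct\labTwo}} \), and attaching it to the designated output wires of \( \circuitOne \) is the premonoidal composition encoded by \( \applysem \) together with the \( \ltensor \)/\( \rtensor \) actions; this is a base case, using no inductive hypothesis. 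For \emph{box}, I would apply the inductive hypothesis to the inner derivation \( \config{\cidentity\lcOne}{\app\valOne{\struct\labOne}}\eval\config\circuitTwo{\struct\labTwo} \), obtaining that running \( \valOne \) on the fresh interface produced by \( \freshlabels{\mtypeOne} \) yields the morphism \( \semM{\boxedCirc{\struct\labOne}{\circuitTwo}{\struct\labTwo}} \); the isomorphism \( \boxsem \) of Proposition~\ref{prop:box-as-iso} then identifies this with the boxed-circuit value, while the ambient circuit \( \circuitOne \) is left untouched, matching the conclusion \( \config\circuitOne{\boxedCirc{\struct\labOne}{\circuitTwo}{\struct\labTwo}} \).

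I expect the main obstacle to be the premonoidal bookkeeping rather than any single rule. Because \( \catM \) lacks the interchange law, I cannot freely permute the order in which the sub-circuits of a term are emitted, so in the \emph{let}, \emph{app} and \emph{apply} cases I must track carefully that the \( \rtensor \)/\( \ltensor \) actions thread the shared parameter context \( \sem\pcontextOne \) and the spectator wires \( \sem{\lcOne'} \) through the computation in exactly the order dictated by the Kleisli composition of \( \circmonad* \). Reconciling the coherence isomorphisms (associators, the permutation \( \perm \), and the silently omitted symmetries of the cartesian product on \( \Set \)) with the label renamings hidden inside \( \operatorname{append} \) is where the bulk of the routine-but-delicate verification will lie.
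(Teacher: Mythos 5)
Your plan matches the paper's proof essentially step for step: induction on the evaluation derivation, a substitution lemma proved by induction on typing derivations as the key auxiliary result, the $\beta$-cases (\emph{app}, \emph{force}, \emph{let}, \emph{dest}) discharged via that lemma together with the universality of $\ev$, \emph{apply} handled as a base case by unfolding $\applysem$ against $\operatorname{append}$, and \emph{box} by applying the inductive hypothesis to the inner derivation and invoking the isomorphism of Proposition~\ref{prop:box-as-iso}. The approach is correct and is the same as the paper's.
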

\begin{proof}
  By induction on the derivation of the big-step evaluation relation.
    See Appendix~\ref{appx:st-proofs} for the details.
\end{proof}
\begin{restatable}[Computational Adequacy]{theorem}{computationalAdequacy}
  Suppose that \( \configjudgment \emptycontext \circuitOne \termOne \unitt \emptycontext \) and \( \sem{\config \circuitOne \termOne} = J(\sem{\config \circuitTwo \valOne})\).
  Then \( \config \circuitOne \termOne \eval \config \circuitTwo \valOne \) (possibly up to renamings of labels).
\end{restatable}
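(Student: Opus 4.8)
The plan is to obtain adequacy as a consequence of three ingredients: a \emph{normalization} result for the operational semantics, the soundness theorem (Theorem~\ref{thm:soundness}), and a \emph{faithfulness/injectivity} property of the interpretation at the ground configurations in question. The only genuinely new work is normalization; the rest is bookkeeping that exploits the simple shape of configurations of type \( \unitt \) under empty label contexts.

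First I would prove that evaluation is total on well-typed configurations: whenever \( \configjudgment \emptycontext \circuitOne \termOne \unitt \emptycontext \) there exist \( \circuitThree \) and \( \valThree \) with \( \config \circuitOne \termOne \eval \config{\circuitThree}{\valThree} \). Since \PQC{} has no recursion---neither fixpoints nor a fold over an inductive type, as lists were deliberately removed---the calculus is strongly normalizing, and I would establish this by a standard reducibility (logical relations) argument, which is precisely what the realizability predicates of the appendix provide. Concretely, I would define by induction on types a family of reducible values (at \( \arrowst{\typeOne}{\typeTwo}{\mtypeOne} \), those sending reducible arguments to reducible computations; at \( \bang{\typeOne} \), those thunks whose forcing yields a reducible computation; at \( \circt{}{\mtypeOne}{\mtypeTwo} \), all boxed circuits; at wire and tensor types, the evident clauses), together with a notion of reducible configuration stating that \( \config \circuitOne \termOne \) evaluates to a value configuration carrying a reducible value. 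The fundamental lemma---every well-typed term, closed by a reducible substitution, yields a reducible configuration---is then proved by induction on the typing derivation.

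The cases that require care are the effectful, higher-order ones. For \textit{box}, the premise is run on a freshly generated identity circuit \( \cidentity{\lcOne} \), so the reducibility predicate must be insensitive to the concrete ambient circuit and stable under a change of the underlying label context; this Kripke-style stability (monotonicity under circuit extension and label renaming) is what I expect to be the main obstacle, since it is exactly the point where the separation between the ``parametric part'' and the growing ``circuit part'' must be reconciled inside the relation. Granting it, \textit{box} reduces to applying the reducible function to the fresh labels (which are reducible values of wire type) and observing that the resulting boxed circuit is trivially reducible; \textit{apply} terminates because \( \appendfunction \) is total and again returns a wire bundle; and \textit{let}, \textit{app}, and \textit{force} follow directly from the reducibility clauses together with the frame-stability property.

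Finally I would assemble the three ingredients. Normalization gives \( \config \circuitOne \termOne \eval \config{\circuitThree}{\valThree} \), and soundness yields \( \sem{\config \circuitOne \termOne} = J(\sem{\config{\circuitThree}{\valThree}}) \). Combining with the hypothesis \( \sem{\config \circuitOne \termOne} = J(\sem{\config \circuitTwo \valOne}) \) gives \( J(\sem{\config{\circuitThree}{\valThree}}) = J(\sem{\config \circuitTwo \valOne}) \); since \( J \colon \Set \times \catM \to \Set_{\circmonad*} \) sends a pair \( (f,c) \) to the Kleisli morphism \( x \mapsto (f(x),c) \) and is therefore faithful on the relevant hom-set (both sides are morphisms \( (1,I) \to (1,I) \)), I may cancel \( J \) and obtain \( \sem{\config{\circuitThree}{\valThree}} = \sem{\config \circuitTwo \valOne} \) in \( \Set \times \catM \). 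It remains to read back the configurations from their denotations at type \( \unitt \). Inversion on the value typing rules forces \( \valThree = \valOne = \unitv \) and both residual label contexts to be empty, so the interpretation of each value configuration is, in its \( \catM \)-component, exactly the circuit morphism (\( \circuitThree \), respectively \( \circuitTwo \)) in \( \catM(I,I) \). Hence \( \circuitThree = \circuitTwo \) as morphisms of \( \catM \), which---because morphisms of \( \catM \) identify circuits up to the renaming of wires introduced by \( \appendfunction \) and \( \freshlabelsfunction \)---means that \( \config{\circuitThree}{\valThree} \) and \( \config \circuitTwo \valOne \) coincide up to renaming of labels. Transporting the evaluation derivation along this renaming gives \( \config \circuitOne \termOne \eval \config \circuitTwo \valOne \) up to renaming, as required.
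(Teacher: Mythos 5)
Your proof is correct, but it is organized quite differently from the paper's. The paper proves adequacy with a single \emph{semantic} logical relation \( \lRel{\lcOne}{\typeOne} \subseteq (\flat\sem{\typeOne}\times\catM(\semM{\lcOne},\sharp\sem{\typeOne}))\times\closedTerms{\lcOne}{\typeOne} \): membership asserts \emph{both} that the term evaluates (from the identity circuit) \emph{and} that the pair it is related to coincides with the value and circuit actually produced. The fundamental lemma then directly yields, for closed \( \termOne \) of type \( \unitt \), an evaluation \( \config{\cidentity{\lcOne}}{\termOne}\eval\config{\circuitThree}{\unitv} \) with \( \sem{\termOne}(\unitv)=(\unitv,\circuitThree) \), and the theorem follows from the hypothesis together with Lemma~\ref{lem:config-id}, which transports evaluation from the identity circuit to the ambient circuit \( \circuitOne \); no appeal to soundness and no injectivity argument is needed. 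You instead factor the proof as normalization (a purely operational reducibility relation) plus soundness plus faithfulness of \( J \) and determinacy of denotations at the ground observation type. Both routes rest on a logical relation with the same delicate cases (\textit{box}, \textit{let}, and stability under change of the ambient circuit, which the paper handles via Lemma~\ref{lem:config-id} rather than by building Kripke monotonicity into the relation), so neither is easier at its core; what your factorization buys is modularity --- termination is isolated as a self-contained operational fact and all denotational content is delegated to the already-proved soundness theorem --- at the cost of two extra gluing steps that the paper's stronger relation renders unnecessary, and of a weaker fundamental lemma that is less reusable (the paper's relation, for instance, extends with only minor changes to the effect system). Your final assembly is sound: inversion at \( \unitt \) forces the residual label context to be empty and both values to be \( \unitv \), the \( \catM \)-component of each value configuration's denotation is exactly the accumulated circuit in \( \catM(I,I) \), and \( J \) is injective on morphisms out of \( (1,I) \) because \( 1 \) is nonempty. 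One small imprecision: labels live in the configuration's contexts and wire bundles, not in the morphism of \( \catM \), so it is not quite right to say that morphisms of \( \catM \) ``identify circuits up to renaming''; but in this ground case the output context is empty, so the conclusion stands regardless.
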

\begin{proof}
    By an argument utilizing logical relations similar to those defined in~\cite{ColledanDalLago24a}, which are given in Appendix~\ref{appx:st-proofs}.
\end{proof}

\section{Effect System}
\label{sec:effect-system}
We extend the type system defined in Section~\ref{sec:simple-type} with effect annotations that estimate the properties (e.g.~size) of the circuit generated by a program.
To put it another way, we introduce the type system underlying (a non-dependent version of) \PQR{}.
Then we give the interpretation of programs in \PQR{} by using the category graded monad~\cite{OrchardWE20}.
This exemplifies that the type-and-effect system of \PQR{}, although rooted in an operational perspective, also has a natural denotational reading.

\subsection{Effects for Circuits}
When designing an effect system, the key question to ask is ``What kind of structure should we assume on effects?''.
A common choice is to use a preordered monoid~\cite{Katsumata14}, where the monoid multiplication is used to compute the effect of sequential execution and the preorder is used for subtyping.
We make the same choice, but use \emph{categories} instead of \emph{monoids} because circuits are \emph{many-sorted} in the sense that circuits have various input and output interfaces.
Moreover, since we have postulated that circuits form a premonoidal category, it is natural to require that the algebraic structure representing the effect---dubbed \emph{circuit algebra}---also be premonoidal.

\begin{definition}[Circuit Algebra]
  A \emph{circuit algebra} \( \catE \) is a strict symmetric premonoidal category that is preorder enriched.
  The preorder enrichment means that:
  \begin{itemize}
    \item each homset \( \catE(\eObjOne, \eObjTwo) \) is a preordered set;
    \item composition of morphisms preservers the order: if \( \effOne_1 \ple  \effTwo_1 \) and \( \effOne_2 \ple \effTwo_2 \), then \( \effOne_1; \effOne_2 \ple \effTwo_1 ; \effTwo_2 \);
    \item \( \eObjOne \ltensor - \) (resp.~\( - \rtensor \eObjOne \)) preserves the order for any \( \eObjOne\): if \( \effOne \ple \effTwo \), then \( \eObjOne \ltensor \effOne \ple \eObjOne \ltensor \effTwo \).
  \end{itemize}
  We call morphisms of \( \catE \), ranged over by \( \effOne, \effTwo, \ldots \), \emph{effect annotations}.
  The identity over the monoidal unit \( \eObjUnit \) of \( \catE \) is denoted by \( \nulleff \), and we call it the \emph{null effect}.
  Objects of \( \catE \) will often be written in lowercase script letters so that they are distinguishable from objects of \( \catM \).
\end{definition}

Effect annotations are meant to abstract the actual effect.
We propose to consider this abstraction as a functor.
\begin{definition}[Abstraction]
  An abstraction \( \abstraction \) from the category of circuits \( \catM \) to a circuit algebra \( \catE \) is a strict symmetric premonoidal functor \( \abstraction \colon \catM \to \catE \).
  That is, \( \abstraction \) is a functor satisfying \( \abstraction(\mObjOne \otimes \mObjTwo) =  \abstraction(\mObjOne) \otimes \abstraction(\mObjTwo) \) (equality on the nose), \( \abstraction(\mObjOne \rtensor f) = \abstraction(\mObjOne) \rtensor \abstraction(f) \), \( \abstraction(g\ltensor \mObjTwo) = \abstraction(g) \ltensor \abstraction(\mObjTwo) \) and preserves the symmetry.
\end{definition}
In case \( \catM \) is the syntactic category of circuits, defining an abstraction is no different from giving a functorial semantics to circuits.
We believe that allowing arbitrary interpretations of circuits as effect annotations is not only conceptually clean but also helps us conceive of a wide variety of examples---though in practice, we should seek efficiently implementable effects.

\begin{remark}
  We defined abstractions as \emph{strict} premonoidal functors because it is known that a non-strict premonoidal functor is tricky to define~\cite{StatonLevy13,RomanSobocinski25}.
  A way to circumvent this issue is to use \emph{effectful categories}~\cite{RomanSobocinski25}, which are premonoidal categories endowed with a chosen family of central morphisms, as the definition of circuits.
  However, using premonoidal categories and strict premonoidal functors are enough to deal with examples of circuit algebras we show below, which contains examples for resource estimations that have been considered in the literature.
\end{remark}

\subsubsection*{Examples of Circuit Algebras}
Here we give some examples of circuit algebras that capture some notions of circuit metrics.

Since the concept of circuit metrics is intrinsically intensional, namely the way gates are placed \emph{is} important, we consider the syntactic category of circuits for the category \( \catM \) in the following examples.
A \emph{signature} is a tuple \( \sig = (\sig_0, \sig_1) \) where \( \sig_0 \) is the set of \emph{object variables}, \( \sig_1\) is the set of \emph{generators}, which are typed constants of the form \( f : \sigma \to \tau \) with \( \sigma, \tau \in \sig_0^*\).
We write \( \catM_{\sig} \) for the free strict symmetric premonoidal category (with trivial center) generated  by the signature \( \sig \), which can be defined by designing an appropriate term calculus (as in \cite{JoyalStreet91}) or by considering string diagrams.
To facilitate understanding, we shall informally deal with string diagrams by depicting the graph and considering them as ``monoidal string diagrams without interchange law'' (see e.g.~\cite{RomanSobocinski25} for a more mathematically formal definition).
Note that a functor \( \abstraction \colon \catM_{\sig} \to \catE \) is determined if we define how object variables and generators are mapped to objects and morphisms of \( \catE \), respectively.

As a prototypical example, we shall consider string diagrams over the signature \( \sigQC \) for quantum circuits.
The set \( \sigQC_0 \) is defined as \( \{ \qubitt, \bitt \} \) and the generators are finite sequences over the set of gates \( \sig_G \).
We assume that the set \( \sig_G \), which is also a set of generators, contains usual quantum gates such as the Hadamard gate \( \Hadamard \colon \qubitt \to \qubitt \), CNOT gate \( \CNOT \colon (\qubitt, \qubitt) \to (\qubitt, \qubitt) \) and measurement \( \mathit{meas} \colon \qubitt \to \bitt \).
A sequence \( \tilde \gateOne = \gateOne_1 \cdots, \gateOne_n \in \sigQC_1 \) has the type \( \sigma_1 \cdots \sigma_n \to \tau_1 \cdots \tau_n \) provided that \( \gateOne_i \colon \sigma_i \to \tau_i \), and intuitively corresponds to applying the gates \( g_i \) in parallel.
In a sense, this means that it is users responsibility to explicitly state which gates to be placed in parallel, and users cannot expect the gates to automatically slide and be parallelized.

\begin{example}[Gate count and naive depth]
\label{ex:gate-count-and-depth}
  The simplest notion of circuit metric we consider is \emph{gate count}.
  The number of gates can be captured by the monoid \( (\mathbb N, +) \), which can be seen as a single object circuit algebra (where we denote the only object as \( \star \)) whose morphisms are natural numbers \( n \colon \star \to \star \).
  Sequential composition is defined as addition and the identity morphism is given by \( 0 \).
  The functor \( \star \ltensor {-} \) is simply given as the identity functor: \( \star \ltensor \star \defeq \star \) and \( \star \ltensor n \defeq n \).
  Obviously, this category is order enriched by considering the standard ordering for the natural numbers.
  The abstraction functor \( \abstraction_{G} \colon \catM_{\sigQC} \to \mathbb N \) is defined by mapping \( \tilde \gateOne \in \sigQC_1 \) to the number of gates in \( \tilde \gateOne \).

  A very rough estimation of the circuit \emph{depth} can be given by the same circuit algebra.
  We define \( \abstraction_{D} \colon \catM_{\sigQC} \to \mathbb N \) by \( \abstraction_{D} (\tilde \gateOne) = 1\) for every \( \gateOne \in \sigQC_1 \).
  For example, let us consider the following circuits.
  \par\medskip
  \noindent\begin{minipage}{.45\linewidth}
   \begin{equation}
  \begin{quantikz}[column sep=1.5em,row sep=1em ]
			\lstick{$q_1$} & \gate{H} &  & \gate{X} & \rstick{} \\
			\lstick{$q_2$} &  & \gate{H}&& \rstick{}
		\end{quantikz}
\label{eq:circ-depth}
\end{equation}
\end{minipage}
\begin{minipage}{.5\linewidth}
  \begin{equation}
  \begin{quantikz}[column sep=1.5em,row sep=1em]
			\lstick{$q_1$} & \gate{H} &  \gate{X} \gategroup[2,steps=1, style={dotted}]{} & & \rstick{} \\
			\lstick{$q_2$} &  & \gate{H}&& \rstick{}
		\end{quantikz}
    \label{eq:circ-depth-opt}
  \end{equation}
  \end{minipage}\\[1em]
  Here, in \eqref{eq:circ-depth-opt}, the X and Hadamard gate are placed parallelly.
  The number of gates is estimated as \( 3 \) both in \eqref{eq:circ-depth} and \eqref{eq:circ-depth-opt}.
  On the other hand, the depth is estimated as \( 3 \) in \eqref{eq:circ-depth} since the three gates are sequentially composed, but \( 2 \) in \eqref{eq:circ-depth-opt}.
  While this way of counting depth has its own benefit of being easy to compute, it is not satisfactory because typically the depth of \eqref{eq:circ-depth} is also defined as \( 2 \); below we shall see a better way to count depth.
\end{example}

\begin{example}[Width~\cite{ColledanDalLago25}]
\label{ex:width}
  We explain a circuit algebra \( \category W  \) that is used to estimate (upper bounds on) the width of a circuit and an abstraction \( \abstraction_{\category W} \colon \catM_{\sigQC} \to \category W \).
  Recall that the \emph{width} of a circuit is just a natural number that is defined as the maximum number of wires active at any point in the circuit.
  Therefore, \emph{morphisms} in \( \category W \) should be natural numbers.
  For example, a quantum circuit \( \circuitOne \) depicted as
  \begin{equation}
  \begin{quantikz}[column sep=1.5em,row sep=1em]
			\lstick{$\ket 0$} && \ctrl{1} \setwiretype{q} && \rstick{} \\
			\lstick{$q_1$} & \gate{H} & \targ & && \rstick{} \\
      \lstick{$q_2$} &&  && \rstick{}
		\end{quantikz}
  \label{eq:circ-width}
  \end{equation}
  has with width \( 3 \), i.e.~\( \abstraction_{\category W}(\circuitOne) = 3 \).
  Note that a wire of type \( \qubitt \) is counted as a circuit of width \( 1 \).
  This leads us to define \( \abstraction(\id_{\qubitt}) \colon \abstraction(\qubitt) \to \abstraction(\qubitt) \) as the natural number \( 1 \). %
  Now how should we define \( \abstraction(\qubitt) \)?
  Since this object should contain enough information to define the width of wires of type \( \qubitt \), a natural choice is to define this as the natural number \( 1 \).
  Hence, we also define \emph{objects} of \( \category W \) as natural numbers.
  The sequential composition of morphisms \( k_1 \xrightarrow{m}  k_2 \) and \( k_2 \xrightarrow{n} k_3 \) in \( \category W \) is defined as \( k_1 \xrightarrow{\max(m, n)} k_3 \) reflecting the definition of the width of a circuit.
  The functor \( k \rtensor {-} \) (resp. \( k \rtensor {-} \)) represents parallelly adding \( k \) wires.
  Hence, we define \( k \rtensor m  \defeq k + m \).
  To summarize, \( \category W \) is given by the following data: %
  \begin{itemize}
    \item \( \obj(\category W) \defeq \mathbb N \),
    \item \( \category W(k_1, k_2) \defeq (\mathbb N, \le_{\mathbb N}) \), and the composition is defined by max, with the identity morphism over an object \( k \) being the morphism \( k \) itself.
    \item functors \( k \rtensor - \) (resp. \( - \ltensor k \)) such that the action on objects and morphisms are both defined as \( k + (-)\). \qedhere
  \end{itemize}
\end{example}

\begin{example}[Depth]
  We give a better circuit algebra for depth, which may be seen as a way to count the naive depth of a circuit after optimizing it by sliding gates.
  The idea is to track the depth using \emph{matrices over max-plus tropical semiring} \( (\mathbb N \cup \{ - \infty \}, \max, + )\), where the \( (i, j) \) component of the matrix describes the cost for traversing from the \( i \)-th input to the \( j \)-th output.
  We define a circuit algebra \( \category {D} \) as follows:
  \begin{itemize}
    \item \( \obj(\category{D}) \defeq \mathbb N \),
    \item \( \category{D}(k_1, k_2) \defeq  \MatSet {\mathbb N}{k_1}{k_2} \times \MatSet{\mathbb N}{1}{k_1} \times  \MatSet{\mathbb N}{k_2}{1} \), where \( \MatSet{\mathbb N}{k_1}{k_2} \) is the set of \( k_1 \times k_2 \) matrices over the tropical semiring.
      The composition \( (\mathbf{A_1}, \mathbf{v_1}, \mathbf{w_1}) \circ  (\mathbf{A_2}, \mathbf{v_2}, \mathbf{w_2}) \) is given as  \( (\mathbf{A_1} \mathbf{A_2}, \max(\mathbf{v_1} \mathbf{A_2}, \mathbf{v_2}),  \max(\mathbf{w_1}, \mathbf{A_1} \mathbf{v_2})\); here \( \max \) acts on vectors component-wise.
      The identity morphism over \( k \) is given as \( (\mathbf{I}_k, \mathbf{0}_{1 \times k}, \mathbf{0}_{k \times 1})\), where \( \mathbf{I}_k \) is the \( k \times k \) identity matrices and \( \mathbf{0}_{n \times m} \) is the \( n \times m \) zero matrix over the tropical semiring.
      These should not be confused with the standard identity and zero matrices, say over \( \mathbb Q\). For example, \( \mathbf{0}_{1 \times k} \) is \( (- \infty, \ldots, -\infty ) \).
      The ordering on morphisms is given by ordering over matrices where \( \mathbf{A} \le \mathbf{B} \) if \( a_{i, j} \le  b_{i, j} \) for every \( (i, j) \).

    \item functors \( k \rtensor - \) (resp. \( - \ltensor k \)) are defined by the ``direct sum'' of matrices.
      Concretely, for \( (\mathbf{A}, \mathbf{v}, \mathbf{w}) \in \category{D}(k_1, k_2) \), we define \( (\mathbf{A}, \mathbf{v}, \mathbf{w}) \rtensor k \) as
      \( ( \left(\begin{smallmatrix}\mathbf{A} & \mathbf{0}_{k_1 \times k} \\ \mathbf{0}_{k \times k_2} & \mathbf{I}_k \end{smallmatrix} \right), (\mathbf{v}, \mathbf{0}_{1 \times k}), \left( \begin{smallmatrix} \mathbf{w} \\ \mathbf{0}_{k \times 1}\end{smallmatrix} \right) )\).
      Here again, the zero and identity matrices are those over the tropical semiring.
  \end{itemize}
  As mentioned, components of \( (\mathbf{A}, \mathbf{v}, \mathbf{w})\) represents the cost of the paths in a circuit as illustrated in Figure~\ref{fig:depth}.
  The role of the vectors \( \mathbf{v} \) and \( \mathbf{w} \) is to track the depth of wires that have ``dead ends'', for instance, created by qubit creation or annihilation.
  The \( i \)-th component of \( \mathbf{v} \) describes the maximum cost for traversing the circuit from the \( i \)-th input until it reaches an end.
  Conversely, the \( i \)-th component of the vector \( \mathbf{w} \) tracks the maximum depth of a path starting from an ``end''  and ending at the \( i \)-th output.

  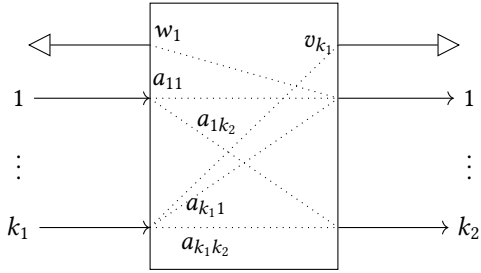
\begin{figure}[h]
  \begin{tikzpicture}[node distance=1cm and 2cm]

  \node (source) at (0, 1.2) {};
  \node (i1) at (0,0.5) {\( 1 \)};
  \node (k1) at (0,-1.2) {\(k_1\)};
  \node (vdots1) at (0, -0.3) {\( \vdots \)};

  \node[draw, minimum width=2.5cm, minimum height=3.5cm] (box) at (3,0) {};
  \node (sink) at (6, 1.2) {};
  \node (o1) at (6,0.5) {\( 1 \)};
  \node (k2) at (6,-1.2) {\( k_2 \)};
  \node (vdots2) at (6, -0.3) {\( \vdots \)};

  \draw[-{Triangle[open, length=3mm,width=3mm]}] (box.west |- source) -- (source);
  \draw[->] (i1) -- (box.west |- i1);
  \draw[->] (k1) -- (box.west |- k1);

  \draw[-{Triangle[open, length=3mm,width=3mm]}] (box.east |- sink) -- (sink);
  \draw[->] (box.east |- o1) -- (o1);
  \draw[->] (box.east |- k2) -- (k2);

  \draw[dotted] (box.west |- source) -- (box.east |- o1) node[pos=0.1, above] {\(w_{1}\)};
  \draw[dotted] (box.west |- i1) -- (box.east |- o1) node[pos=0.1, above] {\(a_{11}\)};
  \draw[dotted] (box.west |- i1) -- (box.east |- k2) node[pos=0.2, right] {\(a_{1 k_2}\)};
  \draw[dotted] (box.west |- k1) -- (box.east |- o1) node[pos=0.3, below] {\(a_{k_1 1}\)};
  \draw[dotted] (box.west |- k1) -- (box.east |- k2) node[pos=0.3, below] {\(a_{k_1 k_2}\)};
  \draw[dotted] (box.west |- k1) -- (box.east |- sink) node[pos=0.9, above] {\(v_{k_1}\)};
\end{tikzpicture}
\caption{Components of \( (\mathbf{A}, \mathbf{v}, \mathbf{w})\).}
\label{fig:depth}
\end{figure}

  The depth of the circuit \eqref{eq:circ-depth} of Example~\ref{ex:gate-count-and-depth}, is given as \( (\left(\begin{smallmatrix} 1 & -\infty \\ -\infty&  2 \end{smallmatrix}\right) , (-\infty, -\infty),  \left(  \begin{smallmatrix} -\infty \\ -\infty \end{smallmatrix} \right)) \).
  Now we can correctly conclude that the depth is \( 2 \), not \( 3 \), as the maximum number in the tuple is \( 2 \).
  As an example containing a ``dead end'', we show the depth of the circuit \eqref{eq:circ-width} of Example~\ref{ex:width}.
  It is given as \( (\left(\begin{smallmatrix} 2 & 2 & -\infty \\ -\infty & -\infty &  0 \end{smallmatrix}\right) , (-\infty, -\infty),  \left(  \begin{smallmatrix} 1\\ 1 \\ -\infty \end{smallmatrix} \right)) \).
\end{example}
It is natural to ask whether any matrix circuit algebra can capture optimized circuit width taking into account the so-called qubit recycling.
Given well-known results on the NP-hardness of qubit recycling~\cite{Jiang24}, this seems difficult.
On the other hand, it is possible to capture the so-called qubit dependency graph, which serves as input to a (heuristic) solver that outputs a qubit recycling strategy~\cite{Jiang24}.
The graph can be calculated as an adjacent matrix (over the boolean semiring) by employing an approach similar to that used for capturing circuit depth.

\begin{example}[Assertion-Based Optimization]
  \label{ex:optimization}
  We consider the size of quantum circuits modulo an optimization that removes operations that act trivially on states that are known to satisfy certain conditions.
  For example, if we know that the input of a \CNOT{} gate is in a state \( \ket \psi = \alpha_{00}\ket{0 0} + \alpha_{01} \ket{01} \), then we know that this \CNOT{} operation is \emph{trivial} in the sense that it behaves as the identity operation because \( \ket{0b} \xmapsto{\CNOT} \ket{0 b} \) for \( b \in \{ 0, 1\}\).
  This is actually the case for the circuit \eqref{eq:circ-width} of Example~\ref{ex:width}, meaning that we may remove the \CNOT{} gate since it is redundant.
  In \eqref{eq:circ-width}, the fact that the first qubit is zero is evident because it was created right before the CNOT gate, but we may also \emph{assert} such properties against inputs of the circuit and remove gates based on these assertions.
  This is the core idea of an automated optimization methodology proposed by \citet{HanerHT20}.\footnote{This optimization method is implemented as a transpiler pass in Qiskit~\cite{HoareOptimizer}.}

  Our aim here is not to give an effect annotation that works as an optimizer that transforms a given circuit but to capture the size of the circuit after the optimization as an effect annotation.
  For the size, we consider gate counts for simplicity, but the other circuit metrics can be used as well.
  Since the size of the circuit depends on the precondition, we consider a function of the type \( \powerset X  \to \powerset Y \times \mathbb N \), essentially a forward predicate transformer combined with a ``cost monad '' (i.e. a writer monad) \( (-) \times \mathbb N \).
  The reason for returning not just the size but also the postcondition in \( \powerset Y \), is simply to make the effect annotations compose.
  The choice of the set \( X \) and \( Y \) is the key to obtaining a tractable notion of effect annotation.
  Here, we follow \citet{HanerHT20} and take \( X  \defeq \{0, 1\}^m \) and \( Y \defeq \{1, 0\}^n \), where \( m \) and \( n \) are the number of input and output qubits, respectively.
  A bitstring \( b \in \{ 0, 1 \}^n \) represents the \( b \)-th (written in the binary format) computational base state, and \( L \subseteq \{ 0, 1\}^n \) can be considered as the set of possible outcomes of the quantum state.
  Formally, we say that a (pure) state \( \ket {\psi} \in \mathbb{C}^{2^n} \) satisfies the predicate \( L \) if \( \ket \psi = \sum_{b \in \{0, 1\}^n} \alpha_b \ket b \) and \( |\alpha_b| > 0 \) implies \( b \in L \).\footnote{It is easy to extend this satisfaction relation to mixed states.}
  For \CNOT{}, we define a function \( \effOne : \{ 0, 1 \}^2 \to \{ 0, 1\}^2 \times \mathbb N \), which, for instance, associates \( \{00, 01\} \) to \( (\{ 00, 01 \}, 0) \) and \( \{ 00, 10 \} \) to \( (\{ 00, 11 \}, 1) \).
  The reason why we assign the cost \( 0 \) to \( \{00, 01\}\) is because \CNOT{} acts as an identity for \( \ket \psi \) satisfying \( \{00, 01\} \), meaning that this \CNOT{} gate can be removed by optimization.
  To capture the linearity of the operation, we require that the effect \( \powerset(\{ 0, 1\}^m) \to \powerset(\{ 0, 1\}^n ) \times \mathbb N \) to be join preserving.\footnotemark
  \footnotetext{It is well-known that there is a bijection between join preserving functions from \( \powerset X \) to \( \powerset Y \) and Kleisli morphisms \( f \colon X \to \powerset Y \) . Similarly, we may think that we are working with morphisms \( X \to \powerset Y \times \mathbb N\). }

  The circuit algebra we consider, denoted as \( \Asrt \) is defined as follows: %
  \begin{itemize}
    \item Objects are natural numbers
    \item A morphism \(\effOne \in \Asrt(m, n)\) is a function from \( \powerset (\{0, 1\}^m ) \) to \( \powerset(\{0, 1\}^n) \times \mathbb N \) that is join preserving.
          That is, \( \effOne(L_1 \cup L_2) = (L'_1 \cup L'_2 ,\max(c_1, c_2))\) where \( \effOne(L_i) = (L'_i , c_i)\).
          Sequential composition is defined as that of the writer monad, and the identity morphism is simply \( L \mapsto (L, 0) \).
          The ordering between morphisms is defined as \( \effOne \ple \effTwo \) if, for every \( L \), \( \effOne(L) \ple_{\powerset(\{0, 1\}^k) \times \mathbb N} \effTwo(L) \), where \( \ple_{\powerset(\{0, 1\}^k) \times \mathbb N} \) is the product order of \( (\powerset (\{0, 1\}^k), \subseteq) \) and \( (\mathbb N, \le_{\mathbb N})\).
  \item the functor \( k \rtensor - \) that acts on \( \effOne : m \to n \) as \( (k \rtensor \effOne)(L) \defeq \bigcup_{(b_1, b_2) \in L} \effOne(\{ b_1\}) \times \{ b_2 \} \). On objects, it just acts as \( k + (-)\) as the previous examples.
  \end{itemize}
  If we consider a subsignature of \( \sigQC\) that only has \( \qubitt \) as object variable and unitary gates as generators and the free premonoidal category generated by it, then we can define the abstraction function to \( \Asrt \) by giving the interpretation to these unitary operators as we did for \CNOT{}.
  We can also handle \( \bitt \) and non unitary operation such as \( \mathrm{meas}: \qubitt \to \bitt \) by considering the set of predicates \( \powerset(\{0,1\}^m \times \{ 0, 1\}^n) \) for a state with \( m \) qubits and \( n \) classical bits.
\end{example}
The circuit algebra examples we have discussed encompass all those previously considered by~\citet{ColledanDalLago25}. In contrast, Example~\ref{ex:optimization} represents a novel contribution: to the best of the authors' knowledge, no existing resource analysis techniques for quantum programs in the literature account for optimizations, i.e., no such techniques is capable of deriving bounds that reflect the improvement in size induced by optimizations.
It is also worth noting that the notions of width and depth used by Colledan and Dal Lago differ in nature: the former is global, assigning a single numerical value to each circuit, while the latter is local, assigning a value to each individual qubit. These two types of metrics are captured in different ways in~\textit{op.\ cit}.
Circuit algebras provide a unified formalism that can accommodate both global and local metrics within the same framework.
As we will see, the proof of soundness of the resulting type system will be done just once.

\subsection{Type-and-Effect System}
Now we add effect annotations to the types.
The type system is parameterized by an abstraction to a circuit algebra \( \abstraction \colon \catM \to \catE \).
As usual, arrow types have annotations that estimate the scope of effect caused by invoking the function.
We also annotate \( \bang \) and \( \circt{} \mtypeOne \mtypeTwo \) because thunks and boxed circuits can be thought of as special functions.
The grammar of types and parameter types now become as follows:
\begin{align*}
  \text{Types} \quad	\typeOne,\typeTwo
  &\Coloneqq \ptypeOne \mid \mtypeOne \mid \tensor{\typeOne}{\typeTwo} \mid \arroweff \typeOne \typeTwo \mtypeOne {\effOne \colon \eObjOne \to \eObjTwo} \\
  \text{Parameter types}	\quad \ptypeOne,\ptypeTwo
  &\Coloneqq \unitt \mid  \natt \mid \tensor{\ptypeOne}{\ptypeTwo} \mid \bangeff{\typeOne}{\effOne \colon \eObjOne \to \eObjTwo} \mid \circt{\effOne \colon \eObjTwo \to \eObjThree}{\mtypeOne}{\mtypeTwo}
\end{align*}
The definition of bundle types remains the same.

The typing judgment for computations now takes the form \( \cjudgeff \contextOne \termOne \typeOne {\effOne \colon \eObjOne \to \eObjTwo} \).
The annotation \( \effOne  \colon \eObjOne \to \eObjTwo \) gives the information about the circuit generated by \( \termOne \); the type of the effect annotation \( \eObjOne \to \eObjTwo \) are sometimes omitted for readability.
The shape of typing judgment for values is the same as before.

\begin{figure}[t]
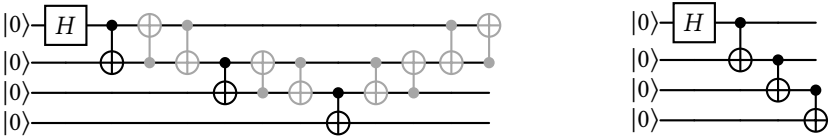

	\centering
	\fbox{
    \begin{minipage}{0.97\linewidth}
    \begin{mathpar}
			\inference[\textit{abs}]
			{\cjudgeff{\contextOne,\varOne:\typeOne}{\termOne}{\typeTwo} {\effOne \colon \eObjOne \to \eObjTwo}}
			{\vjudgst{\contextOne}{\abs{\varOne}{\typeOne}{\termOne}}{\arroweff{\typeOne}{\typeTwo}{\rcount{\contextOne}} {\effOne \colon
				\eObjOne \to \eObjTwo}}}
			\and
			\inference[\textit{app}]
			{\vjudgst{\pcontextOne,\contextOne_1}{\valOne}{\arroweff{\typeOne}{\typeTwo}{\mtypeOne}{\effOne \colon \eObjThree \to \eObjTwo}}
				&
				\vjudgst{\pcontextOne,\contextOne_2}{\valTwo}{\typeOne}}
			{\cjudgeff{\pcontextOne,\contextOne_1,\contextOne_2}{\app{\valOne}{\valTwo}}{\typeTwo}{\effOne \colon \eObjThree \to \eObjTwo}}
			\and
			\inference[\textit{lift}]
			{\cjudgeff{\pcontextOne}{\termOne}{\typeOne} {\effOne \colon \eObjOne \to \eObjTwo}}
			{\vjudgst{\pcontextOne}{\lift{\termOne}}{\bangeff{\typeOne} \effOne}}
			\and
			\inference[\textit{force}]
			{\vjudgst{\pcontextOne}{\valOne}{\bangeff{\typeOne} \effOne}}
			{\cjudgeff{\pcontextOne}{\force{\valOne}}{\typeOne} {\effOne \colon \eObjOne \to \eObjTwo}}
			\and
			\inference[\textit{circ}]
			{\circjudgment{\circuitOne}{\lcOne}{\lcTwo}
			  &
        \lcOne \permequiv \lcOne'
        &
        \lcTwo \permequiv \lcTwo'
			  \\
				\vjudgst{\lcOne'}{\struct\labOne}{\mtypeOne}
				\;\;
				\vjudgst{\lcTwo'}{\struct\labTwo}{\mtypeTwo}
        \;\;
			  \abstraction(\semM{\boxedCirc{\struct\labOne}{\circuitOne}{\struct \labTwo}}) = \effOne
				} {\vjudgst{\pcontextOne}{\boxedCirc{\struct\labOne}{\circuitOne}{\struct\labTwo}}{\circt{\effOne}{\mtypeOne}{\mtypeTwo}}}
        \and
        			\inference[\textit{box}]
			{\vjudgst{\pcontextOne}{\valOne}{{\arroweff{\mtypeOne}{\mtypeTwo}{I}{\effOne\colon \eObjOne \to \eObjTwo}}}}
			{\cjudgeff{\pcontextOne}{\boxt{\mtypeOne}{\valOne}}{\circt{\effOne\colon \eObjOne \to \eObjTwo}{\mtypeOne}{\mtypeTwo}} \nulleff}

      \and
      			\inference[\textit{apply}]
			{\vjudgst{\pcontextOne,\contextOne_1}{\valOne}{\circt{\effOne \colon \eObjOne \to \eObjTwo}{\mtypeOne}{\mtypeTwo}}
				\\
				\vjudgst{\pcontextOne,\contextOne_2}{\valTwo}{\mtypeOne}}
			{\cjudgeff{\pcontextOne,\contextOne_1,\contextOne_2}{\apply{\valOne}{\valTwo}}{\mtypeTwo} {\effOne \colon \eObjOne \to \eObjTwo}}
			\\
			\and
			\inference[\textit{dest}]
			{\vjudgst{\pcontextOne,\contextOne_1}{\valOne}{\tensor{\typeOne}{\typeTwo}}
				\\\
				\cjudgeff{\pcontextOne,\contextOne_2,\varOne:\typeOne,\varTwo:\typeTwo}{\termOne}{\typeThree}{\effOne}}
			{\cjudgeff{\pcontextOne,\contextOne_2,\contextOne_1}{\dest{\varOne}{\varTwo}{\valOne}{\termOne}}{\typeThree}{\effOne}}
      \and
			\inference[\textit{ifz}]
			{\vjudgst{\pcontextOne}{\valOne}{\natt}
				\\\
				\cjudgeff{\pcontextOne,\contextOne}{\termOne}{\typeOne}{\effOne}
        \and
        \cjudgeff{\pcontextOne,\contextOne}{\termTwo}{\typeOne}{\effOne}}
			{\cjudgeff{\pcontextOne,\contextOne}{\ifz{\valOne}{\termOne}{\termTwo}}{\typeOne}{\effOne}}
			\and

			\and
			\inference[\textit{return}]
			{\vjudgst{\contextOne}{\valOne}{\typeOne} \quad \abstraction(\sharp \sem \typeOne) = \eObjOne }
			{\cjudgeff{\contextOne}{\return{\valOne}}{\typeOne} {\id_{\eObjOne} \colon \eObjOne \to \eObjOne}}
			\and
			\inference[\textit{let}]
			{
			  \cjudgeff{\pcontextOne,\contextOne_1}{\termOne}{\typeOne} {\effOne_1 \colon \eObjOne_1 \to \eObjOne_1'}
				\\
				\cjudgeff{\pcontextOne,\contextOne_2,\varOne:\typeOne}{\termTwo}{\typeTwo}  {\effOne_2 \colon \eObjOne_2 \to \eObjOne_2'} \\
				\abstraction(\sem{\sharp \contextOne_i})= {\eObjTwo_i}  & \effOne = (\id_{\eObjTwo_2} \rtensor \effOne_1); \effOne_2
			 }
			{\cjudgeff{\pcontextOne,\contextOne_2,\contextOne_1}{\letin{\varOne}{\termOne}{\termTwo}}{\typeTwo} \effOne} \and
  	\inference[\textit{sub}]
			{
			  \cjudgeff{\contextOne}{\termOne}{\typeOne} {\effOne_1 \colon \eObjOne \to \eObjTwo} &
        \effOne_1 \ple \effOne_2
      }
      {\cjudgeff{\contextOne}{\termOne}{\typeOne} {\effOne_2 \colon \eObjOne \to \eObjTwo}}
    \end{mathpar}
    \end{minipage}}
      \caption{Typing Rules for the Effect System of \PQR{} (excerpt).}
	\label{fig:eff-typing-rules}
\end{figure}

Typing rules are given in Figure~\ref{fig:eff-typing-rules}.
Rules for lambda abstraction, application, thunking and forcing are the standard rules.
The rule for boxed circuit adds the effect \( \effOne \) calculated by abstracting the circuit \( \circuitOne \) as the annotation.
In practice, the boxed circuits are the primitive constant circuits for which the abstraction is predefined.
The rules of greatest interest may be the rules for \( \returnoperator \) and \( \letoperator \).
Usually, the rule for \( \returnoperator \) in a type-and-effect system adds the null effect since values are effectless.
In our calculus, we cannot give such a uniform treatment to values.
Instead, we add the annotation \( \id_{\eObjOne} \), which represents the effect of the identity circuit of type \( \sharp \sem{ \typeOne } \).
This is different from the null effect; for example, if we are interested in width, the width of identity circuits cannot be treated as zero.
For the same reason that we cannot ignore identity circuits, the rule for \( \letoperator \) is annotated by \( (\id \rtensor \effOne_1); \effOne_2 \) rather than \( \effOne_1; \effOne_2 \).
We also have a new subsumption rule, which allows us to relax the effect annotation.\footnotemark
\footnotetext{It is also possible to define a subtyping relation based on the \( \effOne \ple \effOne' \) in a standard way.}
We note that the domain (resp.~codomain) of the effect is an abstraction of the bundle type of the type environment (resp.~return type).
\begin{lemma}
  Suppose that \( \cjudgeff \contextOne \termOne \typeOne {\effOne \colon \eObjOne \to \eObjTwo} \).
  Then we have \( \abstraction({\sharp \sem {\contextOne}}) =  \eObjOne \) and \( \abstraction({\sharp \sem {\typeOne}}) = \eObjTwo \).
  \qed
\end{lemma}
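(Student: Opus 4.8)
The plan is to prove the statement by induction on the derivation of the computation judgment $\cjudgeff \contextOne \termOne \typeOne {\effOne \colon \eObjOne \to \eObjTwo}$, checking each rule of Figure~\ref{fig:eff-typing-rules}. To make the induction close, I would carry along two companion invariants, established by a single simultaneous induction over value and computation derivations. The first is a purely $\sharp$-skeletal fact about values, which carries over from the simple type system unchanged: whenever $\vjudgst \contextOne \valOne \typeOne$ is derivable one has $\sharp \sem \contextOne = \sharp \sem \typeOne$ in $\catM$, so that applying $\abstraction$ gives $\abstraction(\sharp \sem \contextOne) = \abstraction(\sharp \sem \typeOne)$. (The \emph{abs} annotation $\rcount{\contextOne}$ is precisely the bundle type $\sharp \contextOne$ of the captured context, which is exactly what makes the arrow case of this invariant balance.) The second invariant is a coherence condition on the effect annotations decorating types: for every $\arroweff \typeOne \typeTwo \mtypeOne {\effOne \colon \eObjOne \to \eObjTwo}$ in a valid judgment, $\eObjOne = \abstraction(\semM \mtypeOne \otimes \sharp \sem \typeOne)$ and $\eObjTwo = \abstraction(\sharp \sem \typeTwo)$; for every $\circt {\effOne \colon \eObjTwo \to \eObjThree} \mtypeOne \mtypeTwo$, $\eObjTwo = \abstraction(\semM \mtypeOne)$ and $\eObjThree = \abstraction(\semM \mtypeTwo)$; and for every $\bangeff \typeOne {\effOne \colon \eObjOne \to \eObjTwo}$, $\eObjOne = \eObjUnit$ and $\eObjTwo = \abstraction(\sharp \sem \typeOne)$.

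These claims are genuinely interlocked — the main statement for computations is what establishes the coherence of the arrow type created by \emph{abs} (via $\eObjOne = \abstraction(\sharp\sem{\contextOne,\varOne:\typeOne}) = \abstraction(\semM{\rcount{\contextOne}} \otimes \sharp\sem\typeOne)$) and of the bang type created by \emph{lift}, whereas the coherence of the types eliminated in \emph{app}, \emph{force} and \emph{apply} is what lets me read off the correct $\eObjOne,\eObjTwo$ there — so I would fold all three into one induction. Throughout I would use that $\abstraction$ is a strict symmetric premonoidal functor, so that it commutes with $\otimes$ on the nose and sends $\munitt$ to $\eObjUnit$; combined with $\sharp \pcontextOne = \munitt$ for parameter contexts, this keeps the bookkeeping entirely equational.

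With the invariants in hand, each rule is checked directly. The elimination rules \emph{app}, \emph{force} and \emph{apply} form the uniform heart of the argument: in each, the value invariant rewrites $\sharp \sem{-}$ of the relevant subcontexts into the bundle data recorded in the type of the eliminated value, whereupon the coherence condition on that type identifies the result with the annotation carried by the rule. Concretely, for \emph{app} the domain becomes $\abstraction(\semM \mtypeOne \otimes \sharp \sem \typeOne)$ and the codomain $\abstraction(\sharp \sem \typeTwo)$, matching the arrow coherence; for \emph{apply} the circuit value has a parameter type so its subcontext contributes $\munitt$ and the domain collapses to $\abstraction(\semM \mtypeOne)$; for \emph{force} the whole context is parametric, collapsing the domain to $\eObjUnit$. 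The \emph{return} rule uses only the value invariant, its annotation being $\id_{\eObjOne}$ with $\eObjOne = \abstraction(\sharp \sem \typeOne) = \abstraction(\sharp \sem \contextOne)$; \emph{box} emits $\nulleff \colon \eObjUnit \to \eObjUnit$, matching a parameter context and a parameter (circuit) result type; \emph{dest} and \emph{ifz} follow from the induction hypothesis on the body together with the value invariant on the scrutinee; and \emph{sub} is immediate, since $\effOne_1 \ple \effOne_2$ forces the two annotations to be parallel.

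The load-bearing case, and the one I expect to be the main obstacle, is \emph{let}, whose annotation is $\effOne = (\id_{\eObjTwo_2} \rtensor \effOne_1); \effOne_2$ with $\eObjTwo_i = \abstraction(\sharp \sem{\contextOne_i})$. Here I would compute the domain of $\effOne$ as that of $\id_{\eObjTwo_2} \rtensor \effOne_1$, namely $\eObjTwo_2 \otimes \eObjOne_1$; the induction hypothesis on $\termOne$ gives $\eObjOne_1 = \abstraction(\sharp \sem{\contextOne_1})$, so the domain is $\abstraction(\sharp \sem{\contextOne_2}) \otimes \abstraction(\sharp \sem{\contextOne_1}) = \abstraction(\sharp \sem{\contextOne_2} \otimes \sharp \sem{\contextOne_1})$, which is exactly $\abstraction(\sharp \sem{\pcontextOne, \contextOne_2, \contextOne_1})$. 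For the codomain I would check that the composite is well-formed: the induction hypothesis on $\termTwo$ forces its domain $\eObjOne_2$ to equal $\abstraction(\sharp \sem{\contextOne_2} \otimes \sharp \sem \typeOne) = \eObjTwo_2 \otimes \abstraction(\sharp \sem \typeOne)$, which is precisely the codomain $\eObjTwo_2 \otimes \eObjOne_1'$ of $\id_{\eObjTwo_2} \rtensor \effOne_1$, so the two effects compose and the codomain of $\effOne$ is $\eObjOne_2' = \abstraction(\sharp \sem \typeTwo)$, as required. The only delicate point is that this entire chain of identities rests on strictness of $\abstraction$ and strict associativity of $\otimes$ in $\catE$: without the strict premonoidal hypothesis the coherence isomorphisms would intervene and the "equality on the nose" asserted in the statement would have to be relaxed to an isomorphism.
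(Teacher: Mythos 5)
The paper gives no proof of this lemma---it is stated with the proof omitted as immediate---so the only thing to compare against is the intended routine argument, and your induction is essentially that argument. The value invariant \( \sharp\sem\contextOne = \sharp\sem\typeOne \), the \textit{let} computation via \( \eObjOne_2 = \eObjTwo_2 \otimes \eObjOne_1' \), and the systematic use of strictness of \( \abstraction \) and of \( \sharp\pcontextOne = \munitt \) are exactly the substance of the proof.

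One step, however, does not close as you describe it. You present the coherence of the effect annotations on arrow, bang and circuit types as an invariant \emph{established} by the induction (at \textit{abs}, \textit{lift}, \textit{circ}) and then \emph{consumed} at \textit{app}, \textit{force}, \textit{apply}. But annotated types also enter a derivation through \textit{var} (and through the domain annotation \( \typeOne \) chosen in \textit{abs}), where nothing constrains them: the axiom \( \vjudgst{\pcontextOne,\varOne:\typeOne}{\varOne}{\typeOne} \) accepts an arrow type in the context whose annotation violates your coherence condition, e.g. \( x : \arroweff{\qubitt}{\qubitt}{\munitt}{\effOne\colon\eObjOne\to\eObjTwo} \) with \( \eObjOne \neq \abstraction(\qubitts) \), and then the \textit{app} case---and indeed the lemma itself---fails. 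So the annotation coherence is not a provable invariant but a well-formedness condition on types that the paper leaves implicit (it is forced by the interpretation, since \( \catM^{\ple\effOne}(\mObjOne,\mObjTwo) \) only makes sense when \( \effOne \in \catE(\abstraction(\mObjOne),\abstraction(\mObjTwo)) \)); your proof should assume it of every type occurring in the judgment rather than derive it. With that adjustment---and modulo the orientation of the tensor \( \semM\mtypeOne \otimes \sharp\sem\typeOne \) versus \( \sharp\sem\typeOne \otimes \semM\mtypeOne \), about which the paper's own displays are not consistent---the argument is complete.
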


We also adjust the typing for configurations.
We write \( \configjudgmentEff \lcOne \circuitOne \termOne \typeOne {\lcOne'} {\effOne : \eObjOne \to \eObjTwo} \) if \( \circuitOne \colon  \lcOne  \to \lcTwo, \lcOne' \) and \( \cjudgeff \lcTwo \termOne \typeOne {\effOne : \eObjOne \to \eObjTwo} \) for some label context \( \lcTwo \) disjoint from \( \lcOne \).
Type preservation (Theorem~\ref{thm:type-preservation}) holds even after this modification.

\begin{example}[A program with information about assertion-based optimization]
  We provide a simple example of how \PQR{} can be used to verify the resource usage of a circuit at the language level.
  We use a program that generates an \emph{inefficient} circuit for a linear nearest-neighbor (LNN) architecture, taken from Figure~3 of~\cite{HanerHT20}.
  In a LNN architecture gates can be applied only to adjacent quibits, and because of this restriction, a programmer (or a compiler) might write an inefficient code following a certain idiom.
  The program\footnotemark in Figure~\ref{fig:prog-for-LNN} is an example of such a program, which generates the left circuit given below.
  \footnotetext{For readability, we did not write the program using the syntax of Section~\ref{sec:simple-type}, but a variant that uses a ML-like syntax.}
  This circuit can be optimized into the one in the right by optimizing the trivial CNOT gates, which are gray in the left circuit.
\begin{figure}[t]
\begin{lstlisting}
-- init4: Circ(I, qubit ⊗ qubit ⊗ qubit ⊗ qubit) [e1]
let (q1, q2, q3, q4) = apply(init4, *) in
-- hadamard : Circ(qubit, qubit) [e2]
let q1 = apply(hadamard,q1) in
-- cnot12 : Circ(qubit ⊗ qubit, qubit ⊗ qubit) [e3]
let (q1, q2) = apply(cnot12, (q1, q2)) in
-- cnot21cnot12 : Circ(qubit ⊗ qubit, qubit ⊗ qubit) [e4]
-- The sequential composition of cnot21 and cnot12
let (q1, q2) = apply(cnot21cnot12, (q1, q2)) in
let (q2, q3) = apply(cnot12, (q2, q3)) in
let (q2, q3) = apply(cnot21cnot21) in
...
return (q1, q2, q3, q4)
\end{lstlisting}
\caption{A program that generates a chain of redundant CNOTs for a LNN architecture.}
\label{fig:prog-for-LNN}
\end{figure}

\begin{quantikz}[column sep=0.5em,row sep=0.2em]
  \ket 0 & \gate{H} &  \ctrl{1} & \targ[style={draw=gray!70}]{}   & \ctrl[style={fill=gray!70, draw=gray!70}]{1} &           &           &          &          &          &           & \ctrl[style={fill=gray!70, draw=gray!70}]{1} & \targ[style={draw=gray!70}]{}   \\
  \ket 0 &          &  \targ{}  & \ctrl[style={fill=gray!70, draw=gray!70}]{-1} & \targ[style={draw=gray!70}]{}  &  \ctrl[]{1} & \targ[style={draw=gray!70}]{}   & \ctrl[style={fill=gray!70, draw=gray!70}]{1} &          & \ctrl[style={fill=gray!70, draw=gray!70}]{1} & \targ[style={draw=gray!70}]{}   & \targ[style={draw=gray!70}]{}  & \ctrl[style={fill=gray!70, draw=gray!70}]{-1} \\
  \ket 0 &          &           &           &          &  \targ{}  & \ctrl[style={fill=gray!70, draw=gray!70}]{-1} & \targ[style={draw=gray!70}]{}  & \ctrl[]{1} & \targ[style={draw=gray!70}]{}  & \ctrl[style={fill=gray!70, draw=gray!70}]{-1} &          &           \\
  \ket 0 &          &           &           &          &           &           &          & \targ{}  &          &           &          &
\end{quantikz}
\qquad\qquad
\begin{quantikz}[column sep=0.5em,row sep=0.2em]
  \ket 0 & \gate{H} &  \ctrl{1} &          &          \\
  \ket 0 &          &  \targ{}  & \ctrl{1} &          \\
  \ket 0 &          &           &  \targ{} & \ctrl{1} \\
  \ket 0 &          &           &          & \targ{}  \\
\end{quantikz}
\par
These circuits are circuits that entangle four qubits as is clear from the right circuit; the left circuit is also a somewhat natural implementation of such a circuit that a compiler may emit.
The three consecutive \( \CNOT \) gates acting on the top two qubits of the left circuit, as well as those acting on the second and third qubits, are implementations of swap gates using \CNOT{} gates.
These swaps are often inserted to naively implement an operation that acts at a distance.
The left circuit above is just an implementation of the following circuit (followed by a simple optimization that removes two consecutive applications of the same \CNOT{} gate) that entangles the qubits by repeatedly applying \CNOT{} to the first qubit and the other qubits.
\[
\begin{quantikz}[column sep=0.5em,row sep=0.2em]
  \ket 0 & \gate{H} &  \permute{2,1} &               &           &               &          & \permute{2,1} & \ctrl{1}\\
  \ket 0 &          &                & \permute{2,1} &           & \permute{2,1} & \ctrl{1} &               & \targ{} \\
  \ket 0 &          &                &               &  \ctrl{1} &               & \targ{}  &               & \\
  \ket 0 &          &                &               &  \targ{}  &               &          &               & \\
\end{quantikz}
= \begin{quantikz}[column sep=0.5em,row sep=0.2em]
  \ket 0 & \gate{H} &  \ctrl{3} & \ctrl{2}         & \ctrl{1}         \\
  \ket 0 &          &           &                  & \targ{}        \\
  \ket 0 &          &           &  \targ{}         &  \\
  \ket 0 &          &  \targ{}  &                  &  \\
\end{quantikz}
\]
We show how, by using the circuit algebra given in Exampe~\ref{ex:optimization}, the effect system can capture that the number of gates of the produced circuit \emph{after being optimized}.
In Figure~\ref{fig:prog-for-LNN}, the types of circuits that are being applied are given as comments, where \( \mathit{Circ}(T, S)^\effOne \) is written as \textsf{Circ(T, S) [e]}.
Here we explain how each \( \effOne_i \) is defined.
The effect annotation \( \effOne_1 \) for the four qubit initialization is defined by \( \effOne_1(S) = ( \{ 0000 \}, 0) \). (We are not counting the initializations as gates.)
For the Hadamard gate, the effect annotation \( \effOne_2 \) is defined by \( \effOne_2(S)= (\{1, 0\}, 1) \); the first element is $\{ 1, 0\}$ because the result after applying the Hadamard gate is a superposition of base states, and the second element $1$ is the count.
The effect annotation \( \effOne_3 \) for the CNOT gate is the one that we explained in Example~\ref{ex:optimization}, which, in particular, satisfies \( e_3(\{00, 10\}) = ( \{00, 11\}, 1 ) \).
Note that this means we have \( q_1 = q_2 \) after line 6.
We can define the effect annotation \( \effOne_4\) for \texttt{cnot21cnot12} to satisfy \( e_4(\{ 00, 11\}) = (\{00, 11\}, 0)\) because \( (\CNOT_{12}\circ \CNOT_{21})(\ket{00}) = \ket{00}\) and \( (\CNOT_{12}\circ \CNOT_{21})(\ket{11}) = \ket{11}\).
That is, \texttt{cnot21cnot12} acts as identity and can be removed if \( q_1 = q_2 \).
Hence, the effect annotation \( \effOne \) (restricted to the first two qubits) for the program from line 1 to line 9, satisfies \( \effOne(\{\varepsilon \}) = (\{00, 11\}, 3)\) meaning that \( q_1 = q_2 \) and we only need three gates, as opposed to five, after the optimization.
\end{example}

\subsection{Categorical Semantics}

\subsubsection{Circuit Monad with Effect Annotation}
We refine the circuit monad from Section~\ref{sec:simple-type} by annotating it with effects.
An established approach to giving a semantics of a type-and-effect system is to use \emph{graded monads}~\cite{Katsumata14,Mellies12}.
We follow this approach and refine the circuit monad as a \emph{category-graded monad}~\cite{OrchardWE20}, which can be thought of as a many-sorted generalization of graded monads.
\begin{definition}[Cat-graded Monads~\cite{OrchardWE20}]
  \label{def:cat-graded-monad}
  A (preorder enriched) category-graded monad (or an \( \category A \)-graded monad) on \( \Set \) consists of a family of endofunctors \( \monad^f \colon \Set \to \Set \) indexed by morphisms \( f \) in \( \category A \)
  and families of natural transformations
  \begin{itemize}
	\item \( \unit_a \colon \Id_{\Set} \to  \monad^{\id_a}\) for \( a \in \obj(\category A) \),
	\item \( \mult_{f, g} \colon \monad^{f} \monad^{g} \to \monad^{f;g} \) for \( f \colon a \to b\) and \( g \colon b \to c \),
  \item  \( \monad^{f \ple f'} \colon \monad^{f} \to \monad^{f'} \) for \( f, f' \colon a \to b \) such that \( f \ple f' \),
  \end{itemize}
  satisfying the following unital and associativity laws.
  \[
  \begin{tikzcd}[row sep=large, column sep=large,wire types={n,n}]
	\monad^{f} \arrow[d,"\monad^f \unit_b"']\arrow[r,"\unit_a \monad^f"]\arrow[rd,equal]& \monad^{\id_a} \monad^f \arrow[d,"\mult_{\id_a, f}"]\\
	\monad^{f} \monad^{\id_b} \arrow[r,"\mult_{f, \id_b}"'] & \monad^f
\end{tikzcd}
\qquad
  \begin{tikzcd}[row sep=large, column sep=large,wire types={n,n}]
	\monad^f \monad^g \monad^h \arrow[d,"\mult_{f, g}\monad_h"']\arrow[r,"\monad^f \mult_{g, h}"] & \monad^f \monad^{g;h} \arrow[d,"\mult_{f, g;h}"]\\
	\monad^{f;g} \monad^h \arrow[r,"\mult_{f;g , h}"'] & \monad^{f;g;h}
  \end{tikzcd}
\]
Moreover, we have the following commutativity concerning the preordering
\[
  \begin{tikzcd}[row sep=large, column sep=large,wire types={n,n}]
	\monad^{f} \arrow[d,"\monad^{f \ple f} "'] \arrow[rd,"\id_{\monad^f}"] \\
	\monad^{f}  \arrow[r,equal] & \monad^f
\end{tikzcd}
\qquad
\begin{tikzcd}[row sep=large, column sep=large,wire types={n,n}]
	\monad^{f} \arrow[d,"\monad^{f \ple g} "'] \arrow[rd,"\monad^{f \ple h}"] \\
	\monad^g  \arrow[r,"\monad^{g \ple h}"'] & \monad^h
\end{tikzcd}
\qquad
  \begin{tikzcd}[row sep=large, column sep=large,, wire types={n,n}]
	\monad^{f} \monad^g \arrow[r,"\monad^{f \ple f'} \monad^{g \ple g'}"] \arrow[d,"\mult_{f,g}"] & \monad^{f'} \monad^{g'} \arrow[d,"\mult_{f',g'}"] \\
	\monad^{f;g}  \arrow[r,"\monad^{f;g \ple f';g'}"] & \monad^{f'; g'}
\end{tikzcd}
  \]
\end{definition}

We are interested in category-graded monads of a specific kind, namely those constructed from abstractions to circuit algebras.
Given \( \abstraction \colon \catM \to \catE \), we define an endofunctor on \( \Set \), parameterized by \( \effOne \), by \( \circmonadEff  {\effOne} (\setObjOne) \defeq \setObjOne \times \catM^{\ple \effOne}(\mObjOne, \mObjTwo) \) where the set \( \catM^{\ple \effOne}(\mObjOne, \mObjTwo) \subseteq \catM(\mObjOne, \mObjTwo)\) is defined as \( \{ \circuitOne \in \catM(\mObjOne, \mObjTwo) \mid \abstraction(\circuitOne) \ple \effOne \}\).
This construction is reminiscent of graded monads arising from effect observations ~\cite{Katsumata14}.\footnotemark
\footnotetext{Our recipe is different from Katsumata's in that (1) we deal with indexed monads rather than ordinary monads, and (2) it is tailored for a specific monad, i.e.~category-action monads.}
There is a caveat to this definition: this is not exactly an \( \catE \)-graded monad, as the grading is defined with respect to a slight modification of E.
The annotation \( \effOne \colon \eObjOne \to \eObjTwo \) does not tell us the type of circuits, but only the type of circuits \emph{after the abstraction}.
To remedy the problem, for the category of grades, we use \( \catEff \) whose objects are those of \( \catM \) and whose homset \( \catEff(\mObjOne, \mObjTwo)  \) is defined as \( \catE(\abstraction(\mObjOne), \abstraction(\mObjTwo)) \).

We spell out the definition of the \( \catEff \)-graded monad constructed from abstraction to circuit algebra.
As mentioned, the endofunctor \( \circmonadEff  {\effOne \colon \mObjOne \to \mObjTwo}\) acts on
an object \( \setObjOne \) as \( \setObjOne \times \catM^{\ple \effOne}(\mObjOne, \mObjTwo) \).
The unit and multiplication are defined exactly the same way as in the (ordinary) circuit monad.
That is, we have
\begin{align*}
  \unit_{\mObjOne, \setObjOne}(x) \defeq (x, \id_{\mObjOne}),
  \qquad \mult_{\effOne_1, \effOne_2, \setObjOne}((x, \circuitOne),\circuitTwo) \defeq (x, \circuitOne; \circuitTwo).
\end{align*}
Note that the multiplication is well-defined because if \( \abstraction(\circuitOne) \ple \effOne_1 \) and \( \abstraction(\circuitTwo) \ple \effOne_2 \), we have \( \abstraction(\circuitOne; \circuitTwo) = \abstraction(\circuitOne); \abstraction(\circuitTwo)  \ple \effOne_1; \effOne_2 \).
Each component of the natural transformation \( {\circmonad*}_{\setObjOne}^{\effOne_1 \ple \effOne_2} \) is the inclusion from \( \setObjOne \times \catM^{\ple \effOne_1}(\mObjOne, \mObjTwo) \) to
\( \setObjOne \times \catM^{\ple \effOne_2}(\mObjOne, \mObjTwo) \).
Moreover, the category-graded monad constructed this way, has premonoidal lifting as in the case for the ordinary circuit monad.
For example, there is a natural transformation \( (\mtypeOne \rtensor - )^{\dagger}_{\effOne\colon \mtypeOne \to \mtypeTwo, X} \colon  \circmonadEff  {\effOne} X \to \circmonadEff {\mtypeOne \rtensor \effOne} X \) that respects unit and the multiplication of the monad, which is defined as \( (x, \circuitOne) \mapsto (x, \mtypeOne \rtensor \circuitOne)\)

\subsubsection{Interpretation}
The interpretation of types can be obtained by replacing \( \catM(\mObjOne, \mObjTwo) \) with \( \catM^{\ple \effOne}(\mObjOne, \mObjTwo) \).
For example, we define
\begin{align*}
  \semP{\circt{\effOne \colon \eObjOne \to \eObjTwo}{\mtypeOne}{\mtypeTwo}}
  &\defeq \catM^{\ple \effOne}(\semM{\mtypeOne}, \semM{\mtypeTwo}) \\
  \sem{\arroweff \typeOne \typeTwo \mtypeOne {\effOne \colon \eObjTwo \to \eObjThree}}
  &\defeq (\functionSet {\flat \sem \typeOne}  {\flat \sem \typeTwo \times \catM^{\ple \effOne} (\sharp \sem \typeOne \otimes \semM \mtypeOne, \sharp \sem \typeTwo)}, \semM{\mtypeOne} ).
\end{align*}
The first element of \( \sem{\arroweff \typeOne \typeTwo \mtypeOne {\effOne \colon \eObjTwo \to \eObjThree}}
 \) can also be written as \( \functionSet {\flat \sem \typeOne}  {\circmonadEff {\effOne \colon \sharp \sem \typeOne \otimes \semM \mtypeOne \to \sharp \sem \typeTwo} (\flat \sem \typeTwo)} \).
The interpretation of type \( \bangeff \typeOne \effOne \) is similar to that of the function type.
Note that we have an isomorphism \( \flat \sem{\arroweff {\mtypeOne} {\mtypeTwo}{} {\effOne}} \cong \semP {\circt {\effOne} {\mtypeOne} {\mtypeTwo}}\) as before.\footnotemark
\footnotetext{By abuse of notation, we also denote this isomorphism as \( \boxsem \).}
The rest of the types are interpreted as in the case of the simple type system.

We now discuss how the judgments are interpreted.
Computational judgments of the shape \( \cjudgeff \contextOne \termOne \typeOne \effOne \) are interpreted as morphisms \( \sem \termOne \colon \flat \sem{\contextOne} \to \circmonadEff {\effOne \colon \sharp \sem{\contextOne} \to \sharp \sem{\typeOne}} \flat \sem \typeOne \) in \( \Set \).
Note that this is almost identical to the interpretation in the simple type system when the interpreted term is regarded as a morphism in \( \Set \).
Although the interpretation of the simply typed terms were given as morphisms in the Kleisli category, we define the interpretation of terms typed in the type-and-effect system in \( \Set \).
This is because the notion of Kleisli category for category-graded monads is tricky to define (see Remark~\ref{rem:Freyd-vs-monad} for a further discussion).
The value judgments remain to be interpreted as morphisms in \( \Set \times \disc(\obj(\catM)) \).

Let us look at the interpretation of \( \returnoperator \) and \( \letoperator \) since they highlight the use of the monad.
As usual, \( \returnoperator \) is interpreted as the interpretation of a value post-composed with the unit.
\begin{align*}
			\left\llbracket\infer{\vjudgst{\contextOne}{\valOne}{\typeOne} \quad \abstraction(\sharp \sem \typeOne) = \eObjOne }
			{\cjudgeff{\contextOne}{\return{\valOne}}{\typeOne} {\id_{\eObjOne} \colon \eObjOne \to \eObjOne}} \right\rrbracket
      \defeq \flat \sem{\vjudgst{\contextOne}{\valOne}{\typeOne} }; \unit_{\sharp \sem \typeOne}
\end{align*}
The interpretation of \( \letoperator \) is essentially the same as how composition is defined in the Kleisli category of a (standard) monad:
\begin{align*}
  &\left\llbracket
    \begin{matrix} %
      \infer{\cjudgeff{\pcontextOne,\contextOne_1}{\termOne}{\typeOne} {\effOne_1 \colon \eObjOne_1 \to \eObjOne_1'}
      \\
      \cjudgeff{\pcontextOne,\contextOne_2,\varOne:\typeOne}{\termTwo}{\typeTwo}  {\effOne_2 \colon \eObjOne_2 \to \eObjOne_2'} \\
        \refine {\sem{\sharp \contextOne_i}} {\eObjTwo_i}  \quad \effOne = (\id_{\eObjTwo_2} \rtensor \effOne_1); \effOne_2
      }
  {\cjudgeff{\pcontextOne,\contextOne_2,\contextOne_1}{\letin{\varOne}{\termOne}{\termTwo}}{\typeTwo} \effOne}
  \end{matrix}
    \right\rrbracket
\end{align*}
is given as
\begin{align*}
  \sem \pcontextOne \times \flat \sem {\contextOne_2} \times \flat \sem{\contextOne_1}
  &\xrightarrow{\Delta \times \id; \cong} \sem \pcontextOne \times \flat \sem {\contextOne_2} \times  \sem \pcontextOne \times \flat \sem{\contextOne_1} \\
  &\xrightarrow{\id \times \unit_{\eObjTwo_2} \times \sem \termOne} \sem \pcontextOne \times \circmonadEff {\id_{\eObjTwo_2}} \flat \sem{\contextOne_2} \times \circmonadEff {\effOne_1 } \flat \sem{\typeOne} \\
  &\xrightarrow{\id \times \rtensorTwo; \strength}
    \circmonadEff {\id \rtensor \effOne_1}(\sem \pcontextOne \times \flat \sem{\contextOne_2} \times \flat \sem \typeOne ) \\
  &\xrightarrow{\circmonadEff {\id \rtensor \effOne_1} \sem \termTwo}
     \circmonadEff {\id \rtensor \effOne_1}(\circmonadEff{\effOne_2}(\flat \sem \typeTwo) ) \\
  &\xrightarrow {\mult_{\id \rtensor \effOne_1, \effOne_2}} \circmonadEff {(\id \rtensor \effOne_1); \effOne_2}(\flat \sem \typeTwo).
\end{align*}
The morphism \( \tau \) is the strength (which can be defined in a straightforward way), and \( \rtensorTwo_{\mtypeOne_1, \mtypeTwo_1, \mtypeOne_2, \mtypeTwo_2}  \) is the morphism\footnotemark
\footnotetext{Aside from this elementary definition, \( \rtensorTwo \) can be defined using liftings of \( \id \rtensor - \) and \( - \ltensor \id \), and the strength and multiplication of the cat-graded monad.}
\begin{gather*}
  \rtensorTwo \colon \circmonadEff {\effOne_1 \colon \mtypeOne_1 \to \mtypeTwo_1} X \times \circmonadEff {\effOne_2 \colon \mtypeOne_2 \to \mtypeTwo_2 } Y \to  \circmonadEff {(\id \rtensor \effOne_1); (\effOne_2 \ltensor \id)  } (X \times Y) \\
  ((x, \circuitOne),  (y, \circuitTwo)) \mapsto ((x,y),  (\mtypeOne_2 \rtensor \circuitOne);(\circuitTwo \ltensor \mtypeTwo_1)).
\end{gather*}
Intuitively, \( \rtensorTwo \) pairs \( x, y \) and, at the same time,  ``parallelly composes'' \( \circuitOne \) and \( \circuitTwo \) in the order \( \circuitOne \) first followed by \( \circuitTwo \).
In the interpretation of \( \letoperator \), it is simply used to parallelly augment wires of the type \( \sharp \sem {\contextOne_2}\) to the circuit generated by \( \termOne \).

The interpretation for the other constructs is also essentially unchanged from the interpretation in Figure~\ref{fig:interpretation-judgment-st}.
A minor difference is that instead of using the functor \( J \) to map value morphisms into the category of computations, we use \( \unit \) as we did for the interpretation of \( \return \valOne \).
For example, we have
\begin{align*}
  \sem{\cjudgeff{\pcontextOne}{\boxt{\mtypeOne}{\valOne}}{\circt{\effOne\colon \eObjOne \to \eObjTwo}{\mtypeOne}{\mtypeTwo}} \nulleff}
  \defeq \flat \sem{\vjudgst{\pcontextOne}{\valOne}{{\arroweff{\mtypeOne}{\mtypeTwo}{I}{\effOne\colon \eObjOne \to \eObjTwo}}}}; \boxsem; \unit_{\munitt, \circt \effOne {\mtypeOne}{\mtypeTwo}}.
\end{align*}
The subsumption rule is interpreted by postcomposing the component of the natural transformation \( \circmonadEff {\effOne_1 \ple \effOne_2 }  \) at \( \sem \typeOne \).
(See Appendix~\ref{appx:interpretation-eff} for the interpretaion of the other constructs.)

\begin{remark}
  \label{rem:Freyd-vs-monad}
  While we used (parameterized) Freyd category in Section~\ref{sec:simple-type}, in this section, we had to define the interpretation explicitly using the structures of monads.
  Freyd categories have a better match with the fine-grained call-by-value calculus that \PQR{} is based on.
  The difficulty lies in defining a suitable notion of ``locally graded category''~\cite{Wood78,Levy19} for category graded monads.
  A locally graded category is a category-like structure whose homsets are indexed by grades (i.e.~elements of a preordered monoid), and this is central to the definition of graded Freyd categories~\cite{GaboardiKOS21}.
  Naively, we may define a category like structure in which homsets are indexed by morphisms, but determining the laws such a structure should satisfy seems non trivial and is left for future work.
\end{remark}

\subsection{Correctness}
Analogous to the case of simple types, the semantics is sound and adequate.
Interpretation of configurations are defined as in Section~\ref{sec:simple-type} by considering \( J \) as a map that associates \( (f,\circuitOne) \in (\Set \times \catM)((\setObjOne, \mObjOne),(\setObjTwo, \mObjTwo))\) to a morphism from \( X \) to \( \setObjTwo \times \catM^{\ple \abstraction(\circuitOne)}(\mObjOne, \mObjTwo) \) in \( \Set \).

\begin{restatable}[Soundness]{theorem}{soundnessEff}
  \label{thm:soundness-eff}
  Suppose that \( \configjudgmentEff \lcOne \circuitOne \termOne \typeOne {\lcOne'}  {\effOne \colon \eObjOne \to \eObjTwo} \) and \( \config \circuitOne \termOne \eval \config {\circuitOne'} \valOne \).
  Then \( \sem{\config \circuitOne \termOne} = J(\sem{\config {\circuitOne'} \valOne}) \).
  \qed
\end{restatable}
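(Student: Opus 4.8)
The plan is to follow the proof of Theorem~\ref{thm:soundness} almost verbatim, proceeding by induction on the (unique) derivation of the big-step judgment $\config \circuitOne \termOne \eval \config {\circuitOne'} \valOne$. Since the operational semantics of \PQR{} coincides with that of \PQC{}, the case split is exactly the one used there; the only genuinely new work is threading the effect grades through each equation and accounting for the subsumption rule. Because typing derivations are no longer unique once \textit{sub} and \textit{ex} are available, I would first reduce the statement to an arbitrary fixed derivation of $\cjudgeff \lcTwo \termOne \typeOne {\effOne}$ and show that adjoining a \textit{sub} step merely post-composes the interpretation with the inclusion natural transformation $\circmonadEff{\effOne_1 \ple \effOne_2}$; since this inclusion commutes with the circuit-prepending performed by $J$, the soundness equality is stable under relaxation of effects, and the induction may then be run on the evaluation derivation alone. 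Throughout I would invoke the extended type preservation result (Theorem~\ref{thm:type-preservation}) to guarantee that both sides are parallel morphisms into the same graded object.

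Before the case analysis I would record the two auxiliary lemmas that mirror the simple-type development. The first is a \emph{substitution lemma}: interpreting $\termOne\sub{\valOne}{\varOne}$ coincides with precomposing $\sem{\termOne}$ by $\flat\sem{\valOne}$ in $\Set$, compatibly with the graded monad. The key point for the effect system is that substituting a value leaves the grade untouched, because a value contributes only the identity circuit and $\abstraction(\id) = \id$. The second is the compositionality property of $J$: the morphism $\sem{\config \circuitOne \termOne}$ is obtained by prepending $\circuitOne$ to $\sem{\termOne}$ via the (now graded) action of $J$ landing in $\catM^{\ple\abstraction(\circuitOne)}$. With these in hand, the cases \textit{return}, \textit{force}, \textit{dest}, \textit{if-zero} and \textit{app} are routine: \textit{return} holds by the definition of $\unit$, \textit{app} by the substitution lemma applied to the $\beta$-redex, and the remaining structural cases carry the grade unchanged since none of them modify the underlying circuit.

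The genuinely delicate cases are \textit{let}, \textit{apply} and \textit{box}. For \textit{let} I must check that the semantic composite built from the strength $\strength$, the double premonoidal map $\rtensorTwo$ (equivalently the lifting $\id \rtensor \effOne_1$) and the multiplication $\mult$ matches the operational behaviour, in which the circuit produced by $\termOne$ is threaded into the evaluation of $\termTwo$ while the wires of $\sharp\contextOne_2$ are carried alongside in parallel; this is the categorical counterpart of sequential circuit composition with parallel wire augmentation, and the two inductive hypotheses plug in directly once the grade $(\id \rtensor \effOne_1);\effOne_2$ is recognized as arising from $\mult$. For \textit{apply} I would identify the action of $\appendfunction$ on circuits with $\applysem$ post-composed with the premonoidal lifting, and for \textit{box} I would match $\freshlabelsfunction$ and the evaluation-in-isolation with the isomorphism $\boxsem$ of Proposition~\ref{prop:box-as-iso}. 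I expect the main obstacle to be precisely the interaction of these combinators with the grade inclusions introduced by \textit{sub}: one must verify that $\strength$, $\mult$ and the premonoidal liftings all commute with $\circmonadEff{\effOne \ple \effOne'}$, so that the chosen grade on each side of the equation agrees. This is exactly the content of the coherence diagrams in Definition~\ref{def:cat-graded-monad}, so the difficulty is organizational---invoking these squares in the right order---rather than conceptual.
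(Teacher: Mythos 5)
Your plan is correct and coincides with the paper's own treatment: the paper proves Theorem~\ref{thm:soundness} by induction on the evaluation derivation using the substitution lemma, and for the effect system it simply states that ``the soundness \ldots{} can be shown by almost the same argument,'' which is precisely the strategy you lay out. The extra details you identify---that values contribute identity circuits so substitution leaves grades untouched, and that the \textit{sub} rule is absorbed because the interpretation of subsumption is post-composition with a set inclusion commuting with $J$---are exactly the (unstated) bookkeeping the paper's deferral relies on.
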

\begin{restatable}[Computational adequacy]{theorem}{computationalAdequacyEff}
  Suppose that \( \configjudgmentEff \emptycontext \circuitOne \termOne \unitt \emptycontext {\effOne : \eObjOne \to \eObjUnit} \) and \( \sem{\config \circuitOne \termOne} = J(\sem{\config \circuitTwo \valOne})\).
  Then \( \config \circuitOne \termOne \eval \config \circuitTwo \valOne \) (possibly up to renaming of labels).
  Moreover, there must exist a circuit \( \circuitThree \) such that \( \abstraction(\circuitThree) \ple \effOne \) and \( \circuitOne; \circuitThree = \circuitTwo \).
  \qed
\end{restatable}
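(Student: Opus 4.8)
The plan is to reduce the first assertion to the computational adequacy already proved for the simple type system, and to extract the witness circuit \( \circuitThree \) of the \emph{moreover} clause directly from the graded-monad structure. The guiding observation is that the effect system is a semantic \emph{refinement} of the simple one: the category-graded monad \( \circmonadEff{\effOne}(\setObjOne) = \setObjOne \times \catM^{\ple\effOne}(\mObjOne,\mObjTwo) \) differs from \( \monad(\mObjOne,\mObjTwo,\setObjOne) = \setObjOne \times \catM(\mObjOne,\mObjTwo) \) only in that the circuit component is constrained to the subset \( \catM^{\ple\effOne}(\mObjOne,\mObjTwo) \subseteq \catM(\mObjOne,\mObjTwo) \), while unit, multiplication, strength and premonoidal liftings are given by the \emph{same} formulas. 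Consequently the family of inclusions \( \iota\colon \circmonadEff{\effOne}(\setObjOne) \hookrightarrow \monad(\mObjOne,\mObjTwo,\setObjOne) \) is compatible with all of this structure and with the two definitions of \( J \); postcomposing with \( \iota \) therefore turns an effect-system interpretation into the corresponding simple-system interpretation.

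First I would introduce the effect-erasure \( \lfloor\cdot\rfloor \), which deletes every effect annotation (sending \( \arroweff\typeOne\typeTwo\mtypeOne\effOne \) to \( \arrowst\typeOne\typeTwo\mtypeOne \), \( \bangeff\typeOne\effOne \) to \( \bang\typeOne \), and \( \circt\effOne\mtypeOne\mtypeTwo \) to \( \circt{}\mtypeOne\mtypeTwo \)), and prove two routine facts: erasure preserves typing, so \( \cjudgeff\contextOne\termOne\typeOne\effOne \) yields \( \cjudgst{\lfloor\contextOne\rfloor}{\lfloor\termOne\rfloor}{\lfloor\typeOne\rfloor} \); and erasure commutes with evaluation, which is immediate because the reduction rules never inspect effect annotations. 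The crux is then a \emph{compatibility lemma} stating that \( \iota \circ \sem{\termOne}_{\mathrm{eff}} = \sem{\lfloor\termOne\rfloor}_{\mathrm{simple}} \) for every derivable computation judgment (and analogously for values and configurations), proved by induction on typing derivations. Granting this, the first assertion is quick: postcomposing both sides of the hypothesis \( \sem{\config\circuitOne\termOne} = J(\sem{\config\circuitTwo\valOne}) \) with \( \iota \) and invoking the compatibility lemma gives the simple-system equality \( \sem{\config\circuitOne{\lfloor\termOne\rfloor}}_{\mathrm{simple}} = J(\sem{\config\circuitTwo{\lfloor\valOne\rfloor}}_{\mathrm{simple}}) \); since the typing hypothesis erases to \( \configjudgment\emptycontext\circuitOne{\lfloor\termOne\rfloor}\unitt\emptycontext \), the simple-type adequacy theorem yields \( \config\circuitOne{\lfloor\termOne\rfloor}\eval\config\circuitTwo{\lfloor\valOne\rfloor} \) up to renaming of labels, and commutation of erasure with evaluation transports this back to \( \config\circuitOne\termOne\eval\config\circuitTwo\valOne \).

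For the \emph{moreover} clause, let \( \lcTwo \) be the label context with \( \circuitOne\colon\emptycontext\to\lcTwo \) and \( \cjudgeff\lcTwo\termOne\unitt\effOne \) supplied by the configuration judgment. Because \( \flat\sem{\lcTwo} = 1 \), \( \sharp\sem\unitt = \munitt \) and \( \flat\sem\unitt = 1 \), the interpretation \( \sem{\termOne}_{\mathrm{eff}} \) is a map \( 1 \to \circmonadEff{\effOne}(1) = 1 \times \catM^{\ple\effOne}(\semM\lcTwo,\munitt) \); evaluating it at the unique point of \( 1 \) produces a pair whose second component is a circuit \( \circuitThree \) with \( \abstraction(\circuitThree)\ple\effOne \) by the very definition of \( \catM^{\ple\effOne} \). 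This \( \circuitThree \) is the required witness. Finally, unfolding the interpretation of configurations shows that the circuit component of \( \sem{\config\circuitOne\termOne} \) is the composite \( \circuitOne;\circuitThree \) (the monad multiplication composes the accumulated circuit \( \circuitOne \) with the freshly generated \( \circuitThree \)), whereas that of \( J(\sem{\config\circuitTwo\valOne}) \) is \( \circuitTwo \), since \( \valOne = \unitv \) contributes only trivial data; comparing circuit components in the equality assumed by the hypothesis gives \( \circuitOne;\circuitThree = \circuitTwo \).

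The step I expect to be the main obstacle is the compatibility lemma, specifically the clauses whose effect-system interpretation is spelled out directly in \( \Set \) using \( \mult \), \( \strength \) and the parallel-composition map \( \rtensorTwo \)—notably \( \letoperator \), \( \app\valOne\valTwo \) and \( \apply\valOne\valTwo \)—while the simple-system interpretation of the same constructs lives in the Kleisli category \( \Set_{\circmonad*} \) and is phrased via its premonoidal product and \( \ev \). Verifying that these two presentations coincide after inclusion amounts to checking that the explicit \( \Set \)-level Kleisli composition for \( \circmonadEff{} \) agrees, under \( \iota \), with the abstract Kleisli composition of Section~\ref{sec:categorica-semantics-st}. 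This is conceptually routine but notationally heavy, and it is the one place where the mismatch noted in Remark~\ref{rem:Freyd-vs-monad}—the lack of a convenient Freyd-categorical packaging for category-graded monads—has to be handled by hand.
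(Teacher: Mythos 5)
Your proposal is correct, but it takes a genuinely different route from the paper's. The paper does not factor through the simply-typed theorem: it reruns the logical-relations argument of Appendix~\ref{appx:st-proofs} with effect-refined relations, i.e.\ the relation for computations at \( \lcOne \vdash \typeOne \) with annotation \( \effOne \) now lives in \( (\flat\sem{\typeOne}\times\catM^{\ple\effOne}(\semM{\lcOne},\sharp\sem{\typeOne}))\times\{\termOne \mid \cjudgeff{\lcOne}{\termOne}{\typeOne}{\effOne}\} \), with the arrow, thunk and circuit clauses adjusted accordingly; both the evaluation claim and the \emph{moreover} clause then fall out of the fundamental property of these refined relations. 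You instead reduce the first claim to the already-proved simply-typed adequacy via effect erasure together with a compatibility lemma for the inclusions \( \iota\colon\circmonadEff{\effOne}(\setObjOne)\hookrightarrow\monad(\mObjOne,\mObjTwo,\setObjOne) \), and you obtain the \emph{moreover} clause by reading the circuit component directly off the graded interpretation --- which, as your argument makes clear, needs only the hypothesis equality and the well-typedness of the interpretation, not adequacy at all; that is a genuinely cleaner separation of concerns for that clause. The trade-offs: your compatibility lemma is itself an induction over all typing rules (including \emph{sub}, where \( \iota \) absorbs the coercion \( \monad^{\effOne_1\ple\effOne_2} \)), so the total case analysis is comparable in size to redoing the logical relations; it must be phrased as a commuting square \( \iota_{\contextOne};\sem{\lfloor\termOne\rfloor}=\sem{\termOne}_{\mathrm{eff}};\iota_{\typeOne} \) over context injections rather than bare postcomposition, since \( \flat\sem{\contextOne} \) itself changes under erasure at arrow and bang types; and erasure must also strip the type annotations inside \( \lambda \)-abstractions so that erased terms are literally \PQC{} terms. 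None of these is a gap --- you have correctly identified the compatibility lemma as the place where the work sits --- but they are the bookkeeping your sketch leaves implicit.
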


\section{Discussion: Dependent Types}
\label{sec:dependent-type}

One feature that some of the languages of the \PQ{} family have is a form of dependent types~\cite{FuKRS20,FuKS22,ColledanDalLago24a,ColledanDalLago25}.
Dependent types are useful in the context of quantum circuit programming because one can express, at the level of types, the number of qubits a function takes as input.
For example, a function implementing the so-called Quantum Fourier Transform has the type \( \mathit{qft} \colon \prod (n : \natt),  \vect \qubitt n \multimap \vect \qubitt n \).
We briefly discuss the possibility of adding dependent types to the languages and models from the previous sections.

The denotational model given in Section~\ref{sec:simple-type} can be straightforwardly extended to support some dependent types.
We can apply the families construction as shown in Figure~\ref{fig:model-overview-dependent}.
Since ordinary adjunctions lift to fibered adjunctions over \( \Set \) by a pointwise definition, this construction gives an instance of a parameterized version of the fibered adjunction models~\cite{AhmanGP16}.
Fibered adjunction models are models for a language with dependent types and computational effects.
On the syntax side, this means allowing dependent types with ``value restriction'': types can only depend on values with parameter type, which correspond to morphisms in \( \Set \).
Although such a type system is less expressive compared to the type system of \PQD{}~\cite{FuKS22} that
allows types to depend on (the shape of) terms with quantum data types, it is expressive enough to express the type for \( \mathit{qft} \) we described above.

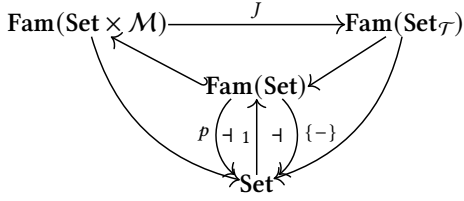
\begin{figure}
  \[
  \begin{tikzcd}[wire types = {n,n,n,n}]
    \Fam{\Set \times\catM} \arrow[rddd, bend right = 30] \arrow[rr, "J"]
    &
    & \Fam{\Set_{\mathcal T}} \arrow[lddd, bend left = 30] \arrow[ld, end anchor=east]
    \\ %
    & \Fam{\Set}
      \arrow[dd,"p"'{name=p}, bend right=50, start anchor={[xshift=-1ex]}, end anchor={[xshift=-1ex]}, pos=.45]
      \arrow[dd,"\{-\}"{name=comp}, bend left=50, start anchor={[xshift=1ex]}, end anchor={[xshift=1ex]}, pos=.45]
      \arrow[lu,"",hook', start anchor=west]
    &
    \\
    & &  \\
    & \Set  \arrow[uu,"1"{name=term}]&
    \arrow[phantom, from=p, to=term, "\dashv", style = {font = \scriptsize},pos=.6]
    \arrow[phantom, from=term, to=comp, "\dashv", style = {font = \scriptsize}]
  \end{tikzcd}
  \]
  \caption{Model for the dependently typed \PQR{}. The functors \( 1 \) and \( \{-\}\) are the terminal object functor and the comprehension functor, respectively. }
  \label{fig:model-overview-dependent}
\end{figure}

What is more challenging is to model effect annotations that are dependent on terms.
\citet{ColledanDalLago24a,ColledanDalLago25} considered a type system in which the function \( \mathit{qft} \) has the type \( \mathit{qft} \colon \prod (n : \natt),  \vect \qubitt n \stackrel{n \colon n \to n}{\multimap} \vect \qubitt n \), where the \( n \) over the arrow is the effect annotation expressing the estimated width of the circuit \( \mathit{qft} \) generates.\footnotemark
\footnotetext{The syntax we are using here is not exactly the syntax used in~\cite{ColledanDalLago24a,ColledanDalLago25}}
Their syntax forces annotations to be arithmetic terms depending on arithmetic variables so that type inference based on SMT-solving can semi-automatically infer the effect information.
If we are only interested in annotations that are arithmetic expressions, then it seems possible to write out the interpretation just by indexing the interpretation we described in Section~\ref{sec:effect-system} with (tuples of) natural numbers.
However, it is not clear (a) to what extent the effect annotations can be generalized and (b) whether there is a categorical explanation (e.g.~an explanation in the form of some generalization of fibered monads) that captures the nature of the interpretation.
Further investigation is left for future work.

\section{Related Work}
\label{sec:related-work}

Giving a denotational semantics to quantum programming languages has generally been more challenging than the corresponding problem for classical languages. Even in the case of imperative QRAM languages, this task is not so simple, given that quantum data can be interpreted by semantic domains which are different from the usual ones, see~\cite{Ying16} for an overview.

When, in addition to a quantum store, the underlying language is also endowed with higher-order functions, the task becomes even more complicated.
A few years after the quantum \( \lambda \)-calculus~\cite{SelingerValiron05} was introduced, a fully abstract model for the \emph{linear fragment} of the quantum \( \lambda \)-calculus was given by using the category \( \CPM \) of completely positive maps~\cite{SelingerValiron08}.
However, designing a model of (variants of) the full quantum \( \lambda \)-calculus remained a challenge.
One difficulty is the tension between the finite and the infinite.
The category \( \CPM \) or its subcategory \( \Q \) of trace-preserving completely positive maps are inherently finite since their definition relies on finite Hilbert spaces, whereas quantum \( \lambda \)-calculus comes with infinite features such as the \( ! \) modality or term level recursion.
To overcome this difficulty, various approaches have been studied.
These include (but are not limited to) studies based on presheaves~\cite{MalherbeSS13}, (\( \Sigma \)-monoid) enriched presheaves~\cite{TsukadaAsada24}, quantitative semantics of linear logic~\cite{PaganiSV14}, operator algbra~\cite{ChoWesterbaan16}, geometry of interaction~\cite{YoshimizuHFL14,HasuoHoshino17} and game semantics~\cite{ClairambaultdVW19,ClairambaultdeVisme20}.
The first four approaches can be considered as taking a certain ``completion'' of \( \CPM \) or \( \Q \) to support higher-order and infinite types.
The latter two approaches are more operational, since they are based on interactive semantics.
We note that some of these~\cite{PaganiSV14,ClairambaultdeVisme20,TsukadaAsada24} are fully abstract models of the full quantum \( \lambda \)-calculus.

In CDLs, the types of problems encountered in giving a denotational semantics are different.
On the one hand, a semantics of CDL needs to cope with the distinction between circuit \emph{generation} time and circuit \emph{execution} time that does not exist in languages such as the quantum \( \lambda \)-calculus. Moreover, circuits can, like in \Quipper, support specific operators which do not exist for other data structures and which allow programs to be
interpreted as circuits e.g.,  \Quipper's $\mathit{box}$ operator.
On the other hand, giving denotational semantics to CDLs is somehow easier because the characteristics of quantum circuits, say, compared to Boolean circuits, are abstracted away. This simplicity, however, no longer holds if the CDL allows an interplay between the host and the circuit level language, known as \emph{dynamic lifting}.

As already mentioned, since the introduction of the \PQ\ family with \PQS\ and \PQM, a presheaf-based denotational semantics has been known to be adequate~\cite{RiosSelinger17,LindenhoviusMZ18}.
Later, extensions of the language with dependent types~\cite{FuKS22} and dynamic lifting~\cite{FuKRS23,FuKRS22} have been considered.
The model for dependent types uses the families construction as discussed in Section~\ref{sec:dependent-type}.
At a superficial level, the model for dynamic lifting also shares an idea with the semantics we introduced in Section~\ref{sec:effect-system}.
Both models hold a morphism representing a quantum circuit and a morphism representing its interpretation, either as a quantum operation or effect annotation.
However, the foundations of these models remain fundamentally unaltered in that they use presheaves, and further comparison with our models is left for future work.
It should be noted that the members of the \PQ\ family intended for the analysis of circuit size, like \PQR{}~\cite{ColledanDalLago24a}, lack a denotational account, although being solidly grounded from an operational point of view. We believe this to be a result of the intrinsic difficulty of reflecting intensional properties of the underlying circuit in the aforementioned presheaf-based semantics.

Another commonly used type of CDLs, aside from the \PQ\ family, consists of those with a clear stratification between classical and quantum layers.
\QWIRE{}~\cite{PaykinRZ17} and \EWIRE{}~\cite{RennelaStaton19} are languages that have a dedicated language for circuits with a linear type system that can be embedded into a non-linear host language.
On the semantic side, this embedding has been nicely captured using enriched category theory~\cite{RennelaStaton19}.
\VQPL{}~\cite{JiaKLMZ22} is another quantum programming language with two subcalculi, one for classical programs and the other for quantum programs, where the quantum programs are more high-level than mere circuits.
They support rich features such as classical recursive types, inductive quantum types and dynamic lifting.
Furthermore, \VQPL{} has an adequate denotational model that unites domain-theoretic models of classical programming and von Neumann algebras for quantum interpretation.
Overall, these languages cannot have data structures with \emph{both} quantum data and classical data, unlike languages in the \PQ{} family. While our semantics also has a clear separation between \( \Set \) and the category of circuits, our languages allow \emph{mixing} classical types and quantum types. Our key observation here is that the introduction of the closure types allows us to decompose these mixed types into classical and quantum parts.

Since the pioneering works of \citet{Moggi89,Moggi91} it can certainly be said that the concept of a monad is one of the most powerful mathematical constructions in giving meaning to computational effects.
Various generalizations on plain monads, such as indexed (aka parameterized) monads~\cite{Atkey09}, graded monads~\cite{Katsumata14,Mellies12} or category-graded monads~\cite{OrchardWE20} have been proposed in the literature.
The circuit monads in Section~\ref{sec:categorica-semantics-st} and Section~\ref{sec:effect-system} can be seen as an instance of an indexed monad and category-graded monads, respectively.
Defining effect annotations as abstractions of computational effects is a general idea that could be applied to other things besides circuits.
Identifying the essences of the construction in Section~\ref{sec:effect-system} and deriving a recipe to construct category-graded monad might be  of independent interest.
Additionally, as discussed in Remark~\ref{rem:Freyd-vs-monad}, a ``Freyd category'' for category-graded monads appears to be absent from the literature and is worth investigating further.
The literature, by the way, offers some examples of circuit monads~\cite{Valiron16,Elliott13}, none of which has been applied to languages in the \PQ{} family.

\section{Conclusion}
\label{sec:conclusion}

In this work, we introduce a monadic denotational semantic model for circuit description languages in which the role of the monad is played by a circuit construction. This way, the structure of the produced circuit can be observed as an effect, making the model potentially capable of reflecting intensional features of circuits produced by terms of any type. This allows us to give semantics to members of the \PQ\ family for which a denotational account is not, to the authors' knowledge, known.
Remarkably, by considering an abstract notion of circuit algebra, different notions of circuit metrics are treated uniformly, including simple forms of circuit optimization; we believe that this abstraction may help reveal further concrete circuit metrics.

Among the future developments of this work, we must certainly mention the study of denotational semantic models for \PQRA, a recently introduced member of the \PQ\ family able to support the analysis of a wide range of circuit metrics via types.
Some of its features, such as effect and dependent typing, have already been treated separately (see Sections~\ref{sec:effect-system} and \ref{sec:dependent-type}), but the study of their combination in the same semantic framework is left to future work.
Although our work primarily focuses on denotational semantics, it is natural to consider implementing the type system described in  Section~\ref{sec:effect-system}, e.g., by adapting or extending the QuRA tool~\cite{ColledanDalLago25}.
While supporting the circuit algebra examples discussed in this paper should not pose major technical challenges, aspects such as efficient automatic type inference merit further investigation.
We plan to explore these directions in future work.
We would also like to extend the language and the model with features of \Quipper{} that we did not cover in this paper such as term and type level recursion and dynamic lifting.

\begin{acks}
  The research leading to these results has received funding from the
  MUR grant PRIN 2022 PNRR No. P2022HXNSC - ``Resource Awareness in Programming'' and
  European Union - NextGenerationEU through the Italian Ministry of University and Research under PNRR - M4C2 - I1.4 Project CN00000013 ``National Centre for HPC, Big Data and Quantum Computing''.
\end{acks}

\bibliographystyle{ACM-Reference-Format}
\bibliography{IEEEabrv,main}


\begin{thebibliography}{59}


\ifx \showCODEN    \undefined \def \showCODEN     #1{\unskip}     \fi
\ifx \showISBNx    \undefined \def \showISBNx     #1{\unskip}     \fi
\ifx \showISBNxiii \undefined \def \showISBNxiii  #1{\unskip}     \fi
\ifx \showISSN     \undefined \def \showISSN      #1{\unskip}     \fi
\ifx \showLCCN     \undefined \def \showLCCN      #1{\unskip}     \fi
\ifx \shownote     \undefined \def \shownote      #1{#1}          \fi
\ifx \showarticletitle \undefined \def \showarticletitle #1{#1}   \fi
\ifx \showURL      \undefined \def \showURL       {\relax}        \fi
\providecommand\bibfield[2]{#2}
\providecommand\bibinfo[2]{#2}
\providecommand\natexlab[1]{#1}
\providecommand\showeprint[2][]{arXiv:#2}

\bibitem[Ahman et~al\mbox{.}(2016)]%
        {AhmanGP16}
\bibfield{author}{\bibinfo{person}{Danel Ahman}, \bibinfo{person}{Neil Ghani},
  {and} \bibinfo{person}{Gordon~D. Plotkin}.} \bibinfo{year}{2016}\natexlab{}.
\newblock \showarticletitle{Dependent Types and Fibred Computational Effects}.
  In \bibinfo{booktitle}{\emph{Foundations of Software Science and Computation
  Structures - 19th International Conference, {FOSSACS} 2016, Held as Part of
  the European Joint Conferences on Theory and Practice of Software, {ETAPS}
  2016, Eindhoven, The Netherlands, April 2-8, 2016, Proceedings}}
  \emph{(\bibinfo{series}{Lecture Notes in Computer Science},
  Vol.~\bibinfo{volume}{9634})}, \bibfield{editor}{\bibinfo{person}{Bart
  Jacobs} {and} \bibinfo{person}{Christof L{\"{o}}ding}} (Eds.).
  \bibinfo{publisher}{Springer}, \bibinfo{pages}{36--54}.
\newblock
\href{https://doi.org/10.1007/978-3-662-49630-5\_3}{doi:\nolinkurl{10.1007/978-3-662-49630-5\_3}}


\bibitem[Amy(2019)]%
        {AmyTS19}
\bibfield{author}{\bibinfo{person}{Matthew Amy}.}
  \bibinfo{year}{2019}\natexlab{}.
\newblock \showarticletitle{Sized Types for Low-Level Quantum Metaprogramming}.
  In \bibinfo{booktitle}{\emph{Reversible Computation - 11th International
  Conference, {RC} 2019, Lausanne, Switzerland, June 24-25, 2019, Proceedings}}
  \emph{(\bibinfo{series}{Lecture Notes in Computer Science},
  Vol.~\bibinfo{volume}{11497})},
  \bibfield{editor}{\bibinfo{person}{Michael~Kirkedal Thomsen} {and}
  \bibinfo{person}{Mathias Soeken}} (Eds.). \bibinfo{publisher}{Springer},
  \bibinfo{pages}{87--107}.
\newblock
\href{https://doi.org/10.1007/978-3-030-21500-2\_6}{doi:\nolinkurl{10.1007/978-3-030-21500-2\_6}}


\bibitem[Atkey(2009)]%
        {Atkey09}
\bibfield{author}{\bibinfo{person}{Robert Atkey}.}
  \bibinfo{year}{2009}\natexlab{}.
\newblock \showarticletitle{Parameterised notions of computation}.
\newblock \bibinfo{journal}{\emph{J. Funct. Program.}} \bibinfo{volume}{19},
  \bibinfo{number}{3-4} (\bibinfo{year}{2009}), \bibinfo{pages}{335--376}.
\newblock
\href{https://doi.org/10.1017/S095679680900728X}{doi:\nolinkurl{10.1017/S095679680900728X}}


\bibitem[Benton(1994)]%
        {Benton94}
\bibfield{author}{\bibinfo{person}{P.~N. Benton}.}
  \bibinfo{year}{1994}\natexlab{}.
\newblock \showarticletitle{A Mixed Linear and Non-Linear Logic: Proofs, Terms
  and Models (Extended Abstract)}. In \bibinfo{booktitle}{\emph{Computer
  Science Logic, 8th International Workshop, {CSL} '94, Kazimierz, Poland,
  September 25-30, 1994, Selected Papers}} \emph{(\bibinfo{series}{Lecture
  Notes in Computer Science}, Vol.~\bibinfo{volume}{933})},
  \bibfield{editor}{\bibinfo{person}{Leszek Pacholski} {and}
  \bibinfo{person}{Jerzy Tiuryn}} (Eds.). \bibinfo{publisher}{Springer},
  \bibinfo{pages}{121--135}.
\newblock
\href{https://doi.org/10.1007/BFB0022251}{doi:\nolinkurl{10.1007/BFB0022251}}


\bibitem[Bettelli et~al\mbox{.}(2003)]%
        {BettelliCS03}
\bibfield{author}{\bibinfo{person}{S. Bettelli}, \bibinfo{person}{T. Calarco},
  {and} \bibinfo{person}{L. Serafini}.} \bibinfo{year}{2003}\natexlab{}.
\newblock \showarticletitle{Toward an architecture for quantum programming}.
\newblock \bibinfo{journal}{\emph{The European Physical Journal D - Atomic,
  Molecular and Optical Physics}} \bibinfo{volume}{25}, \bibinfo{number}{2}
  (\bibinfo{date}{Aug.} \bibinfo{year}{2003}), \bibinfo{pages}{181–200}.
\newblock
\showISSN{1434-6079}
\href{https://doi.org/10.1140/epjd/e2003-00242-2}{doi:\nolinkurl{10.1140/epjd/e2003-00242-2}}


\bibitem[Cho and Westerbaan(2016)]%
        {ChoWesterbaan16}
\bibfield{author}{\bibinfo{person}{Kenta Cho} {and} \bibinfo{person}{Abraham
  Westerbaan}.} \bibinfo{year}{2016}\natexlab{}.
\newblock \showarticletitle{Von Neumann Algebras form a Model for the Quantum
  Lambda Calculus}.
\newblock \bibinfo{journal}{\emph{CoRR}}  \bibinfo{volume}{abs/1603.02133}
  (\bibinfo{year}{2016}).
\newblock
\showeprint[arXiv]{1603.02133}
\urldef\tempurl%
\url{http://arxiv.org/abs/1603.02133}
\showURL{%
\tempurl}


\bibitem[Clairambault and de~Visme(2020)]%
        {ClairambaultdeVisme20}
\bibfield{author}{\bibinfo{person}{Pierre Clairambault} {and}
  \bibinfo{person}{Marc de Visme}.} \bibinfo{year}{2020}\natexlab{}.
\newblock \showarticletitle{Full abstraction for the quantum lambda-calculus}.
\newblock \bibinfo{journal}{\emph{Proc. {ACM} Program. Lang.}}
  \bibinfo{volume}{4}, \bibinfo{number}{{POPL}} (\bibinfo{year}{2020}),
  \bibinfo{pages}{63:1--63:28}.
\newblock
\href{https://doi.org/10.1145/3371131}{doi:\nolinkurl{10.1145/3371131}}


\bibitem[Clairambault et~al\mbox{.}(2019)]%
        {ClairambaultdVW19}
\bibfield{author}{\bibinfo{person}{Pierre Clairambault}, \bibinfo{person}{Marc
  de Visme}, {and} \bibinfo{person}{Glynn Winskel}.}
  \bibinfo{year}{2019}\natexlab{}.
\newblock \showarticletitle{Game semantics for quantum programming}.
\newblock \bibinfo{journal}{\emph{Proc. {ACM} Program. Lang.}}
  \bibinfo{volume}{3}, \bibinfo{number}{{POPL}} (\bibinfo{year}{2019}),
  \bibinfo{pages}{32:1--32:29}.
\newblock
\href{https://doi.org/10.1145/3290345}{doi:\nolinkurl{10.1145/3290345}}


\bibitem[Colledan and Dal~Lago(2023)]%
        {ColledanDalLago23}
\bibfield{author}{\bibinfo{person}{Andrea Colledan} {and} \bibinfo{person}{Ugo
  Dal~Lago}.} \bibinfo{year}{2023}\natexlab{}.
\newblock \showarticletitle{{On Dynamic Lifting and Effect Typing in Circuit
  Description Languages}}. In \bibinfo{booktitle}{\emph{28th International
  Conference on Types for Proofs and Programs (TYPES 2022)}}
  \emph{(\bibinfo{series}{Leibniz International Proceedings in Informatics
  (LIPIcs)}, Vol.~\bibinfo{volume}{269})},
  \bibfield{editor}{\bibinfo{person}{Delia Kesner} {and}
  \bibinfo{person}{Pierre-Marie P\'{e}drot}} (Eds.).
  \bibinfo{publisher}{Schloss Dagstuhl -- Leibniz-Zentrum f{\"u}r Informatik},
  \bibinfo{address}{Dagstuhl, Germany}, \bibinfo{pages}{3:1--3:21}.
\newblock
\showISBNx{978-3-95977-285-3}
\showISSN{1868-8969}
\href{https://doi.org/10.4230/LIPIcs.TYPES.2022.3}{doi:\nolinkurl{10.4230/LIPIcs.TYPES.2022.3}}


\bibitem[Colledan and Dal~Lago(2024)]%
        {ColledanDalLago24a}
\bibfield{author}{\bibinfo{person}{Andrea Colledan} {and} \bibinfo{person}{Ugo
  Dal~Lago}.} \bibinfo{year}{2024}\natexlab{}.
\newblock \showarticletitle{Circuit Width Estimation via Effect Typing and
  Linear Dependency}. In \bibinfo{booktitle}{\emph{Programming Languages and
  Systems - 33rd European Symposium on Programming, {ESOP} 2024, Held as Part
  of the European Joint Conferences on Theory and Practice of Software, {ETAPS}
  2024, Luxembourg City, Luxembourg, April 6-11, 2024, Proceedings, Part {II}}}
  \emph{(\bibinfo{series}{Lecture Notes in Computer Science},
  Vol.~\bibinfo{volume}{14577})}, \bibfield{editor}{\bibinfo{person}{Stephanie
  Weirich}} (Ed.). \bibinfo{publisher}{Springer}, \bibinfo{pages}{3--30}.
\newblock
\href{https://doi.org/10.1007/978-3-031-57267-8\_1}{doi:\nolinkurl{10.1007/978-3-031-57267-8\_1}}


\bibitem[Colledan and Dal~Lago(2025)]%
        {ColledanDalLago25}
\bibfield{author}{\bibinfo{person}{Andrea Colledan} {and} \bibinfo{person}{Ugo
  Dal~Lago}.} \bibinfo{year}{2025}\natexlab{}.
\newblock \showarticletitle{Flexible Type-Based Resource Estimation in Quantum
  Circuit Description Languages}.
\newblock \bibinfo{journal}{\emph{Proc. ACM Program. Lang.}}
  \bibinfo{volume}{9}, \bibinfo{number}{POPL}, Article \bibinfo{articleno}{47}
  (\bibinfo{date}{Jan.} \bibinfo{year}{2025}), \bibinfo{numpages}{31}~pages.
\newblock
\href{https://doi.org/10.1145/3704883}{doi:\nolinkurl{10.1145/3704883}}


\bibitem[Developers(2025)]%
        {Circ25}
\bibfield{author}{\bibinfo{person}{Cirq Developers}.}
  \bibinfo{year}{2025}\natexlab{}.
\newblock \bibinfo{booktitle}{\emph{Cirq}}.
\newblock
\href{https://doi.org/10.5281/zenodo.15191735}{doi:\nolinkurl{10.5281/zenodo.15191735}}


\bibitem[Elliott(2013)]%
        {Elliott13}
\bibfield{author}{\bibinfo{person}{Conal Elliott}.}
  \bibinfo{year}{2013}\natexlab{}.
\newblock \bibinfo{title}{Circuits as a bicartesian closed category}.
\newblock
  \bibinfo{howpublished}{\url{http://conal.net/blog/posts/circuits-as-a-bicartesian-closed-category}}.
\newblock
\newblock
\shownote{Blog post, Accessed: 2025-01-23}.


\bibitem[Fu et~al\mbox{.}(2020)]%
        {FuKRS20}
\bibfield{author}{\bibinfo{person}{Peng Fu}, \bibinfo{person}{Kohei Kishida},
  \bibinfo{person}{Neil~J. Ross}, {and} \bibinfo{person}{Peter Selinger}.}
  \bibinfo{year}{2020}\natexlab{}.
\newblock \showarticletitle{A Tutorial Introduction to Quantum Circuit
  Programming in Dependently Typed {Proto-Quipper}}. In
  \bibinfo{booktitle}{\emph{Reversible Computation - 12th International
  Conference, {RC} 2020, Oslo, Norway, July 9-10, 2020, Proceedings}}
  \emph{(\bibinfo{series}{Lecture Notes in Computer Science},
  Vol.~\bibinfo{volume}{12227})}, \bibfield{editor}{\bibinfo{person}{Ivan
  Lanese} {and} \bibinfo{person}{Mariusz Rawski}} (Eds.).
  \bibinfo{publisher}{Springer}, \bibinfo{pages}{153--168}.
\newblock
\href{https://doi.org/10.1007/978-3-030-52482-1\_9}{doi:\nolinkurl{10.1007/978-3-030-52482-1\_9}}


\bibitem[Fu et~al\mbox{.}(2022b)]%
        {FuKRS22}
\bibfield{author}{\bibinfo{person}{Peng Fu}, \bibinfo{person}{Kohei Kishida},
  \bibinfo{person}{Neil~J. Ross}, {and} \bibinfo{person}{Peter Selinger}.}
  \bibinfo{year}{2022}\natexlab{b}.
\newblock \showarticletitle{A biset-enriched categorical model for
  {Proto-Quipper} with dynamic lifting}. In
  \bibinfo{booktitle}{\emph{Proceedings 19th International Conference on
  Quantum Physics and Logic, {QPL} 2022, Wolfson College, Oxford, UK, 27 June -
  1 July 2022}} \emph{(\bibinfo{series}{{EPTCS}}, Vol.~\bibinfo{volume}{394})},
  \bibfield{editor}{\bibinfo{person}{Stefano Gogioso} {and}
  \bibinfo{person}{Matty Hoban}} (Eds.). \bibinfo{pages}{302--342}.
\newblock
\href{https://doi.org/10.4204/EPTCS.394.16}{doi:\nolinkurl{10.4204/EPTCS.394.16}}


\bibitem[Fu et~al\mbox{.}(2023)]%
        {FuKRS23}
\bibfield{author}{\bibinfo{person}{Peng Fu}, \bibinfo{person}{Kohei Kishida},
  \bibinfo{person}{Neil~J. Ross}, {and} \bibinfo{person}{Peter Selinger}.}
  \bibinfo{year}{2023}\natexlab{}.
\newblock \showarticletitle{{Proto-Quipper} with Dynamic Lifting}.
\newblock \bibinfo{journal}{\emph{Proc. {ACM} Program. Lang.}}
  \bibinfo{volume}{7}, \bibinfo{number}{{POPL}} (\bibinfo{year}{2023}),
  \bibinfo{pages}{309--334}.
\newblock
\href{https://doi.org/10.1145/3571204}{doi:\nolinkurl{10.1145/3571204}}


\bibitem[Fu et~al\mbox{.}(2022a)]%
        {FuKS22}
\bibfield{author}{\bibinfo{person}{Peng Fu}, \bibinfo{person}{Kohei Kishida},
  {and} \bibinfo{person}{Peter Selinger}.} \bibinfo{year}{2022}\natexlab{a}.
\newblock \showarticletitle{Linear Dependent Type Theory for Quantum
  Programming Languages}.
\newblock \bibinfo{journal}{\emph{Log. Methods Comput. Sci.}}
  \bibinfo{volume}{18}, \bibinfo{number}{3} (\bibinfo{year}{2022}).
\newblock
\href{https://doi.org/10.46298/LMCS-18(3:28)2022}{doi:\nolinkurl{10.46298/LMCS-18(3:28)2022}}


\bibitem[Gaboardi et~al\mbox{.}(2021)]%
        {GaboardiKOS21}
\bibfield{author}{\bibinfo{person}{Marco Gaboardi}, \bibinfo{person}{Shin{-}ya
  Katsumata}, \bibinfo{person}{Dominic Orchard}, {and} \bibinfo{person}{Tetsuya
  Sato}.} \bibinfo{year}{2021}\natexlab{}.
\newblock \showarticletitle{Graded Hoare Logic and its Categorical Semantics}.
  In \bibinfo{booktitle}{\emph{Programming Languages and Systems - 30th
  European Symposium on Programming, {ESOP} 2021, Held as Part of the European
  Joint Conferences on Theory and Practice of Software, {ETAPS} 2021,
  Luxembourg City, Luxembourg, March 27 - April 1, 2021, Proceedings}}
  \emph{(\bibinfo{series}{Lecture Notes in Computer Science},
  Vol.~\bibinfo{volume}{12648})}, \bibfield{editor}{\bibinfo{person}{Nobuko
  Yoshida}} (Ed.). \bibinfo{publisher}{Springer}, \bibinfo{pages}{234--263}.
\newblock
\href{https://doi.org/10.1007/978-3-030-72019-3\_9}{doi:\nolinkurl{10.1007/978-3-030-72019-3\_9}}


\bibitem[Green et~al\mbox{.}(2013)]%
        {GreenLRSV13}
\bibfield{author}{\bibinfo{person}{Alexander~S. Green},
  \bibinfo{person}{Peter~LeFanu Lumsdaine}, \bibinfo{person}{Neil~J. Ross},
  \bibinfo{person}{Peter Selinger}, {and} \bibinfo{person}{Beno{\^{\i}}t
  Valiron}.} \bibinfo{year}{2013}\natexlab{}.
\newblock \showarticletitle{Quipper: a scalable quantum programming language}.
  In \bibinfo{booktitle}{\emph{{ACM} {SIGPLAN} Conference on Programming
  Language Design and Implementation, {PLDI} '13, Seattle, WA, USA, June 16-19,
  2013}}, \bibfield{editor}{\bibinfo{person}{Hans{-}Juergen Boehm} {and}
  \bibinfo{person}{Cormac Flanagan}} (Eds.). \bibinfo{publisher}{{ACM}},
  \bibinfo{pages}{333--342}.
\newblock
\href{https://doi.org/10.1145/2491956.2462177}{doi:\nolinkurl{10.1145/2491956.2462177}}


\bibitem[H{\"{a}}ner et~al\mbox{.}(2020)]%
        {HanerHT20}
\bibfield{author}{\bibinfo{person}{Thomas H{\"{a}}ner},
  \bibinfo{person}{Torsten Hoefler}, {and} \bibinfo{person}{Matthias Troyer}.}
  \bibinfo{year}{2020}\natexlab{}.
\newblock \showarticletitle{Assertion-based optimization of Quantum programs}.
\newblock \bibinfo{journal}{\emph{Proc. {ACM} Program. Lang.}}
  \bibinfo{volume}{4}, \bibinfo{number}{{OOPSLA}} (\bibinfo{year}{2020}),
  \bibinfo{pages}{133:1--133:20}.
\newblock
\href{https://doi.org/10.1145/3428201}{doi:\nolinkurl{10.1145/3428201}}


\bibitem[Hasuo and Hoshino(2017)]%
        {HasuoHoshino17}
\bibfield{author}{\bibinfo{person}{Ichiro Hasuo} {and} \bibinfo{person}{Naohiko
  Hoshino}.} \bibinfo{year}{2017}\natexlab{}.
\newblock \showarticletitle{Semantics of higher-order quantum computation via
  geometry of interaction}.
\newblock \bibinfo{journal}{\emph{Ann. Pure Appl. Log.}} \bibinfo{volume}{168},
  \bibinfo{number}{2} (\bibinfo{year}{2017}), \bibinfo{pages}{404--469}.
\newblock
\href{https://doi.org/10.1016/J.APAL.2016.10.010}{doi:\nolinkurl{10.1016/J.APAL.2016.10.010}}


\bibitem[Javadi-Abhari et~al\mbox{.}(2024)]%
        {JavadiAbhariTKWLGMNBCJG24}
\bibfield{author}{\bibinfo{person}{Ali Javadi-Abhari}, \bibinfo{person}{Matthew
  Treinish}, \bibinfo{person}{Kevin Krsulich}, \bibinfo{person}{Christopher~J.
  Wood}, \bibinfo{person}{Jake Lishman}, \bibinfo{person}{Julien Gacon},
  \bibinfo{person}{Simon Martiel}, \bibinfo{person}{Paul~D. Nation},
  \bibinfo{person}{Lev~S. Bishop}, \bibinfo{person}{Andrew~W. Cross},
  \bibinfo{person}{Blake~R. Johnson}, {and} \bibinfo{person}{Jay~M. Gambetta}.}
  \bibinfo{year}{2024}\natexlab{}.
\newblock \bibinfo{title}{Quantum computing with Qiskit}.
\newblock
\showeprint[arxiv]{2405.08810}~[quant-ph]
\urldef\tempurl%
\url{https://arxiv.org/abs/2405.08810}
\showURL{%
\tempurl}


\bibitem[Jia et~al\mbox{.}(2022)]%
        {JiaKLMZ22}
\bibfield{author}{\bibinfo{person}{Xiaodong Jia}, \bibinfo{person}{Andre
  Kornell}, \bibinfo{person}{Bert Lindenhovius}, \bibinfo{person}{Michael~W.
  Mislove}, {and} \bibinfo{person}{Vladimir Zamdzhiev}.}
  \bibinfo{year}{2022}\natexlab{}.
\newblock \showarticletitle{Semantics for variational Quantum programming}.
\newblock \bibinfo{journal}{\emph{Proc. {ACM} Program. Lang.}}
  \bibinfo{volume}{6}, \bibinfo{number}{{POPL}} (\bibinfo{year}{2022}),
  \bibinfo{pages}{1--31}.
\newblock
\href{https://doi.org/10.1145/3498687}{doi:\nolinkurl{10.1145/3498687}}


\bibitem[Jiang(2024)]%
        {Jiang24}
\bibfield{author}{\bibinfo{person}{Hanru Jiang}.}
  \bibinfo{year}{2024}\natexlab{}.
\newblock \showarticletitle{Qubit Recycling Revisited}.
\newblock \bibinfo{journal}{\emph{Proc. {ACM} Program. Lang.}}
  \bibinfo{volume}{8}, \bibinfo{number}{{PLDI}} (\bibinfo{year}{2024}),
  \bibinfo{pages}{1264--1287}.
\newblock
\href{https://doi.org/10.1145/3656428}{doi:\nolinkurl{10.1145/3656428}}


\bibitem[Joyal and Street(1991)]%
        {JoyalStreet91}
\bibfield{author}{\bibinfo{person}{André Joyal} {and} \bibinfo{person}{Ross
  Street}.} \bibinfo{year}{1991}\natexlab{}.
\newblock \showarticletitle{The geometry of tensor calculus, I}.
\newblock \bibinfo{journal}{\emph{Advances in Mathematics}}
  \bibinfo{volume}{88}, \bibinfo{number}{1} (\bibinfo{year}{1991}),
  \bibinfo{pages}{55--112}.
\newblock
\showISSN{0001-8708}
\href{https://doi.org/10.1016/0001-8708(91)90003-P}{doi:\nolinkurl{10.1016/0001-8708(91)90003-P}}


\bibitem[Katsumata(2014)]%
        {Katsumata14}
\bibfield{author}{\bibinfo{person}{Shin{-}ya Katsumata}.}
  \bibinfo{year}{2014}\natexlab{}.
\newblock \showarticletitle{Parametric effect monads and semantics of effect
  systems}. In \bibinfo{booktitle}{\emph{The 41st Annual {ACM} {SIGPLAN-SIGACT}
  Symposium on Principles of Programming Languages, {POPL} '14, San Diego, CA,
  USA, January 20-21, 2014}}, \bibfield{editor}{\bibinfo{person}{Suresh
  Jagannathan} {and} \bibinfo{person}{Peter Sewell}} (Eds.).
  \bibinfo{publisher}{{ACM}}, \bibinfo{pages}{633--646}.
\newblock
\href{https://doi.org/10.1145/2535838.2535846}{doi:\nolinkurl{10.1145/2535838.2535846}}


\bibitem[Knill(2022)]%
        {Knill22}
\bibfield{author}{\bibinfo{person}{E. Knill}.} \bibinfo{year}{2022}\natexlab{}.
\newblock \bibinfo{title}{Conventions for Quantum Pseudocode}.
\newblock
\showeprint[arxiv]{2211.02559}~[quant-ph]
\urldef\tempurl%
\url{https://arxiv.org/abs/2211.02559}
\showURL{%
\tempurl}
\newblock
\shownote{[{A} version of LANL report LAUR-96-2724 (1996) with a modern
  formatting]}.


\bibitem[Levy(2019)]%
        {Levy19}
\bibfield{author}{\bibinfo{person}{Paul~Blain Levy}.}
  \bibinfo{year}{2019}\natexlab{}.
\newblock \bibinfo{title}{Locally graded categories}.
\newblock
  \bibinfo{howpublished}{\url{https://pblevy.github.io/papers/locgrade.pdf}}.
\newblock
\newblock
\shownote{Talk slides}.


\bibitem[Levy et~al\mbox{.}(2003)]%
        {LevyPT03}
\bibfield{author}{\bibinfo{person}{Paul~Blain Levy}, \bibinfo{person}{John
  Power}, {and} \bibinfo{person}{Hayo Thielecke}.}
  \bibinfo{year}{2003}\natexlab{}.
\newblock \showarticletitle{Modelling environments in call-by-value programming
  languages}.
\newblock \bibinfo{journal}{\emph{Inf. Comput.}} \bibinfo{volume}{185},
  \bibinfo{number}{2} (\bibinfo{year}{2003}), \bibinfo{pages}{182--210}.
\newblock
\href{https://doi.org/10.1016/S0890-5401(03)00088-9}{doi:\nolinkurl{10.1016/S0890-5401(03)00088-9}}


\bibitem[Lindenhovius et~al\mbox{.}(2018)]%
        {LindenhoviusMZ18}
\bibfield{author}{\bibinfo{person}{Bert Lindenhovius},
  \bibinfo{person}{Michael~W. Mislove}, {and} \bibinfo{person}{Vladimir
  Zamdzhiev}.} \bibinfo{year}{2018}\natexlab{}.
\newblock \showarticletitle{Enriching a Linear/Non-linear Lambda Calculus: {A}
  Programming Language for String Diagrams}. In
  \bibinfo{booktitle}{\emph{Proceedings of the 33rd Annual {ACM/IEEE} Symposium
  on Logic in Computer Science, {LICS} 2018, Oxford, UK, July 09-12, 2018}},
  \bibfield{editor}{\bibinfo{person}{Anuj Dawar} {and} \bibinfo{person}{Erich
  Gr{\"{a}}del}} (Eds.). \bibinfo{publisher}{{ACM}}, \bibinfo{pages}{659--668}.
\newblock
\href{https://doi.org/10.1145/3209108.3209196}{doi:\nolinkurl{10.1145/3209108.3209196}}


\bibitem[Liu et~al\mbox{.}(2019)]%
        {LiuZWYLLYZ19}
\bibfield{author}{\bibinfo{person}{Junyi Liu}, \bibinfo{person}{Bohua Zhan},
  \bibinfo{person}{Shuling Wang}, \bibinfo{person}{Shenggang Ying},
  \bibinfo{person}{Tao Liu}, \bibinfo{person}{Yangjia Li},
  \bibinfo{person}{Mingsheng Ying}, {and} \bibinfo{person}{Naijun Zhan}.}
  \bibinfo{year}{2019}\natexlab{}.
\newblock \showarticletitle{Quantum Hoare Logic}.
\newblock \bibinfo{journal}{\emph{Arch. Formal Proofs}}  \bibinfo{volume}{2019}
  (\bibinfo{year}{2019}).
\newblock
\urldef\tempurl%
\url{https://www.isa-afp.org/entries/QHLProver.html}
\showURL{%
\tempurl}


\bibitem[Malherbe et~al\mbox{.}(2013)]%
        {MalherbeSS13}
\bibfield{author}{\bibinfo{person}{Octavio Malherbe},
  \bibinfo{person}{Philip~J. Scott}, {and} \bibinfo{person}{Peter Selinger}.}
  \bibinfo{year}{2013}\natexlab{}.
\newblock \showarticletitle{Presheaf Models of Quantum Computation: An
  Outline}. In \bibinfo{booktitle}{\emph{Computation, Logic, Games, and Quantum
  Foundations. The Many Facets of Samson Abramsky - Essays Dedicated to Samson
  Abramsky on the Occasion of His 60th Birthday}}
  \emph{(\bibinfo{series}{Lecture Notes in Computer Science},
  Vol.~\bibinfo{volume}{7860})}, \bibfield{editor}{\bibinfo{person}{Bob
  Coecke}, \bibinfo{person}{Luke Ong}, {and} \bibinfo{person}{Prakash
  Panangaden}} (Eds.). \bibinfo{publisher}{Springer},
  \bibinfo{pages}{178--194}.
\newblock
\href{https://doi.org/10.1007/978-3-642-38164-5\_13}{doi:\nolinkurl{10.1007/978-3-642-38164-5\_13}}


\bibitem[Mellies(2012)]%
        {Mellies12}
\bibfield{author}{\bibinfo{person}{Paul-Andr{\'e} Mellies}.}
  \bibinfo{year}{2012}\natexlab{}.
\newblock \bibinfo{title}{Parametric monads and enriched adjunctions}.
\newblock
  \bibinfo{howpublished}{\url{https://www.irif.fr/~mellies/tensorial-logic/8-parametric-monads-and-enriched-adjunctions.pdf}}.
\newblock
\newblock
\shownote{Preprint}.


\bibitem[Moggi(1989)]%
        {Moggi89}
\bibfield{author}{\bibinfo{person}{Eugenio Moggi}.}
  \bibinfo{year}{1989}\natexlab{}.
\newblock \showarticletitle{Computational Lambda-Calculus and Monads}. In
  \bibinfo{booktitle}{\emph{Proceedings of the Fourth Annual Symposium on Logic
  in Computer Science {(LICS} '89), Pacific Grove, California, USA, June 5-8,
  1989}}. \bibinfo{publisher}{{IEEE} Computer Society},
  \bibinfo{pages}{14--23}.
\newblock
\href{https://doi.org/10.1109/LICS.1989.39155}{doi:\nolinkurl{10.1109/LICS.1989.39155}}


\bibitem[Moggi(1991)]%
        {Moggi91}
\bibfield{author}{\bibinfo{person}{Eugenio Moggi}.}
  \bibinfo{year}{1991}\natexlab{}.
\newblock \showarticletitle{Notions of Computation and Monads}.
\newblock \bibinfo{journal}{\emph{Inf. Comput.}} \bibinfo{volume}{93},
  \bibinfo{number}{1} (\bibinfo{year}{1991}), \bibinfo{pages}{55--92}.
\newblock
\href{https://doi.org/10.1016/0890-5401(91)90052-4}{doi:\nolinkurl{10.1016/0890-5401(91)90052-4}}


\bibitem[Orchard et~al\mbox{.}(2020)]%
        {OrchardWE20}
\bibfield{author}{\bibinfo{person}{Dominic Orchard}, \bibinfo{person}{Philip
  Wadler}, {and} \bibinfo{person}{Harley~Eades III}.}
  \bibinfo{year}{2020}\natexlab{}.
\newblock \showarticletitle{Unifying graded and parameterised monads}. In
  \bibinfo{booktitle}{\emph{Proceedings Eighth Workshop on Mathematically
  Structured Functional Programming, MSFP@ETAPS 2020, Dublin, Ireland, 25th
  April 2020}} \emph{(\bibinfo{series}{{EPTCS}}, Vol.~\bibinfo{volume}{317})},
  \bibfield{editor}{\bibinfo{person}{Max~S. New} {and} \bibinfo{person}{Sam
  Lindley}} (Eds.). \bibinfo{pages}{18--38}.
\newblock
\href{https://doi.org/10.4204/EPTCS.317.2}{doi:\nolinkurl{10.4204/EPTCS.317.2}}


\bibitem[Pagani et~al\mbox{.}(2014)]%
        {PaganiSV14}
\bibfield{author}{\bibinfo{person}{Michele Pagani}, \bibinfo{person}{Peter
  Selinger}, {and} \bibinfo{person}{Beno{\^{\i}}t Valiron}.}
  \bibinfo{year}{2014}\natexlab{}.
\newblock \showarticletitle{Applying quantitative semantics to higher-order
  quantum computing}. In \bibinfo{booktitle}{\emph{The 41st Annual {ACM}
  {SIGPLAN-SIGACT} Symposium on Principles of Programming Languages, {POPL}
  '14, San Diego, CA, USA, January 20-21, 2014}},
  \bibfield{editor}{\bibinfo{person}{Suresh Jagannathan} {and}
  \bibinfo{person}{Peter Sewell}} (Eds.). \bibinfo{publisher}{{ACM}},
  \bibinfo{pages}{647--658}.
\newblock
\href{https://doi.org/10.1145/2535838.2535879}{doi:\nolinkurl{10.1145/2535838.2535879}}


\bibitem[Paykin et~al\mbox{.}(2017)]%
        {PaykinRZ17}
\bibfield{author}{\bibinfo{person}{Jennifer Paykin}, \bibinfo{person}{Robert
  Rand}, {and} \bibinfo{person}{Steve Zdancewic}.}
  \bibinfo{year}{2017}\natexlab{}.
\newblock \showarticletitle{{QWIRE:} a core language for quantum circuits}. In
  \bibinfo{booktitle}{\emph{Proceedings of the 44th {ACM} {SIGPLAN} Symposium
  on Principles of Programming Languages, {POPL} 2017, Paris, France, January
  18-20, 2017}}, \bibfield{editor}{\bibinfo{person}{Giuseppe Castagna} {and}
  \bibinfo{person}{Andrew~D. Gordon}} (Eds.). \bibinfo{publisher}{{ACM}},
  \bibinfo{pages}{846--858}.
\newblock
\href{https://doi.org/10.1145/3009837.3009894}{doi:\nolinkurl{10.1145/3009837.3009894}}


\bibitem[Perdrix(2008)]%
        {Perdrix08}
\bibfield{author}{\bibinfo{person}{Simon Perdrix}.}
  \bibinfo{year}{2008}\natexlab{}.
\newblock \showarticletitle{Quantum Entanglement Analysis Based on Abstract
  Interpretation}. In \bibinfo{booktitle}{\emph{Static Analysis, 15th
  International Symposium, {SAS} 2008, Valencia, Spain, July 16-18, 2008.
  Proceedings}} \emph{(\bibinfo{series}{Lecture Notes in Computer Science},
  Vol.~\bibinfo{volume}{5079})},
  \bibfield{editor}{\bibinfo{person}{Mar{\'{\i}}a Alpuente} {and}
  \bibinfo{person}{Germ{\'{a}}n Vidal}} (Eds.). \bibinfo{publisher}{Springer},
  \bibinfo{pages}{270--282}.
\newblock
\href{https://doi.org/10.1007/978-3-540-69166-2\_18}{doi:\nolinkurl{10.1007/978-3-540-69166-2\_18}}


\bibitem[Power and Robinson(1997)]%
        {PowerRobinson97}
\bibfield{author}{\bibinfo{person}{John Power} {and} \bibinfo{person}{Edmund
  Robinson}.} \bibinfo{year}{1997}\natexlab{}.
\newblock \showarticletitle{Premonoidal Categories and Notions of Computation}.
\newblock \bibinfo{journal}{\emph{Math. Struct. Comput. Sci.}}
  \bibinfo{volume}{7}, \bibinfo{number}{5} (\bibinfo{year}{1997}),
  \bibinfo{pages}{453--468}.
\newblock
\href{https://doi.org/10.1017/S0960129597002375}{doi:\nolinkurl{10.1017/S0960129597002375}}


\bibitem[{Qiskit API reference}(2025)]%
        {HoareOptimizer}
\bibfield{author}{\bibinfo{person}{{Qiskit API reference}}.}
  \bibinfo{year}{2025}\natexlab{}.
\newblock \bibinfo{title}{{HoareOptimizer}}.
\newblock
  \bibinfo{howpublished}{\url{https://quantum.cloud.ibm.com/docs/en/api/qiskit/qiskit.transpiler.passes.HoareOptimizer}}.
\newblock
\newblock
\shownote{Accessed: 2025-07-10}.


\bibitem[Rennela and Staton(2020)]%
        {RennelaStaton19}
\bibfield{author}{\bibinfo{person}{Mathys Rennela} {and} \bibinfo{person}{Sam
  Staton}.} \bibinfo{year}{2020}\natexlab{}.
\newblock \showarticletitle{Classical Control, Quantum Circuits and Linear
  Logic in Enriched Category Theory}.
\newblock \bibinfo{journal}{\emph{Log. Methods Comput. Sci.}}
  \bibinfo{volume}{16}, \bibinfo{number}{1} (\bibinfo{year}{2020}).
\newblock
\href{https://doi.org/10.23638/LMCS-16(1:30)2020}{doi:\nolinkurl{10.23638/LMCS-16(1:30)2020}}


\bibitem[Rios and Selinger(2017)]%
        {RiosSelinger17}
\bibfield{author}{\bibinfo{person}{Francisco Rios} {and} \bibinfo{person}{Peter
  Selinger}.} \bibinfo{year}{2017}\natexlab{}.
\newblock \showarticletitle{A categorical model for a quantum circuit
  description language}. In \bibinfo{booktitle}{\emph{Proceedings 14th
  International Conference on Quantum Physics and Logic, {QPL} 2017, Nijmegen,
  The Netherlands, 3-7 July 2017}} \emph{(\bibinfo{series}{{EPTCS}},
  Vol.~\bibinfo{volume}{266})}, \bibfield{editor}{\bibinfo{person}{Bob Coecke}
  {and} \bibinfo{person}{Aleks Kissinger}} (Eds.). \bibinfo{pages}{164--178}.
\newblock
\href{https://doi.org/10.4204/EPTCS.266.11}{doi:\nolinkurl{10.4204/EPTCS.266.11}}


\bibitem[Román and Sobociński(2025)]%
        {RomanSobocinski25}
\bibfield{author}{\bibinfo{person}{Mario Román} {and} \bibinfo{person}{Paweł
  Sobociński}.} \bibinfo{year}{2025}\natexlab{}.
\newblock \showarticletitle{String Diagrams for Premonoidal Categories}.
\newblock \bibinfo{journal}{\emph{Logical Methods in Computer Science}}
  \bibinfo{volume}{Volume 21, Issue 2}, Article \bibinfo{articleno}{9}
  (\bibinfo{date}{Apr} \bibinfo{year}{2025}).
\newblock
\showISSN{1860-5974}
\href{https://doi.org/10.46298/lmcs-21(2:9)2025}{doi:\nolinkurl{10.46298/lmcs-21(2:9)2025}}


\bibitem[Sanders and Zuliani(2000)]%
        {SandersZ00}
\bibfield{author}{\bibinfo{person}{J.~W. Sanders} {and} \bibinfo{person}{P.
  Zuliani}.} \bibinfo{year}{2000}\natexlab{}.
\newblock \showarticletitle{Quantum Programming}. In
  \bibinfo{booktitle}{\emph{Mathematics of Program Construction}},
  \bibfield{editor}{\bibinfo{person}{Roland Backhouse} {and}
  \bibinfo{person}{Jos{\'e}~Nuno Oliveira}} (Eds.).
  \bibinfo{publisher}{Springer Berlin Heidelberg}, \bibinfo{address}{Berlin,
  Heidelberg}, \bibinfo{pages}{80--99}.
\newblock
\showISBNx{978-3-540-45025-2}


\bibitem[Selinger(2004)]%
        {Selinger04}
\bibfield{author}{\bibinfo{person}{Peter Selinger}.}
  \bibinfo{year}{2004}\natexlab{}.
\newblock \showarticletitle{Towards a quantum programming language}.
\newblock \bibinfo{journal}{\emph{Mathematical. Structures in Comp. Sci.}}
  \bibinfo{volume}{14}, \bibinfo{number}{4} (\bibinfo{date}{Aug.}
  \bibinfo{year}{2004}), \bibinfo{pages}{527–586}.
\newblock
\showISSN{0960-1295}
\href{https://doi.org/10.1017/S0960129504004256}{doi:\nolinkurl{10.1017/S0960129504004256}}


\bibitem[Selinger and Valiron(2005)]%
        {SelingerValiron05}
\bibfield{author}{\bibinfo{person}{Peter Selinger} {and}
  \bibinfo{person}{Beno{\^i}t Valiron}.} \bibinfo{year}{2005}\natexlab{}.
\newblock \showarticletitle{A Lambda Calculus for Quantum Computation with
  Classical Control}. In \bibinfo{booktitle}{\emph{Typed Lambda Calculi and
  Applications}}, \bibfield{editor}{\bibinfo{person}{Pawe{\l} Urzyczyn}} (Ed.).
  \bibinfo{publisher}{Springer Berlin Heidelberg}, \bibinfo{address}{Berlin,
  Heidelberg}, \bibinfo{pages}{354--368}.
\newblock
\showISBNx{978-3-540-32014-2}


\bibitem[Selinger and Valiron(2008)]%
        {SelingerValiron08}
\bibfield{author}{\bibinfo{person}{Peter Selinger} {and}
  \bibinfo{person}{Beno{\^{\i}}t Valiron}.} \bibinfo{year}{2008}\natexlab{}.
\newblock \showarticletitle{On a Fully Abstract Model for a Quantum Linear
  Functional Language: (Extended Abstract)}. In
  \bibinfo{booktitle}{\emph{Proceedings of the 4th International Workshop on
  Quantum Programming Languages, {QPL} 2006, Oxford, UK, July 17-19, 2006}}
  \emph{(\bibinfo{series}{Electronic Notes in Theoretical Computer Science},
  Vol.~\bibinfo{volume}{210})}, \bibfield{editor}{\bibinfo{person}{Peter
  Selinger}} (Ed.). \bibinfo{publisher}{Elsevier}, \bibinfo{pages}{123--137}.
\newblock
\href{https://doi.org/10.1016/J.ENTCS.2008.04.022}{doi:\nolinkurl{10.1016/J.ENTCS.2008.04.022}}


\bibitem[Shor(1994)]%
        {Shor94}
\bibfield{author}{\bibinfo{person}{Peter~W. Shor}.}
  \bibinfo{year}{1994}\natexlab{}.
\newblock \showarticletitle{Algorithms for Quantum Computation: Discrete
  Logarithms and Factoring}. In \bibinfo{booktitle}{\emph{35th Annual Symposium
  on Foundations of Computer Science, Santa Fe, New Mexico, USA, 20-22 November
  1994}}. \bibinfo{publisher}{{IEEE} Computer Society},
  \bibinfo{pages}{124--134}.
\newblock
\href{https://doi.org/10.1109/SFCS.1994.365700}{doi:\nolinkurl{10.1109/SFCS.1994.365700}}


\bibitem[Staton and Levy(2013)]%
        {StatonLevy13}
\bibfield{author}{\bibinfo{person}{Sam Staton} {and}
  \bibinfo{person}{Paul~Blain Levy}.} \bibinfo{year}{2013}\natexlab{}.
\newblock \showarticletitle{Universal properties of impure programming
  languages}. In \bibinfo{booktitle}{\emph{The 40th Annual {ACM}
  {SIGPLAN-SIGACT} Symposium on Principles of Programming Languages, {POPL}
  '13, Rome, Italy - January 23 - 25, 2013}},
  \bibfield{editor}{\bibinfo{person}{Roberto Giacobazzi} {and}
  \bibinfo{person}{Radhia Cousot}} (Eds.). \bibinfo{publisher}{{ACM}},
  \bibinfo{pages}{179--192}.
\newblock
\href{https://doi.org/10.1145/2429069.2429091}{doi:\nolinkurl{10.1145/2429069.2429091}}


\bibitem[Svore et~al\mbox{.}(2018)]%
        {SvoreGTAGHR18}
\bibfield{author}{\bibinfo{person}{Krysta Svore}, \bibinfo{person}{Alan
  Geller}, \bibinfo{person}{Matthias Troyer}, \bibinfo{person}{John Azariah},
  \bibinfo{person}{Christopher Granade}, \bibinfo{person}{Bettina Heim},
  \bibinfo{person}{Vadym Kliuchnikov}, \bibinfo{person}{Mariia Mykhailova},
  \bibinfo{person}{Andres Paz}, {and} \bibinfo{person}{Martin Roetteler}.}
  \bibinfo{year}{2018}\natexlab{}.
\newblock \showarticletitle{{Q\#: Enabling Scalable Quantum Computing and
  Development with a High-level DSL}}. In \bibinfo{booktitle}{\emph{Proceedings
  of the Real World Domain Specific Languages Workshop 2018}}
  \emph{(\bibinfo{series}{RWDSL2018})}. \bibinfo{publisher}{ACM}.
\newblock
\href{https://doi.org/10.1145/3183895.3183901}{doi:\nolinkurl{10.1145/3183895.3183901}}


\bibitem[Tsukada and Asada(2024)]%
        {TsukadaAsada24}
\bibfield{author}{\bibinfo{person}{Takeshi Tsukada} {and}
  \bibinfo{person}{Kazuyuki Asada}.} \bibinfo{year}{2024}\natexlab{}.
\newblock \showarticletitle{Enriched Presheaf Model of Quantum {FPC}}.
\newblock \bibinfo{journal}{\emph{Proc. {ACM} Program. Lang.}}
  \bibinfo{volume}{8}, \bibinfo{number}{{POPL}} (\bibinfo{year}{2024}),
  \bibinfo{pages}{362--392}.
\newblock
\href{https://doi.org/10.1145/3632855}{doi:\nolinkurl{10.1145/3632855}}


\bibitem[Valiron(2016)]%
        {Valiron16}
\bibfield{author}{\bibinfo{person}{Beno{\^{\i}}t Valiron}.}
  \bibinfo{year}{2016}\natexlab{}.
\newblock \showarticletitle{Generating Reversible Circuits from Higher-Order
  Functional Programs}. In \bibinfo{booktitle}{\emph{Reversible Computation -
  8th International Conference, {RC} 2016, Bologna, Italy, July 7-8, 2016,
  Proceedings}} \emph{(\bibinfo{series}{Lecture Notes in Computer Science},
  Vol.~\bibinfo{volume}{9720})}, \bibfield{editor}{\bibinfo{person}{Simon~J.
  Devitt} {and} \bibinfo{person}{Ivan Lanese}} (Eds.).
  \bibinfo{publisher}{Springer}, \bibinfo{pages}{289--306}.
\newblock
\href{https://doi.org/10.1007/978-3-319-40578-0\_21}{doi:\nolinkurl{10.1007/978-3-319-40578-0\_21}}


\bibitem[Wood(1978)]%
        {Wood78}
\bibfield{author}{\bibinfo{person}{R.~J. Wood}.}
  \bibinfo{year}{1978}\natexlab{}.
\newblock \showarticletitle{V-indexed categories}. In
  \bibinfo{booktitle}{\emph{Indexed Categories and Their Applications}}.
  \bibinfo{publisher}{Springer Berlin Heidelberg}, \bibinfo{address}{Berlin,
  Heidelberg}, \bibinfo{pages}{126--140}.
\newblock
\showISBNx{978-3-540-35762-9}


\bibitem[Ying(2016)]%
        {Ying16}
\bibfield{author}{\bibinfo{person}{Mingsheng Ying}.}
  \bibinfo{year}{2016}\natexlab{}.
\newblock \bibinfo{booktitle}{\emph{Foundations of Quantum Programming}
  (\bibinfo{edition}{1st} ed.)}.
\newblock \bibinfo{publisher}{Morgan Kaufmann Publishers Inc.},
  \bibinfo{address}{San Francisco, CA, USA}.
\newblock
\showISBNx{0128023066}


\bibitem[Ying et~al\mbox{.}(2017)]%
        {YingYW17}
\bibfield{author}{\bibinfo{person}{Mingsheng Ying}, \bibinfo{person}{Shenggang
  Ying}, {and} \bibinfo{person}{Xiaodi Wu}.} \bibinfo{year}{2017}\natexlab{}.
\newblock \showarticletitle{Invariants of quantum programs: characterisations
  and generation}. In \bibinfo{booktitle}{\emph{Proceedings of the 44th {ACM}
  {SIGPLAN} Symposium on Principles of Programming Languages, {POPL} 2017,
  Paris, France, January 18-20, 2017}},
  \bibfield{editor}{\bibinfo{person}{Giuseppe Castagna} {and}
  \bibinfo{person}{Andrew~D. Gordon}} (Eds.). \bibinfo{publisher}{{ACM}},
  \bibinfo{pages}{818--832}.
\newblock
\href{https://doi.org/10.1145/3009837.3009840}{doi:\nolinkurl{10.1145/3009837.3009840}}


\bibitem[Yoshimizu et~al\mbox{.}(2014)]%
        {YoshimizuHFL14}
\bibfield{author}{\bibinfo{person}{Akira Yoshimizu}, \bibinfo{person}{Ichiro
  Hasuo}, \bibinfo{person}{Claudia Faggian}, {and} \bibinfo{person}{Ugo~Dal
  Lago}.} \bibinfo{year}{2014}\natexlab{}.
\newblock \showarticletitle{Measurements in Proof Nets as Higher-Order Quantum
  Circuits}. In \bibinfo{booktitle}{\emph{Programming Languages and Systems -
  23rd European Symposium on Programming, {ESOP} 2014, Held as Part of the
  European Joint Conferences on Theory and Practice of Software, {ETAPS} 2014,
  Grenoble, France, April 5-13, 2014, Proceedings}}
  \emph{(\bibinfo{series}{Lecture Notes in Computer Science},
  Vol.~\bibinfo{volume}{8410})}, \bibfield{editor}{\bibinfo{person}{Zhong
  Shao}} (Ed.). \bibinfo{publisher}{Springer}, \bibinfo{pages}{371--391}.
\newblock
\href{https://doi.org/10.1007/978-3-642-54833-8\_20}{doi:\nolinkurl{10.1007/978-3-642-54833-8\_20}}


\bibitem[Yu and Palsberg(2021)]%
        {YuPalsberg21}
\bibfield{author}{\bibinfo{person}{Nengkun Yu} {and} \bibinfo{person}{Jens
  Palsberg}.} \bibinfo{year}{2021}\natexlab{}.
\newblock \showarticletitle{Quantum abstract interpretation}. In
  \bibinfo{booktitle}{\emph{{PLDI} '21: 42nd {ACM} {SIGPLAN} International
  Conference on Programming Language Design and Implementation, Virtual Event,
  Canada, June 20-25, 2021}}, \bibfield{editor}{\bibinfo{person}{Stephen~N.
  Freund} {and} \bibinfo{person}{Eran Yahav}} (Eds.).
  \bibinfo{publisher}{{ACM}}, \bibinfo{pages}{542--558}.
\newblock
\href{https://doi.org/10.1145/3453483.3454061}{doi:\nolinkurl{10.1145/3453483.3454061}}


\bibitem[Zhou et~al\mbox{.}(2019)]%
        {ZhouYY19}
\bibfield{author}{\bibinfo{person}{Li Zhou}, \bibinfo{person}{Nengkun Yu},
  {and} \bibinfo{person}{Mingsheng Ying}.} \bibinfo{year}{2019}\natexlab{}.
\newblock \showarticletitle{An applied quantum Hoare logic}. In
  \bibinfo{booktitle}{\emph{Proceedings of the 40th {ACM} {SIGPLAN} Conference
  on Programming Language Design and Implementation, {PLDI} 2019, Phoenix, AZ,
  USA, June 22-26, 2019}}, \bibfield{editor}{\bibinfo{person}{Kathryn~S.
  McKinley} {and} \bibinfo{person}{Kathleen Fisher}} (Eds.).
  \bibinfo{publisher}{{ACM}}, \bibinfo{pages}{1149--1162}.
\newblock
\href{https://doi.org/10.1145/3314221.3314584}{doi:\nolinkurl{10.1145/3314221.3314584}}


\end{thebibliography}
\clearpage
\appendix
\section{Soundness \& Adequacy}
\label{appx:st-proofs}
We prove the soundness and the computational adequacy of the interpretation given in Section~\ref{sec:simple-type}.
The soundness and the adequacy for the interpretation given in Section~\ref{sec:effect-system} can be shown by almost the same argument (and thus we do not repeat the argument).

\subsection{Soundness}
As usual, the soundness is proved by induction on the derivation of the evaluation relation.
The proof also uses the following standard substitution lemma.
\begin{lemma}[Substitution]
  \label{lem:subst-st}
  Suppose that \( \vjudgst {\pcontextOne, \contextOne_1}  \valOne \typeOne \).
  \begin{enumerate}
    \item If \( \vjudgst {\pcontextOne, \contextOne_2, x: \typeOne, \contextOne'_2} \valTwo \typeTwo \), then
      \begin{align*}
        \sem{\vjudgst {\pcontextOne, \contextOne_2, \contextOne_1, \contextOne_2'} {\valTwo \sub \valOne x} \typeTwo} =
        \begin{aligned}[t]
          &\sem \pcontextOne \otimes \sem {\contextOne_2} \otimes \sem {\contextOne_1} \otimes \sem {\contextOne_2'} \xrightarrow{(\Delta, \id) \otimes \id} \\
          & \sem \pcontextOne \otimes \sem \pcontextOne \otimes \sem {\contextOne_2} \otimes \sem {\contextOne_1} \otimes \sem {\contextOne_2'} \xrightarrow{\cong} \\
          & \sem \pcontextOne \otimes \sem {\contextOne_2} \otimes \sem \pcontextOne \otimes \sem {\contextOne_1} \otimes \sem {\contextOne_2'} \xrightarrow{\id \otimes \sem \valOne \otimes \id} \\
          & \sem \pcontextOne \otimes \sem {\contextOne_2} \otimes \sem \typeOne \otimes \sem {\contextOne_2'} \xrightarrow{\sem \valTwo} \sem \typeTwo.
        \end{aligned}
      \end{align*}
    \item If \( \cjudgst {\pcontextOne, \contextOne, x : \typeOne} \termOne \typeTwo \), then
      \begin{align*}
        \sem{\vjudgst {\pcontextOne, \contextOne, \contextOne_1} {\termOne \sub \valOne x} \typeTwo} =
        \begin{aligned}[t]
          &\sem \pcontextOne \otimes \sem \contextOne \otimes \sem {\contextOne_1}  \xrightarrow{\dup \otimes \id} \\
          &\sem \pcontextOne \otimes \sem \pcontextOne \otimes \sem \contextOne \otimes \sem {\contextOne_1} \xrightarrow{\cong} \\
          & \sem \pcontextOne \otimes \sem \contextOne \otimes \sem \pcontextOne \otimes \sem {\contextOne_1} \xrightarrow{\id \otimes J(\sem \valOne)} \\
          & \sem \pcontextOne \otimes \sem \contextOne \otimes  \sem \typeOne \xrightarrow{\sem \termOne} \sem \typeTwo.
        \end{aligned}
      \end{align*}

  \end{enumerate}
\end{lemma}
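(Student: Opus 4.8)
The plan is to prove the two statements simultaneously by induction on the typing derivations of \( \valTwo \) (for part~1) and \( \termOne \) (for part~2). A single mutual induction is essential because values may contain computations (as in rules \textit{abs} and \textit{lift}) and computations contain values (as in \textit{app}, \textit{apply}, or \textit{return}), so neither part can be established in isolation. Using typing derivations rather than raw term structure has the further benefit that the exchange rule \textit{ex} becomes an ordinary inductive case, which is convenient since the interpretation of \textit{ex} inserts a coherence isomorphism that must be commuted past the substitution composite.

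First I would dispatch the base cases of part~1, namely the value constructors that bottom out: \( \unitv \), \( n \), labels \( \labOne \), and boxed circuits \( \boxedCirc{\struct\labOne}{\circuitOne}{\struct\labTwo} \). For these, substitution either leaves the value unchanged (when \( \varOne \) does not occur) or is vacuous, and the required equation holds because the extra copy of \( \sem{\valOne} \) on the right-hand side is discarded by the terminal maps \( ! \) or the projections, using only cartesian identities in \( \Set \). The genuinely interesting base case is the variable rule with \( \valTwo = \varOne \), where \( \valTwo \sub \valOne \varOne = \valOne \): here one checks directly that post-composing the duplication-and-reshuffle composite with the projection \( \pi_x \) that selects the freshly substituted \( \sem \typeOne \)-factor recovers \( \sem \valOne \) exactly, again by a projection-after-pairing identity in the cartesian category.

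Next I would treat the inductive cases. For the value constructor that splits the linear context (\textit{pair}) and for the computation constructors that split it (\textit{app}, \textit{apply}, \textit{dest}, \textit{let}), I apply the induction hypothesis to the immediate subterm that actually contains \( \varOne \) and leave the other branch untouched; the branching rule \textit{ifz}, whose branches share the context, is handled by applying the hypothesis to both branches. The cases going \emph{under a binder}, namely \textit{abs} and \textit{lift}, are handled by invoking part~2 on the body and then pushing the resulting composite through the transpose \( \Lambda \) of the closed parameterized Freyd category, using its naturality. The remaining computation cases \textit{force}, \textit{box}, and \textit{return} are lighter, reducing to part~1 together with naturality of \( J \), \( \ev \), and \( \boxsem \).

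The hard part will be the bookkeeping forced by the separate handling of the parameter context \( \pcontextOne \) and the linear context \( \contextOne \). The substituted value \( \valOne \) is typed in \( \pcontextOne, \contextOne_1 \), so it threads a single copy of \( \sem \pcontextOne \); but in any context-splitting rule the two premises each demand their own copy, supplied in the interpretation by \( \dup_{\sem \pcontextOne} = J(\Delta, \id_I) \). I therefore expect the crux of each split case to be a diagrammatic argument showing that the single diagonal threaded by the lemma agrees with the nested diagonals produced by the interpretation, which amounts to repeated use of the coassociativity and cocommutativity of \( \Delta \) in \( \Set \) together with naturality of the coherence isomorphisms \( \cong \) that reshuffle the tensor factors. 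Ensuring that the one copy of \( \sem \valOne \) is routed into precisely the branch containing \( \varOne \) while the other branch duplicates its own \( \sem \pcontextOne \) independently is where essentially all the work lies; the remaining verifications are routine equational reasoning in \( \catV \) and \( \Set_{\circmonad*} \).
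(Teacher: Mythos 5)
Your proposal matches the paper's proof, which is exactly a simultaneous induction on the two typing derivations (the paper states only this one line and leaves the case analysis implicit). Your elaboration of the cases, including the identification of the diagonal/coherence bookkeeping for the shared parameter context as the main source of work, is consistent with what that induction requires.
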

\begin{proof}
  By a simultaneous induction on the structure of the type derivation.
\end{proof}

\soundnessSt*
\begin{proof}
  By induction on the derivation of \( \config \circuitOne \termOne \eval \config {\circuitOne'} \valOne \).
  Throughout the proof we write \( \lcOne_{\termOne} \) (resp.~\(\lcOne_{\valOne} \)) for the label context that types the term \( \termOne \) (resp.~\( \valOne \)); this means that we have \( \lcOne = \lcOne_\termOne, \lcOne' \) and similarly for \( \valOne\).

  \begin{description}
    \item[Case \textit{app}:]
      We have \( \termOne = (\lambda x. \termTwo) \; \valTwo \) with \( \config \circuitOne {\termTwo \sub \valTwo x} \eval \config \circuitTwo \valOne \).
      It suffices to prove that \( \sem{(\lambda x. \termTwo) \, W} = \sem{\termTwo \sub \valTwo x} \) because then we can close this case by the induction hypothesis.
      By inversion on the typing rule, we must have
      \begin{gather*}
        \cjudgst {\lcOne_\termTwo, x : \typeTwo} \termTwo \typeOne
        \quad \text{and} \quad
        \vjudgst {\lcOne_\valTwo} \valTwo \typeTwo
      \end{gather*}
      with \( Q = Q_\termTwo, Q_\valTwo \) and \( \sharp \sem{Q_\valTwo} = \sharp \sem{\typeTwo} \).
      \begin{align*}
        &\sem{(\lambda x. \termTwo) \, W} \\
        &= (J((\Lambda(\sem{\cjudgst {\lcOne_\termTwo, x : \typeTwo} \termTwo \typeOne}), \id_{\semM {\sharp(\lcOne_\termTwo, x : \typeTwo)}})) \otimes J(\sem{\vjudgst {\lcOne_\valTwo} \valTwo \typeTwo})); \ev \tag{by def.} \\
        &= (J(\Lambda(\sem{\cjudgst {\lcOne_\termTwo, x : \typeTwo} \termTwo \typeOne}), \semM {\sharp(\lcOne_\termTwo, x : \typeTwo)}) \otimes J(\sem{\vjudgst {\lcOne_\valTwo} \valTwo \typeTwo})); \ev \tag{by def.~of the functor \(J(-, \mObjOne)\)}\\
        &= (J(\id_{\lcOne_\termTwo} \otimes \sem{\vjudgst {\lcOne_\valTwo} \valTwo \typeTwo})) ;  \sem{\cjudgst {\lcOne_\termTwo, x : \typeTwo} \termTwo \typeOne} \tag{by the universaity of \( \ev \)}\\
        &= \sem{\termTwo \sub \valTwo x} \tag{by substituion lemma (Lemma~\ref{lem:subst-st})}
      \end{align*}
    \item[Case \textit{dest}:]
      Similar to \textit{let}. %
    \item[Case \textit{force}:]
      Similar to the case \text{app}, it suffices to show that \( \sem{\force {(\lift \termTwo)}} = \sem{\termTwo} \), for some \( \termTwo \) such that \( \termOne = \force {(\lift \termTwo )}\).
      Since \( \lift \) is a special case of abstraction (i.e.~thunking) and \( \force \) is a special case for application, this can be proved as in the case of \textit{app}.
    \item[Case \textit{apply}:]
      In this case, we have \( \termOne = \apply{\boxedCirc{\struct\labOne}{\circuitThree}{\struct{\labTwo}}}{\struct{\labThree}} \) for some circuit \( \circuitThree \) and wire bundles \( \struct \labOne \), \( \struct \labTwo \) and \( \struct \labThree \).
      Moreover, we have \( \circuitOne ; (\circuitThree \ltensor \semM{\lcOne'}) = \circuitOne' \).
      By inversion on the typing rules, we must have
      \begin{gather*}
        \vjudgst \emptycontext {\boxedCirc{\struct\labOne}{\circuitThree}{\struct{\labTwo}}} {\circt {} \mtypeOne \mtypeTwo}
        \quad \text{and} \quad
        \vjudgst  {\struct \labThree : \mtypeOne} {\struct \labThree} \mtypeOne
      \end{gather*}
      with \( \vjudgst {\lcOne_\termOne}  {\struct \labThree} \mtypeOne \) and \( \typeOne = \mtypeTwo \) for some \( \mtypeOne \) and \( \mtypeTwo \).
      We first show that \( \sem{\apply{\boxedCirc{\struct\labOne}{\circuitThree}{\struct{\labTwo}}}{\struct{\labThree}}} = J(\id_1, \circuitThree)\).
      By definition, we have
      \begin{align*}
        \sem{\apply{\boxedCirc{\struct\labOne}{\circuitThree}{\struct{\labTwo}}}{\struct{\labThree}}}
        &= (\sem{\boxedCirc{\struct\labOne}{\circuitThree}{\struct{\labTwo}}} \otimes \id_{\lcOne_\termOne}); \applysem \\
        &= ( (J(\Lambda(\id_1, \semM{\boxedCirc{\struct\labOne}{\circuitThree}{\struct{\labTwo}}})) ;J(\boxsem, \id_I)) \otimes \id_{\lcOne_\termOne}); (J(\boxsem^{-1}, \id_I) \otimes \id_{\lcOne_\termOne});\ev \\
        &= ( (J(\Lambda(\id_1, \semM{\boxedCirc{\struct\labOne}{\circuitThree}{\struct{\labTwo}}})) \otimes \id_{\lcOne_\termOne});\ev \\
        &= J(\id_1, \semM{\boxedCirc{\struct\labOne}{\circuitThree}{\struct{\labTwo}}} ).
      \end{align*}
      Therefore,
      \begin{align*}
        \sem{\config \circuitOne \termOne}
        &= J(\id_1, \circuitOne);(\sem{\apply{\boxedCirc{\struct\labOne}{\circuitThree}{\struct{\labTwo}}}{\struct{\labThree}}} \otimes \id_{\lcOne'}) \\
        &= J(\id_1, \circuitOne);(J(\id_1, \semM{\boxedCirc{\struct\labOne}{\circuitThree}{\struct{\labTwo}}}) \otimes J(\id_1, \id_{\lcOne'})) \\
        &= J(\id_1, \circuitOne);J(\id_1, \semM{\boxedCirc{\struct\labOne}{\circuitThree}{\struct{\labTwo}}}  \ltensor \sem{\lcOne'}) \\
        &= J(\id_1, \circuitOne'; \perm) \\
        &= \sem{(\circuitOne', \struct \labTwo)}
      \end{align*}
      as desired. Here, \( \perm \) is the isomorphism induced by the codomain of \( \circuitOne' \) and type of \( \struct \labTwo \).
    \item[Case \textit{box}:]
      It must be the case that \( \termOne = \boxt \mtypeOne \valTwo \) and \( \valOne =  \boxedCirc{\struct\labOne}{\circuitTwo}{\struct{\labTwo}} \) for some value \( \valTwo\), circuit \( \circuitTwo \), and wire bundles \( \struct \labOne \) and \( \struct \labTwo \), and moreover, \( \config {\id_\lcOne} {\app \valTwo {\struct \labOne}} \eval \config \circuitTwo {\struct \labTwo }\).
      By inversion on the typing rules, we have
      \[
      \vjudgst \emptycontext \valTwo {\arrowst \mtypeOne \mtypeTwo {}} \quad \text{and} \quad  \vjudgst \emptycontext  {\boxedCirc{\struct\labOne}{\circuitTwo}{\struct{\labTwo}}} {\circt{} \mtypeOne \mtypeTwo}
      \]
      where \( \typeOne = \circt{} \mtypeOne \mtypeTwo \).
      Since the underlying circuit does not change during the reduction, it suffices to show that \( \sem{\boxt \mtypeOne \valTwo} = J(\sem{\boxedCirc{\struct\labOne}{\circuitTwo}{\struct{\labTwo}}})\).
      Moreover, since  \( \sem{\boxt \mtypeOne \valTwo} = J(\sem{\valTwo}); J(\boxsem, \id_I) \) and \( J(\sem{\boxedCirc{\struct\labOne}{\circuitTwo}{\struct{\labTwo}}}) = J(\widehat{\semM{\boxedCirc{\struct \labOne} \circuitTwo {\struct \labTwo}}}, \id_I); J(\boxsem, \id_I)\) by the definitions, we only need to show that \( \sem{\valTwo}  = (\widehat{\semM{\boxedCirc{\struct \labOne} \circuitTwo {\struct \labTwo}}}, \id_I) \).
      By the induction hypothesis, we have
      \begin{align*}
      \sem{\config {\id_\lcOne} {\app \valTwo {\struct \labOne}}}  = \sem{\app \valTwo {\struct \labOne}}  = J(\id_1, \semM{\boxedCirc{\struct \labOne} \circuitTwo {\struct \labTwo}}) = \sem{\config \circuitTwo {\struct \labTwo}}.
      \end{align*}
      Thus, we have
      \begin{align*}
        \sem{\valTwo} = (\Lambda(\sem{\app \valTwo {\struct \labOne}}), \id_I) = (\Lambda(J(\id_1, \semM{\boxedCirc{\struct \labOne} \circuitTwo {\struct \labTwo}})), \id_I) = ( \widehat{\semM{\boxedCirc{\struct \labOne} \circuitTwo {\struct \labTwo}}}, \id_I)
      \end{align*}
        as desired.
    \item[Case \textit{return}:]
      In this case, we have \( \termOne = \return \valOne \) and \( \circuitTwo = \circuitOne \).
      Since \( \circuitTwo = \circuitOne \), it suffices to show that \( \sem{\termOne} = J((\id_1, \perm); \sem{\valOne})\) for some suitable permutation isomorphism \( \perm \), and this is obvious by the definition of \( \sem{\return \valOne} \).
    \item[Case \textit{let}:]
      In this case, we must have
      \[
      \inferrule
      {\config{\circuitOne}{\termTwo_1} \eval \config{\circuitTwo}{\valTwo} \\
      \config{\circuitThree}{\termTwo_2\sub{\valTwo}{\varOne}} \eval \config{\circuitTwo}{\valOne}}
      {\config{\circuitOne}{\letin{\varOne}{\termTwo_1}{\termTwo_2}} \eval \config{\circuitOne'}{\valOne}}
      \]
      with \( \termOne =  \letin{\varOne}{\termTwo_1}{\termTwo_2} \) for some terms \( \termTwo_1, \termTwo_2 \), value \( \valTwo \) and circuit \( \circuitThree \).
      By the inversion on the typing rule, we must also have
      \begin{gather*}
        \cjudgst {\lcOne_{\termTwo_1}} {\termTwo_1} \typeTwo
        \quad \text{and} \quad
        \cjudgst {\lcOne_{\termTwo_2}, \varOne : \typeTwo} {\termTwo_2} \typeOne
      \end{gather*}
      for some \( \lcOne_{\termTwo_1} \) and \( \lcOne_{\termTwo_2} \) such that \( \lcOne_{\termOne} = \lcOne_{\termTwo_2}, \lcOne_{\termTwo_1} \).
      \begin{align*}
        \sem{\config \circuitOne \termOne}
        &= J(\id_1, \circuitOne); (\sem{\letin{\varOne}{\termTwo_1}{\termTwo_2}} \ltensor \sem{\lcOne'}) \\
        &= J(\id_1, \circuitOne); \left((\sem{\lcOne_{\termTwo_2}} \rtensor \sem{\termTwo_1}) ; \sem{\termTwo_2}\right)  \ltensor \sem{\lcOne'}) \\
        &= J(\id_1, \circuitOne); (\sem{\lcOne_{\termTwo_2}} \rtensor \sem{\termTwo_1} \ltensor \sem{\lcOne'}) ; (\sem{\termTwo_2}  \ltensor \sem{\lcOne'}) \\
        &= J(\id_1, \circuitTwo); (\id_{\sem{\lcOne_{\termTwo_2}}} \otimes J(\sem{\valTwo}) \ltensor \sem{\lcOne'}) ; (\sem{\termTwo_2}  \ltensor \sem{\lcOne'}) \tag{by I.H.}\\
        &= J(\id_1, \circuitTwo); \left(\left((\id_{\sem{\lcOne_{\termTwo_2}}} \otimes J(\sem{\valTwo})) ; \sem {\cjudgst {\lcOne_{\termTwo_2}, \varOne : \typeTwo} {\termTwo_2} \typeOne}\right)  \ltensor \sem{\lcOne'}\right) \\
        &= J(\id_1, \circuitTwo); \left( \sem{\termTwo_2 \sub \valTwo \varOne}  \ltensor \sem{\lcOne'}\right) \tag{by substitution lemma (Lemma~\ref{lem:subst-st})} \\
        &= J(\id_1, \circuitOne'); (J(\sem{\valOne}) \ltensor \sem{\lcOne'}) \tag{by I.H.} \\
        &= \sem{\config {\circuitOne'} \valOne}
      \end{align*}
 \end{description}
\end{proof}

\subsection{Computational Adequacy}
\newcommand*{\closedVals}[2]{\mathrm{CVal}_{#1 \vdash #2}}
\newcommand*{\closedTerms}[2]{\mathrm{CTerm}_{#1 \vdash #2}}
\newcommand*{\lRel}[2]{\mathrel{\mathcal{R}}_{#1 \vdash #2}}
\newcommand*{\lvRel}[2]{\mathrel{\mathcal{V}}_{#1 \vdash #2}}
\newcommand*{\lvRelCtx}[1]{\mathrel{\mathcal{V}}_{#1}}
\newcommand*{\semvalOne}{v}
\newcommand*{\semvalTwo}{w}
\renewcommand*{\unitv}{*}
\newcommand*{\subst}{\gamma}
\newcommand*{\dom}{\mathrm{dom}}
\par
The proof of computational adequacy also follows a standard strategy: we define a logical relation between semantics and syntax of \PQM{}.
Similar logical relations have been considered by Colledan and Dal Lago~\cite{ColledanDalLago24a,ColledanDalLago25} to prove the correctness of the type system of \PQR{}.
(These are defined by purely operational means.)

We say that a value (resp.~term) is \emph{closed} if the value (resp.~term) does not have any free variables (but it may have some labels).
The set of closed values of type (resp.~term) \( \typeOne \) that is well-typed under the labeled context \( \lcOne \) is denoted by \( \closedVals \lcOne \typeOne \) (resp.~\( \closedTerms \lcOne \typeOne \)).
We define relations  (indexed by judgment of the form \( \lcOne \vdash \typeOne \)) \( \lRel \lcOne \typeOne \subseteq (\flat \sem{\typeOne} \times \catM(\semM \lcOne, \sharp \sem \typeOne)) \times \closedTerms \lcOne \typeOne \) and \( \lvRel \lcOne \typeOne \subseteq \flat \sem{\typeOne}\times \closedVals \lcOne \typeOne\)  as the smallest relations satisfying the following conditions.

\begin{itemize}
  \item \( (\semvalOne, \circuitOne; \perm) \lRel \lcOne \typeOne \termOne\) if and only if \( \config {\id_{\semM {\lcOne}}} \termOne \eval \config \circuitOne \valOne \), \( \circuitOne \colon \lcOne \to \lcTwo' \) and \( \semvalOne \lvRel \lcTwo \typeOne \valOne \) for some label contexts \( \lcTwo' \) and \( \lcTwo \) such that \( \semM{\lcTwo'} \xrightarrow[\cong]{\perm} \semM{\lcTwo}\) where \( \perm \) is a permutation isomorphism.
  \item  \( * \lvRel \emptycontext \unitt \unitv \)
  \item \( n \lvRel \emptycontext \natt n \)
  \item \( * \lvRel {\labOne : \wtypeOne} \wtypeOne \labOne\)
  \item \( f \lvRel \lcOne {\arrowst \typeOne \typeTwo \mtypeOne} \valOne \) if and only if for all \( \valTwo \) such that \( \semvalTwo \lvRel \lcTwo \typeOne \valTwo \), we have \(\app f \semvalTwo \lRel {\lcOne, \lcTwo} \typeTwo \app \valOne \valTwo \). %
  \item  \( f \lvRel \emptycontext {\bang{\typeOne}} \valOne \) if and only if \(\app f * \lRel \emptycontext \typeTwo \force \valOne \)
  \item \( \sem{\boxedCirc{\struct \labOne}{\circuitOne}{\struct \labTwo}} \lvRel \emptycontext {\circt{} \mtypeOne \mtypeTwo} \boxedCirc {\struct\labOne} \circuitOne {\struct \labTwo} \)
  \item \( (\semvalOne, \semvalTwo) \lvRel {\lcOne_1, \lcOne_2} {\typeOne \otimes \typeTwo} \tuple \valOne \valTwo \) if and only if \( \semvalOne \lvRel {\lcOne_1} \valOne \) and \( \semvalTwo \lvRel {\lcOne_2} \valTwo \).
\end{itemize}

We extend the relation to a relation between pairs of a value and a substitution and typing contexts by
\[
  (\semvalOne, \subst) \lvRelCtx {\lcOne} (a_1 : \typeOne_1, \ldots, a_n : \typeOne_n) \iff  \semvalOne \lvRel \lcOne {\typeOne_1 \otimes \cdots \otimes \typeOne_n}{\langle \subst(a_1), \ldots, \subst(a_n) \rangle}
\]
where
\begin{itemize}
  \item \( \semvalOne\) is an element of \( \flat \sem{\typeOne_1} \times \cdots \times \flat \sem{\typeOne_n} \) and \item \( \subst \) is a map from \( \dom(\contextOne) \) to closed values such that, for each \( a_i \),  the only labels appearing in \( \subst(a_i) \) are those in \( \lcOne\).
\end{itemize}

The definition of \( \lRel \lcOne \typeOne \) evaluates \( \termOne\) with the identity circuit, but this does not loose generality because if the initial configuration has a circuit that is not the identity we can just concatenate the initial circuit to the circuit obtained by evaluating \( \termOne \) with the identity circuit.
\begin{lemma}
  \label{lem:config-id}
  Suppose that \( \circuitOne \colon \lcOne_0 \to \lcOne_1, \lcTwo, \lcOne_2 \) and \( \lcTwo \vdash \termOne \colon A \).
  If \( \config {\id_{\semM \lcTwo}} \termOne \eval \config \circuitTwo \valOne \) and \( \config \circuitOne \termOne \eval \config \circuitThree \valTwo \), then \( \valOne = \valTwo \) (up to renaming of labels) and \( \circuitThree = \circuitOne;(\semM{\lcOne_1} \rtensor (\circuitTwo \ltensor \semM{\lcOne_2})) \).
\end{lemma}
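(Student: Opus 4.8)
The plan is to prove a slightly more general statement, of which the lemma is the special case where the first circuit is the identity. Namely, I would show: whenever $\cjudgst{\lcTwo}{\termOne}{\typeOne}$ and two circuits $\circuitOne : \lcOne_0 \to \lcOne_1, \lcTwo, \lcOne_2$ and $\circuitOne' : \lcOne_0' \to \lcOne_1', \lcTwo, \lcOne_2'$ both expose the \emph{same} slice $\lcTwo$ in their output, and $\config{\circuitOne}{\termOne}\eval\config{\circuitTwo}{\valOne}$ and $\config{\circuitOne'}{\termOne}\eval\config{\circuitThree}{\valTwo}$, then $\valOne = \valTwo$ up to renaming and there is a single ``local effect'' circuit $G : \semM{\lcTwo}\to\semM{\lcTwo'}$, independent of the framing up to renaming, with $\circuitTwo = \circuitOne;(\semM{\lcOne_1}\rtensor(G\ltensor\semM{\lcOne_2}))$ and $\circuitThree = \circuitOne';(\semM{\lcOne_1'}\rtensor(G\ltensor\semM{\lcOne_2'}))$. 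The stated lemma is the instance $\circuitOne = \cidentity{\semM{\lcTwo}}$ (so $\lcOne_1=\lcOne_2=\emptyset$ and $G=\circuitTwo$). The reason this generalization is forced is the \textit{let} case: there the sub-derivation for the first component does \emph{not} evaluate it against an identity on its own context, but against an identity also carrying the frame contributed by the continuation's context, and absorbing such frames symmetrically on both runs is exactly what the general statement permits.

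I would proceed by induction on the derivation of $\config{\circuitOne}{\termOne}\eval\config{\circuitTwo}{\valOne}$. Since the evaluation rules are syntax-directed by $\termOne$, the companion run $\config{\circuitOne'}{\termOne}$ has the same shape, and in each case I invert both derivations. Three auxiliary facts are used freely: Theorem~\ref{thm:type-preservation}, to name the output interface $\lcTwo'$ of $G$ so that the intermediate circuits again expose the relevant slice; totality of evaluation on well-typed terms (the simply typed fragment has no recursion, so every well-typed closed configuration evaluates); and equivariance of $\eval$ under label renaming, to align terms differing only by a renaming of captured labels and to compare the fresh labels produced by \textit{box} and \textit{apply}.

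For \textit{return} the circuit is unchanged and $G=\cidentity{\semM{\lcTwo}}$; for \textit{box} the context is parametric so $\lcTwo=\emptyset$, $G=\cidentity{I}$, and the boxed value is produced by an internal run from a fresh identity that never inspects the ambient circuit, so both runs yield the same boxed circuit up to renaming. The unary reductions \textit{app}, \textit{dest}, \textit{ifz}, \textit{force} each reduce to a single sub-run on a term typed under the same $\lcTwo$, so a direct appeal to the induction hypothesis closes them. For \textit{apply} I would isolate a small sub-lemma stating that $\append{-}{\struct\labThree}{\boxedCirc{\struct\labOne}{\circuitOne}{\struct\labTwo}}$ acts only on the slice identified by $\struct\labThree$: since $\struct\labThree$ is identical in both runs, the renamed inner circuit and the fresh output labels coincide up to renaming, so $G$ is precisely that renamed inner circuit and the two results are its framings into $\circuitOne$ and $\circuitOne'$. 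The real work is \textit{let}: writing $\lcTwo=\lcTwo_2,\lcTwo_1$ with $\lcTwo_1$ typing the first component and $\lcTwo_2,x$ typing the second, I apply the induction hypothesis first to the two runs of the first component (absorbing $\lcTwo_2$ into the left frame) to get a common $G_1$ and equal values, and then, after using Theorem~\ref{thm:type-preservation} to see that the intermediate circuits expose $\lcTwo_2,\lcTwo_1'$ and that the substituted continuation is typed there, to the two runs of the continuation to get a common $G_2$. Setting $G \defeq (\semM{\lcTwo_2}\rtensor G_1); G_2$, the circuit equation reduces, via functoriality of the whiskerings $\semM{\lcOne_1}\rtensor-$ and $-\ltensor\semM{\lcOne_2}$, to the identity
\[
\semM{\lcOne_1}\rtensor\bigl((\semM{\lcTwo_2}\rtensor G_1)\ltensor\semM{\lcOne_2}\bigr)
= (\semM{\lcOne_1}\otimes\semM{\lcTwo_2})\rtensor(G_1\ltensor\semM{\lcOne_2}).
\]

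I expect \textit{let} to be the main obstacle, precisely because the absence of the interchange law means one must keep the order of the two framings straight and discharge the coherence identity above by hand; it holds in any strict premonoidal category since only the single active morphism $G_1$ is whiskered against objects, so associativity of the left action and the binoidal compatibility of left/right object-whiskering suffice without the missing interchange. The secondary difficulty is purely bookkeeping: threading the ``up to renaming'' equalities uniformly, so that the common local effect $G$ and the values genuinely match across the two runs in the \textit{apply} and \textit{box} cases.
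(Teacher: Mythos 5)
Your proposal is correct and follows the same route as the paper, which proves this lemma simply ``by induction on the derivation of \( \config \circuitOne \termOne \eval \config \circuitThree \valTwo \)'' without recording any details. Your strengthening of the induction hypothesis to two arbitrary framings of the same slice is a sensible (and arguably necessary) way to make that induction close in the \textit{let} case, and your handling of the premonoidal coherence identity is sound; the appeal to totality of evaluation is not actually needed once the statement is generalized, since both sub-runs are obtained by inversion of the given derivations.
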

\begin{proof}
  By induction on the derivation of \( \config \circuitOne \termOne \eval \config \circuitThree \valTwo \).
\end{proof}

The fundamental property of the logical relations hold.
And, as usual, the theorem is a direct consequence of the fundamental property.
\begin{lemma}[Fundamental Property]
  Let \( \contextOne \) be a type environment, \( \lcOne \) be a label context such that \( \semM{\lcOne} = \sharp \sem \contextOne \).
  Suppose that \( (\semvalOne, \subst) \lvRelCtx \lcOne \contextOne \).
  Then the following holds.
  \begin{enumerate}
    \item \( \vjudgst \contextOne \valOne \typeOne \) implies \( \flat \sem \valOne(\semvalOne) \lvRel \lcOne \typeOne \subst(\valOne) \).
    \item \( \cjudgst \contextOne \termOne \typeOne \) implies, \( \sem{\termOne}(\semvalOne) \lRel \lcOne \typeOne \subst(\termOne)  \)
  \end{enumerate}
\end{lemma}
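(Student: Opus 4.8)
The plan is to prove statements~(1) and~(2) simultaneously by induction on the typing derivation, following the standard recipe for a fundamental lemma. The value cases \textit{unit}, \textit{nat}, \textit{lab}, and \textit{circ} are immediate, since each matches one of the defining clauses of $\mathcal{V}$ verbatim, while the \textit{var} case simply reads off the relevant component of $\semvalOne$ from the hypothesis $(\semvalOne, \subst) \lvRelCtx \lcOne \contextOne$. The \textit{pair} case splits the context and the tuple along the product and applies the induction hypothesis to each component. The substantive value cases are \textit{abs} and \textit{lift}: here I unfold the clause of $\mathcal{V}$ at the arrow type $\arrowst \typeOne \typeTwo \mtypeOne$ (respectively the bang type), fix an arbitrary related argument $\semvalTwo \lvRel \lcTwo \typeOne \valTwo$, and observe that the operational \textit{app} rule (respectively \textit{force}) reduces $\app{\subst(\valOne)}{\valTwo}$ to the body of $\subst(\valOne)$ with $\valTwo$ substituted for $x$. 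One checks that the extended substitution is still related to the extended context, and then the induction hypothesis on the body, combined with the substitution lemma (Lemma~\ref{lem:subst-st}) to relate $\sem{\termOne\sub{\valTwo}{x}}$ to $\sem{\termOne}$ precomposed with $\sem{\valTwo}$, closes the goal.

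For computations, the base case \textit{return} is immediate: $\return{\valOne}$ evaluates in one step without touching the underlying circuit, so statement~(1) supplies a related value and hence membership in $\mathcal{R}$. The compositional cases \textit{app}, \textit{force}, \textit{dest}, \textit{ifz}, and \textit{let} all share one template, which I illustrate with \textit{let}. The induction hypothesis on the first premise yields that $\subst(\termTwo_1)$ evaluates from the identity circuit to a related configuration $\config{\circuitTwo}{\valTwo}$; extending $\subst$ by $x \mapsto \valTwo$ keeps it related to the extended context, so the induction hypothesis on the second premise yields that $\subst(\termTwo_2)\sub{\valTwo}{x}$ also evaluates. I splice the two evaluations using the operational \textit{let} rule, invoking Lemma~\ref{lem:config-id} to absorb the fact that the continuation runs not from the identity but from $\circuitTwo$. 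That the resulting composite circuit and value match the denotation of $\letin{\varOne}{\termTwo_1}{\termTwo_2}$ is precisely the equality established in the \textit{let} case of the soundness proof (Theorem~\ref{thm:soundness}), which I reuse; the evaluation itself is what the induction hypotheses contribute.

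The genuinely CDL-specific cases \textit{apply} and \textit{box} carry the real content, and \textit{box} is the step I expect to be the main obstacle. For \textit{apply} I must check that the semantic operator $\applysem$ appends exactly the boxed circuit carried in the denotation, matching the operational \emph{append} up to the label renaming built into the equivalence $\cong$ on boxed circuits. For \textit{box}, the operational rule uses \emph{freshlabels} to produce $(\lcOne, \struct\labOne)$ with $\vjudgst{\lcOne}{\struct\labOne}{\mtypeOne}$, hence $\semM{\lcOne} = \sharp\sem{\mtypeOne}$, then evaluates $\app{\subst(\valOne)}{\struct\labOne}$ starting from $\cidentity{\lcOne}$ and boxes the result. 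The plan is to first note that the fresh bundle is related to itself, $* \lvRel{\lcOne}{\mtypeOne} \struct\labOne$, by induction on the structure of $\mtypeOne$ using the \textit{lab}, unit, and pair clauses. Applying statement~(1) to the captured function $\subst(\valOne)$ at the arrow type $\arrowst{\mtypeOne}{\mtypeTwo}{}$, and then feeding it this related argument through the arrow clause of $\mathcal{V}$, shows that $\app{\subst(\valOne)}{\struct\labOne}$ indeed evaluates in isolation to some $\config{\circuitTwo}{\struct\labTwo}$, discharging the premise of the operational \textit{box} rule and producing the boxed value $\boxedCirc{\struct\labOne}{\circuitTwo}{\struct\labTwo}$. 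The remaining delicacy is bookkeeping: one must verify that the isolated evaluation touches only the fresh labels in $\lcOne$, so that the interface types of the resulting box align with $\mtypeOne$ and $\mtypeTwo$ as demanded by the \textit{circ} clause, and then check---again by reusing the \textit{box} case of the soundness proof together with the $\boxsem$-coercion of Proposition~\ref{prop:box-as-iso}---that $\sem{\boxt{\mtypeOne}{\valOne}}(\semvalOne)$ coincides with $\sem{\boxedCirc{\struct\labOne}{\circuitTwo}{\struct\labTwo}}$, so that the \textit{circ} clause of $\mathcal{V}$ applies verbatim.
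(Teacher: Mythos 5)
Your proposal is correct and follows essentially the same route as the paper's proof: a simultaneous induction on the typing derivation whose crux is the \textit{box} case (relating the fresh label bundle to the unit element, invoking the arrow clause of the value relation to obtain the isolated evaluation, then discharging the operational \textit{box} rule) and the \textit{let} case (splicing the two evaluations via Lemma~\ref{lem:config-id} and matching the composite circuit against the Kleisli-style interpretation). The only cosmetic difference is that you reuse the semantic equalities from the soundness proof where the paper recomputes them inline, and you make explicit the auxiliary fact \( * \lvRel{\lcOne}{\mtypeOne} \struct\labOne \) that the paper leaves implicit.
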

\begin{proof}
  By simultaneous induction on the type derivation.
  We show two of the most interesting cases: the case for \textit{box} and \textit{let}.
  The other cases can be proved in a similar manner.

  \begin{description}
    \item[Case \textit{box}:]
      In this case, we must have \( \termOne = \boxt{}{\valOne} \), \( \contextOne = \pcontextOne \) and \( \typeOne = \circt{}{\mtypeOne}{\mtypeTwo}{}\) for some value \( \valOne \), parameter context \( \pcontextOne \) and bundle types \( \mtypeOne \) and \( \mtypeTwo \).
      Moreover, we have \( \vjudgst \pcontextOne \valOne {\arrowst{\mtypeOne}{\mtypeTwo}{}} \).
      \par
      Suppose that \( (\semvalOne, \subst) \lvRelCtx \emptyset \pcontextOne \).
      Our goal is to show \( \sem{\termOne}(\semvalOne) \lRel{\emptyset}{\circt{}{\mtypeOne}{\mtypeTwo}} \subst(\termOne)\).
      By induction hypothesis, we have \( \flat \sem{\valOne}(\semvalOne) \lvRel{\emptyset}{\arrowst{\mtypeOne}{\mtypeTwo}{}} \subst(\valOne) \).
      Hence, we have \( \flat \sem{\valOne}(\semvalOne)(*) \lRel {\struct \labOne : \mtypeOne} {\mtypeTwo} \app {\subst(\valOne)} {\struct \labOne} \).
      By the definition of \( \lRel {\struct \labOne : \mtypeOne} {\mtypeTwo} \), we have
      \begin{gather}
        \config {\id_{\semM{\mtypeOne}}} {\app {\subst(\valOne)} {\struct \labOne}} \eval \config \circuitOne {\struct \labTwo} \label{eq:adequacy-circ-red}\\
        \flat \sem{\valOne}(\semvalOne)(*) = (*, \circuitOne; \perm) \label{eq:adequacy-circ-sem} \\
        * \lvRel {\lcTwo}{\mtypeTwo} \struct \labTwo \\
        \perm \colon \semM{\lcTwo'} \xrightarrow{\cong} \semM{\lcTwo}
      \end{gather}
      where \( C \colon \semM{\mtypeOne} \to \semM{\lcTwo'} \)
      By applying the \textit{box} rule to \eqref{eq:adequacy-circ-red}, we have \( \config{\id_{\munitt}}{\subst(\boxt {} {\valOne})} \eval \config{\id_{\munitt}}{\boxedCirc {\struct \labOne} \circuitOne {\struct \labTwo}}\).
      It remains to show that \( \sem{\termOne}(\semvalOne) = (\semM{\boxedCirc {\struct \labOne} \circuitOne {\struct \labTwo}}, \id_\munitt) \), and this is an immediate consequence of \eqref{eq:adequacy-circ-sem} and the interpretation of \( \boxt{}{}{} \).
    \item[Case \textit{let}:]
      It must be the case that
      \begin{gather*}
        \termOne = (\letin x {\termTwo} {\termThree}) \quad \contextOne = \pcontextOne, \contextOne_2, \contextOne_1 \\
        \cjudgst {\pcontextOne, \contextOne_1} {\termTwo} {\typeTwo} \\
        \cjudgst {\pcontextOne, \contextOne_2, x : \typeTwo} {\termThree} {\typeOne}
      \end{gather*}
      for some terms \( \termTwo \) and \( \termOne \), some type \( \typeTwo \) and  contexts \( \pcontextOne \), \( \contextOne_1 \) and \( \contextOne_2\).
      Let \( \lcOne_1 \) and \( \lcOne_2 \) be label contexts such that \( \semM{\lcOne_1} = \sharp \sem {\contextOne_1}\)
      and \( \semM{\lcOne_2} = \sharp \sem {\contextOne_2}\).

      Suppose that \( (\semvalOne, \subst) \lvRelCtx{\lcOne_2, \lcOne_1} \contextOne \).
      We need to show that \( \sem{\termOne}(\semvalOne) \lRel_{\lcOne_2, \lcOne_1}  \subst(\termOne) \).
      Let us write \( \semvalOne_0\) , \( \semvalOne_1 \) and \( \semvalOne_2 \) for the \( \pcontextOne\), \( \contextOne_1 \) and \( \contextOne_2 \) part of \( \semvalOne \), respectively, and let \( \subst_1 \defeq \subst \!\upharpoonright_{\dom (\pcontextOne, \contextOne_1)}\) and \( \subst_2 \defeq \subst \!\upharpoonright_{\dom (\pcontextOne, \contextOne_2)}\).
      By the induction hypothesis, we have
      \begin{align}
        \sem{\termTwo}(\semvalOne_0, \semvalOne_1) \lRel{\lcOne_1}{\typeTwo} \subst_1(\termTwo) \label{eq:adequacy-let-N-lrel}\\
        \sem{\termThree}(\semvalOne_0, \semvalOne_2, \semvalTwo) \lRel{\lcOne_2, \lcOne_1}{\typeOne} [\valTwo/x]\subst_2(\termThree) \label{eq:adequacy-let-P-lrel}
      \end{align}
      where
      \begin{align}
        \sem{\termTwo}(\semvalOne_0, \semvalOne_1) = (\semvalTwo, \circuitOne) \label{eq:adequacy-let-N-sem}\\
        \config {\id_{\semM{\lcOne_1}}}{\subst_1(\termTwo)} \eval \config{\circuitOne'}{\valTwo} \label{eq:adequacy-let-N-eval}
      \end{align}
  \end{description}
  By \eqref{eq:adequacy-let-N-lrel}, we must have \( \semvalTwo \lvRel {\lcOne_1}{\typeTwo} \valTwo\) and \( \circuitOne = \circuitOne' \) (up to permutation which we shall ignore for simplicity).
  By applying Lemma~\ref{lem:config-id} to \eqref{eq:adequacy-let-N-eval}, we also have
  \begin{align}
    \config{\id_{\semM{\lcOne_2, \lcOne_1}}}{\subst_1(\termTwo)} \eval \config{ \semM{\lcOne_2} \rtensor \circuitOne'} {\valTwo}. \label{eq:adequacy-let-N-eval-with-wires}
  \end{align}
  Now suppose that
  \begin{align}
         \sem{\termThree}(\semvalOne_0, \semvalOne_2, \semvalTwo) = (\semvalOne_{\mathrm{ret}}, \circuitTwo) \label{eq:adequacy-let-P-sem}\\
        \config {\id_{\semM{\lcOne_1}}}{\subst_1(\termTwo)} \eval \config{\circuitTwo'}{\valOne} \label{eq:adequacy-let-P-eval}\
  \end{align}
  By the definition of \eqref{eq:adequacy-let-P-lrel}, it must be the case that \( \semvalOne_{\mathrm{ret}} \lvRel {\lcOne_2, \lcOne_1}{\typeTwo} \valOne\) and \( \circuitTwo = \circuitTwo' \) (again, up to permutation which we ignore).
  Then, once again, by Lemma~\ref{lem:config-id}, we have
  \begin{align}
    \config{\semM{\lcOne_2} \rtensor \circuitOne'}{[\valTwo/x] \subst_2(\termThree)} \eval \config{(\semM{\lcOne_2} \rtensor \circuitOne') ;\circuitTwo'}{\valOne} \label{eq:adequacy-let-P-eval-with-wires}
  \end{align}
  It follows that
  \begin{align*}
    \config{\id_{\semM{\lcOne_2, \lcOne_1}}}{\subst(\termOne)} \eval \config{(\semM{\lcOne_2} \rtensor \circuitOne') ;\circuitTwo}{\valOne}
  \end{align*}
  by applying the rule \textit{let} to \eqref{eq:adequacy-let-N-eval-with-wires} and \eqref{eq:adequacy-let-P-eval-with-wires}.
  From the definition of \( \sem{\termOne} \), \eqref{eq:adequacy-let-N-sem}, and \eqref{eq:adequacy-let-P-sem} together with  \( \semvalOne_{\mathrm{ret}} \lvRel {\lcOne_2, \lcOne_1}{\typeTwo} \valOne\) we have \( \flat(\sem{\termOne}(\semvalOne_0, \semvalOne_2, \semvalOne_1)) \lvRel{\lcOne_2, \lcOne_1}{\typeOne} \valOne \).
  The equality (up to permutation) on the circuit part also holds.
  We, therefore, have \( \sem{\termOne}(\semvalOne) \lRel {\lcOne_2, \lcOne_1} {\typeOne} \subst(\termOne) \).
\end{proof}

\computationalAdequacy*
\begin{proof}
  The assumption \( \configjudgment \emptycontext \circuitOne \termOne \unitt \emptycontext \) means that there is a label context \( \lcOne \) such that \( \cjudgst \lcOne \termOne \unitt \).
  Since \( \termOne \) is closed, using the previous lemma, we obtain \( \sem \termOne (*) \lRel {\lcOne} \unitt\termOne \).
  By the definition of \( \lRel {\lcOne} \unitt \), we have \( \config{\id_{\semM \lcOne}} \termOne \eval \config \circuitThree  *  \) where \( \sem \termOne (*) = (*, \circuitThree)\).
  By  \( \sem{\config \circuitOne \termOne} = J(\sem{\config \circuitTwo \valOne})\), it follows that \( \circuitTwo = \circuitOne; \circuitThree \).
  From this and Lemma~\ref{lem:config-id}, we have \( \config \circuitOne \termOne \eval \config \circuitTwo \valOne \) as desired.
\end{proof}

\begin{remark}
  To prove the adequacy for terms typed in the type-and-effect system, we just need to add effect annotations to the logical relations.
  That is, the logical relation for computations, now becomes \( \lRel{\lcOne} {\typeOne}^{\effOne} \subseteq (\flat \sem{\typeOne} \times \catM^{\ple \effOne}(\semM \lcOne, \sharp \sem \typeOne)) \times \closedTerms \lcOne {\typeOne; \effOne} \) where \( \closedTerms \lcOne {\typeOne; \effOne} \) is the set of terms \( \termOne \) such that \( \cjudgeff \lcOne \termOne \typeOne \effOne \).
  Note that the set of circuits are now restricted to  \( \catM^{\ple \effOne}(\semM \lcOne, \sharp \sem \typeOne)) \).
  Relations \( \lvRel {\lcOne} {\typeOne} \) for arrow type, thunk type, and circuit type are changed accordingly.
\end{remark}

\clearpage
\section{Omitted Definitions from Section~\ref{sec:effect-system}}
\label{appx:interpretation-eff}
This section shows the typing rules and the interpretation that was omitted from Section~\ref{sec:effect-system}.

\subsection{Full Definition of the Typing Rules}
Here we list the full list of typing rules for reference.
The rules that were omitted from Section~\ref{sec:effect-system} are mostly identical to the corresponding rules of \PQC{}.
\begin{figure}[h]
	\centering
	\fbox{\begin{mathpar}
			\inference[\textit{unit}]
			{ }
			{\vjudgst{\pcontextOne}{\unitv}{\unitt}}
			\and
      \inference[\textit{nat}]
			{ }
			{\vjudgst{\pcontextOne}{n}{\natt}}
			\and
			\inference[\textit{lab}]
			{ }
			{\vjudgst{\pcontextOne, \labOne:\wtypeOne}{\labOne}{\wtypeOne}}
			\and
			\inference[\textit{var}]
			{ }
			{\vjudgst{\pcontextOne,\varOne:\typeOne}{\varOne}{\typeOne}}

			\and

			\inference[\textit{abs}]
			{\cjudgeff{\contextOne,\varOne:\typeOne}{\termOne}{\typeTwo} {\effOne \colon \eObjOne \to \eObjTwo}}
			{\vjudgst{\contextOne}{\abs{\varOne}{\typeOne}{\termOne}}{\arroweff{\typeOne}{\typeTwo}{\rcount{\contextOne}} {\effOne \colon
				\eObjOne \to \eObjTwo}}}
			\and
			\inference[\textit{app}]
			{\vjudgst{\pcontextOne,\contextOne_1}{\valOne}{\arroweff{\typeOne}{\typeTwo}{\mtypeOne}{\effOne \colon \eObjThree \to \eObjTwo}}
				&
				\vjudgst{\pcontextOne,\contextOne_2}{\valTwo}{\typeOne}}
			{\cjudgeff{\pcontextOne,\contextOne_1,\contextOne_2}{\app{\valOne}{\valTwo}}{\typeTwo}{\effOne \colon \eObjThree \to \eObjTwo}}
			\and
			\inference[\textit{lift}]
			{\cjudgeff{\pcontextOne}{\termOne}{\typeOne} {\effOne \colon \eObjOne \to \eObjTwo}}
			{\vjudgst{\pcontextOne}{\lift{\termOne}}{\bangeff{\typeOne} \effOne}}
			\and
			\inference[\textit{force}]
			{\vjudgst{\pcontextOne}{\valOne}{\bangeff{\typeOne} \effOne}}
			{\cjudgeff{\pcontextOne}{\force{\valOne}}{\typeOne} {\effOne \colon \eObjOne \to \eObjTwo}}
			\and
      \inference[\textit{circ}]
			{\circjudgment{\circuitOne}{\lcOne}{\lcTwo}
			  &
        \lcOne \permequiv \lcOne'
        &
        \lcTwo \permequiv \lcTwo'
			  \\
				\vjudgst{\lcOne'}{\struct\labOne}{\mtypeOne}
				\;\;
				\vjudgst{\lcTwo'}{\struct\labTwo}{\mtypeTwo}
        \;\;
			  \abstraction(\sem{\boxedCirc{\struct\labOne}{\circuitOne}{\struct \labTwo}}) = \effOne
				} {\vjudgst{\pcontextOne}{\boxedCirc{\struct\labOne}{\circuitOne}{\struct\labTwo}}{\circt{\effOne}{\mtypeOne}{\mtypeTwo}}}
        \and
        			\inference[\textit{box}]
			{\vjudgst{\pcontextOne}{\valOne}{{\arroweff{\mtypeOne}{\mtypeTwo}{I}{\effOne\colon \eObjOne \to \eObjTwo}}}}
			{\cjudgeff{\pcontextOne}{\boxt{\mtypeOne}{\valOne}}{\circt{\effOne\colon \eObjOne \to \eObjTwo}{\mtypeOne}{\mtypeTwo}} \nulleff}

      \and
      			\inference[\textit{apply}]
			{\vjudgst{\pcontextOne,\contextOne_1}{\valOne}{\circt{\effOne \colon \eObjOne \to \eObjTwo}{\mtypeOne}{\mtypeTwo}}
				\\
				\vjudgst{\pcontextOne,\contextOne_2}{\valTwo}{\mtypeOne}}
			{\cjudgeff{\pcontextOne,\contextOne_1,\contextOne_2}{\apply{\valOne}{\valTwo}}{\mtypeTwo} {\effOne \colon \eObjOne \to \eObjTwo}}
			\\

			\\
			\and
			\inference[\textit{dest}]
			{\vjudgst{\pcontextOne,\contextOne_1}{\valOne}{\tensor{\typeOne}{\typeTwo}}
				\\\
				\cjudgeff{\pcontextOne,\contextOne_2,\varOne:\typeOne,\varTwo:\typeTwo}{\termOne}{\typeThree}{\effOne}}
			{\cjudgeff{\pcontextOne,\contextOne_2,\contextOne_1}{\dest{\varOne}{\varTwo}{\valOne}{\termOne}}{\typeThree}{\effOne}}
      \and
			\inference[\textit{ifz}]
			{\vjudgst{\pcontextOne}{\valOne}{\natt}
				\\\
				\cjudgeff{\pcontextOne,\contextOne}{\termOne}{\typeOne}{\effOne}
        \and
        \cjudgeff{\pcontextOne,\contextOne}{\termTwo}{\typeOne}{\effOne}}
			{\cjudgeff{\pcontextOne,\contextOne}{\ifz{\valOne}{\termOne}{\termTwo}}{\typeOne}{\effOne}}
			\and
			\inference[\textit{pair}]
			{\vjudgst{\pcontextOne,\contextOne_1}{\valOne}{\typeOne}
				\vjudgst{\pcontextOne,\contextOne_2}{\valTwo}{\typeTwo}}
			{\vjudgst{\pcontextOne,\contextOne_1,\contextOne_2}{\tuple{\valOne}{\valTwo}}{\tensor{\typeOne}{\typeTwo}}}
			\and
			\inference[\textit{return}]
			{\vjudgst{\contextOne}{\valOne}{\typeOne} \quad \abstraction(\sharp \sem \typeOne) = \eObjOne }
			{\cjudgeff{\contextOne}{\return{\valOne}}{\typeOne} {\id_{\eObjOne} \colon \eObjOne \to \eObjOne}}
			\and
			\inference[\textit{let}]
			{
			  \cjudgeff{\pcontextOne,\contextOne_1}{\termOne}{\typeOne} {\effOne_1 \colon \eObjOne_1 \to \eObjOne_1'}
				\\
				\cjudgeff{\pcontextOne,\contextOne_2,\varOne:\typeOne}{\termTwo}{\typeTwo}  {\effOne_2 \colon \eObjOne_2 \to \eObjOne_2'} \\
				\abstraction(\sem{\sharp \contextOne_i})= {\eObjTwo_i}  & \effOne = (\id_{\eObjTwo_2} \rtensor \effOne_1); \effOne_2
			 }
			{\cjudgeff{\pcontextOne,\contextOne_2,\contextOne_1}{\letin{\varOne}{\termOne}{\termTwo}}{\typeTwo} \effOne} \and
  	\inference[\textit{sub}]
			{
			  \cjudgeff{\contextOne}{\termOne}{\typeOne} {\effOne_1 \colon \eObjOne \to \eObjTwo} &
        \effOne_1 \ple \effOne_2
      }
      {\cjudgeff{\contextOne}{\termOne}{\typeOne} {\effOne_2 \colon \eObjOne \to \eObjTwo}}
      \and
      \inference[\textit{ex}]
			{ \mathbf{perm} : \sem{\contextOne_1} \otimes \sem{\typeOne} \otimes \sem{\typeTwo} \otimes \sem{\contextOne_2} \xrightarrow{\cong}  \sem{\contextOne_1} \otimes \sem{\typeTwo} \otimes \sem{\typeOne} \otimes \sem{\contextOne_2} \\ \cjudgeff{\contextOne_1, \lvOne : \typeOne, \lvTwo : \typeTwo, \contextOne_2}{\termOne}{\typeThree}{\effOne \colon \eObjOne \to \eObjTwo}}
			{\cjudgeff{\contextOne_1, \lvTwo : \typeTwo, \lvOne : \typeOne, \contextOne_2}{\termOne}{\typeThree}{(\alpha(\mathbf{perm});\effOne})}
			\end{mathpar}}
      \caption{Typing rules for the effect system of \PQR{} (full definition).}
	\label{fig:appx-eff-typing-rules}
\end{figure}

\subsection{Interpretation of Value Judgments}
We first show how value judgments are interpreted.
As explained, value judgments are interpreted in \( \catV ( = \Set \times \disc(\catM))\) and the interpretation is essentially the same as that of the simply-typed system.
Here we only show the interpretation of values with types that have effect annotations.
\begin{align*}
  &\sem{\vjudgst{\contextOne}{\abs{\varOne}{\typeOne}{\termOne}}{\arroweff{\typeOne}{\typeTwo}{\rcount{\contextOne}} {\effOne \colon
				\eObjOne \to \eObjTwo}}}
  \defeq (\Lambda_{\effOne}(\sem{\cjudgeff{\contextOne,\varOne:\typeOne}{\termOne}{\typeTwo}{\effOne}}),  \id_{\semM{\rcount{\contextOne}}})) \\
  &\sem{\vjudgst{\pcontextOne}{\lift{\termOne}}{\bangeff{\typeOne}{\effOne}}} \defeq  (\Lambda_{\effOne}(\sem{\cjudgst{\pcontextOne}{\termOne}{\typeOne}}), \id_I)  \\
  &\sem{\vjudgst{\pcontextOne}{\boxedCirc{\struct\labOne}{\circuitOne}{\struct\labTwo}}{\circt{\effOne}{\mtypeOne}{\mtypeTwo}}}  \defeq  (!_{\sem \pcontextOne}; \widehat {\sem{\boxedCirc{\struct \labOne} \circuitOne {\struct \labTwo}}}, \id_I)
\end{align*}
The above interpretations are almost identical to those of Figure~\ref{fig:interpretation-val-judgment-st}.
The only difference is the natural bijection \( \Lambda_\effOne \), which is now indexed with \( \effOne \):
\begin{align*}
  \Set(X \times Y, Z \times \catM^{\ple \effOne}(\mObjOne, \mObjTwo)) \xrightarrow[\cong]{\Lambda_{\effOne, X, Y, Z}}\Set(X, \functionSet Y {Z \times \catM^{\ple \effOne}(\mObjOne, \mObjTwo))}.
\end{align*}
The morphism \( \hat C \colon 1 \to \catM^{\ple \effOne}(\semM{\mObjOne}, \semM{\mObjTwo}) \) is the global element \( \hat C (*) = C \) (which in turn can be defined using \( \Lambda_{e} \)).
\subsection{Interpretation of Computational Judgments}
Recall that \( \cjudgeff \contextOne \termOne \typeOne \effOne  \) is interpreted as a morphism in \( \Set \) from \( \flat \sem \contextOne \) to \( \monad^{\effOne \colon \sharp \sem{\contextOne} \to \sharp \sem{\typeOne}}(\flat \sem{\typeOne})\).
\subsubsection*{Application and forcing}
\begin{align*}
  &\left\llbracket
  \begin{matrix}
    \infer
    {\vjudgst{\pcontextOne,\contextOne_1}{\valOne}{\arroweff{\typeOne}{\typeTwo}{\mtypeOne}{\effOne \colon \eObjThree \to \eObjTwo}}
    \quad
    \vjudgst{\pcontextOne,\contextOne_2}{\valTwo}{\typeOne}}
    {\cjudgeff{\pcontextOne,\contextOne_1,\contextOne_2}{\app{\valOne}{\valTwo}}{\typeTwo}{\effOne \colon \eObjThree \to \eObjTwo}}
  \end{matrix}
  \right\rrbracket \\
  &\defeq \begin{aligned}[t]
  \semP \pcontextOne  \times \flat \sem {\contextOne_1} \times \flat \sem {\contextOne_2}
    &\xrightarrow{\Delta_{\semP{\pcontextOne}}\times \id} \semP \pcontextOne \times \sem \pcontextOne  \times \flat \semP {\contextOne_1} \times \flat \sem {\contextOne_2}  \\
        &\xrightarrow{\mathmakebox[4em]{\cong}} (\semP \pcontextOne  \times \flat \sem {\contextOne_1}) \times (\semP \pcontextOne \times \flat \sem {\contextOne_2}) \\
    &\xrightarrow{\flat \sem \valOne \times \flat \sem \valTwo} \flat \sem{\arrowst \typeOne  \typeTwo {\mtypeOne} } \times \flat \sem \typeOne \\
    &\xrightarrow{\mathmakebox[4em]{\ev}} \monad^{\effOne}(\flat \sem{\typeTwo})
      \end{aligned}
  \displaybreak[1] \\
&\left\llbracket
  \begin{matrix}
    \infer{\vjudgst{\pcontextOne}{\valOne}{\bangeff{\typeOne} \effOne}}
		{\cjudgeff{\pcontextOne}{\force{\valOne}}{\typeOne} {\effOne \colon \eObjOne \to \eObjTwo}}
  \end{matrix}
  \right\rrbracket \defeq
  \semP \pcontextOne
    \xrightarrow{\flat \sem \valOne} \flat \sem{\bangeff{\typeOne}{\effOne}} \times 1
    \xrightarrow{\mathmakebox[2em]{\ev}} \monad^{\effOne}(\flat \sem{\typeOne})
\end{align*}
Here, unlike in Section~\ref{sec:categorica-semantics-st}, \( \ev \) is the evaluation morphism for the exponential objects in \( \Set \).

\subsubsection*{Circuit operations}
\begin{align*}
  &\left\llbracket
  \begin{matrix}
    \infer
    {\vjudgst{\pcontextOne}{\valOne}{{\arroweff{\mtypeOne}{\mtypeTwo}{I}{\effOne\colon \eObjOne \to \eObjTwo}}}}
    {\cjudgeff{\pcontextOne}{\boxt{\mtypeOne}{\valOne}}{\circt{\effOne\colon \eObjOne \to \eObjTwo}{\mtypeOne}{\mtypeTwo}} \nulleff}
  \end{matrix}
  \right\rrbracket \\
  &\defeq \begin{aligned}[t]
    \semP \pcontextOne &\xrightarrow{\mathmakebox[2em]{\sem{\valOne}}} \flat \sem{\arroweff \mtypeOne \mtypeTwo {}{\effOne}} \\
    &\xrightarrow{\mathmakebox[2em]{\boxsem}} \catM^{\ple \effOne}(\semM{\mtypeOne}, \semM{\mtypeTwo}) \\
    &\xrightarrow{\mathmakebox[2em]{\unit_\munitt}} \monad^{\nulleff}( \catM^{\ple \effOne}(\semM{\mtypeOne}, \semM{\mtypeTwo}))
    \end{aligned} \\ \\
    &\left\llbracket
      \begin{matrix}
        \infer
        {\vjudgst{\pcontextOne,\contextOne_1}{\valOne}{\circt{\effOne \colon \eObjOne \to \eObjTwo}{\mtypeOne}{\mtypeTwo}}
        \\
        \vjudgst{\pcontextOne,\contextOne_2}{\valTwo}{\mtypeOne}}
        {\cjudgeff{\pcontextOne,\contextOne_1,\contextOne_2}{\apply{\valOne}{\valTwo}}{\mtypeTwo} {\effOne \colon \eObjOne \to \eObjTwo}}
      \end{matrix}
      \right\rrbracket \\
  &\defeq
    \begin{aligned}[t]
      \semP{\pcontextOne} \times \flat \sem{\contextOne_1} \times \flat \sem{\contextOne_2}
      &\xrightarrow{\Delta_{\semP \pcontextOne} \times \id}  \semP{\pcontextOne} \times \semP{\pcontextOne} \times \flat \sem{\contextOne_1} \times \flat \sem{\contextOne_2} \\
      &\xrightarrow{\mathmakebox[3em]{\cong}}  (\semP \pcontextOne  \times \flat \sem {\contextOne_1}) \times (\semP \pcontextOne \times \flat \sem {\contextOne_2}) \\
      &\xrightarrow{\flat \sem \valOne \times \flat \sem \valTwo} \catM^{\ple \effOne}(\semM{\mtypeOne}, \semM{\mtypeTwo}) \\
      &\xrightarrow{\mathmakebox[3em]{\unit_{\semM \mtypeOne}}} \monad^{\id_{\sem \mObjOne}} (\catM^{\ple \effOne}(\semM{\mtypeOne}, \semM{\mtypeTwo}))) \\
      &\xrightarrow{\mathmakebox[3em]{\mathbf{comp}}} \monad^{\effOne}(1).
    \end{aligned}
\end{align*}
Here \( \mathbf{comp}_{\mObjOne_1, \mObjOne_2, \mObjOne_3} \colon \catM(\mObjOne_1, \mObjOne_2) \times \catM(\mObjOne_2, \mObjOne_3) \to \catM(\mObjOne_1, \mObjOne_3)\) is the composition morphism.
\subsubsection*{Return and let}

\begin{align*}
  &\left\llbracket
  \begin{matrix}
    \infer
    {\vjudgst{\contextOne}{\valOne}{\typeOne} \quad \abstraction(\sharp \sem \typeOne) = \eObjOne }
    {\cjudgeff{\contextOne}{\return{\valOne}}{\typeOne} {\id_{\eObjOne} \colon \eObjOne \to \eObjOne}}
  \end{matrix}
  \right\rrbracket
    \defeq \flat \sem{\contextOne} \xrightarrow{\flat \sem{\valOne}} \flat \sem{\typeOne} \xrightarrow{\mathmakebox[2em]{\unit_{\sharp \sem \typeOne}}} \monad^{\id_{\sharp \sem \typeOne}}(\flat \sem \typeOne) \\
  \displaybreak[1] \\
  &\left\llbracket
    \begin{matrix} %
      \infer{\cjudgeff{\pcontextOne,\contextOne_1}{\termOne}{\typeOne} {\effOne_1 \colon \eObjOne_1 \to \eObjOne_1'}
      \\
      \cjudgeff{\pcontextOne,\contextOne_2,\varOne:\typeOne}{\termTwo}{\typeTwo}  {\effOne_2 \colon \eObjOne_2 \to \eObjOne_2'} \\
        \refine {\sem{\sharp \contextOne_i}} {\eObjTwo_i}  \quad \effOne = (\id_{\eObjTwo_2} \rtensor \effOne_1); \effOne_2
      }
  {\cjudgeff{\pcontextOne,\contextOne_2,\contextOne_1}{\letin{\varOne}{\termOne}{\termTwo}}{\typeTwo} \effOne}
    \end{matrix}
    \right\rrbracket \\
  &\defeq
\begin{aligned}[t]
  \sem \pcontextOne \times \flat \sem {\contextOne_2} \times \flat \sem{\contextOne_1}
  &\xrightarrow{(\Delta \times \id); \cong} \sem \pcontextOne \times \flat \sem {\contextOne_2} \times \sem \pcontextOne \times \flat \sem{\contextOne_1} \\
  &\xrightarrow{\id \times \unit_{\eObjTwo_2} \times \sem \termOne} \sem \pcontextOne \times \circmonadEff {\id_{\eObjTwo_2}} \flat \sem{\contextOne_2} \times \circmonadEff {\effOne_1 } \flat \sem{\typeOne} \\
  &\xrightarrow{\id \times \rtensorTwo; \strength}
    \circmonadEff {\id \rtensor \effOne_1}(\sem \pcontextOne \times \flat \sem{\contextOne_2} \times \flat \sem{\typeOne} ) \\
  &\xrightarrow{\circmonadEff {\id \rtensor \effOne_1} \sem \termTwo}
     \circmonadEff {\id \rtensor \effOne_1}(\circmonadEff{\effOne_2}(\flat \sem \typeTwo) ) \\
  &\xrightarrow {\mult_{\id \rtensor \effOne_1, \effOne_2}} \circmonadEff {(\id \rtensor \effOne_1); \effOne_2}(\flat \sem \typeTwo).
\end{aligned}
\end{align*}

{%

\subsubsection*{Subsumption}
\begin{align*}
  &\left\llbracket
  \begin{matrix}
  	\infer{\cjudgeff{\contextOne}{\termOne}{\typeOne} {\effOne_1 \colon \eObjOne \to \eObjTwo} \quad \effOne_1 \ple \effOne_2}
    {\cjudgeff{\contextOne}{\termOne}{\typeOne} {\effOne_2 \colon \eObjOne \to \eObjTwo}}
  \end{matrix}
  \right\rrbracket
  \defeq \flat \sem{\contextOne} \xrightarrow{\sem{M}} \monad^{\effOne_1 \colon \sharp \sem{\contextOne} \to \sharp \sem{\typeOne}}(\flat \sem{\typeOne}) \xrightarrow{\monad^{\effOne_1 \ple \effOne_2}_{\flat \sem{\typeOne}}} \monad^{\effOne_2}(\flat \sem{\typeOne})
\end{align*}

\section{Supplementary Materials on Indexed Monads and Parameterized Freyd Categories}
\label{appx:monad-Freyd}
As supplementary material, we include a brief review of indexed monads and the parameterized Freyd category to make the paper self-contained. Most of the definitions are drawn from~\cite{Atkey09}.
We present only the minimal definitions required to understand this paper, with explanations tailored to our purposes.

\subsection{Parameterized Freyd category}

We give the precise definition of parameterized Freyd category that was omitted from the body of the paper.

\begin{definition}[{\cite{Atkey09}}]
  \label{def:parametrized-Freyd}
  A \emph{parameterized Freyd category} consists of three functors \( J : \catC  \times \catS \to \catK \),  \( \rtensorF : \catC \times \catK \to \catK \) and \( \ltensorF : \catK \times \catC  \to \catK \)  such that
\begin{itemize}
\item \( J \) is identity on objects,
\item the cartesian product of \( \catC \) is respected: \( X \rtensorF J(Y, S) = J(X, S) \ltensorF Y = J(X \times Y, S)\),
  \item for each \( S \in \catS \), the transformations given by the associativity \( J(\alpha, S) \), the left unitor \( J(\lambda, S) \), the right unitor  \( J(\rho, S) \) and the symmetry \( J(\symm, S) \) of the symmetric monoidal structure arising from the finite products of \( \catC \) must be natural in the variables in all combinations of \( \times \), \( \rtensorF \) and \( \ltensorF \) that make up their domain and codomain.
\end{itemize}
As is clear from the above conditions, we are assuming that \( \catC \) has finite products.
The pair of functors \( \rtensorF \) and \( \ltensorF \) is often called the premonoidal structure of the parameterized Freyd category (with respect to \( \catC \)).
\end{definition}

The most important example (and the only example appearing in the paper) of a parameterized Freyd category is the one given by the Kleisli construction of an indexed monad.
As already explained, the Kleisli category of \(\monad : \catS^\op \times \catS  \times \catC \to \catC \) has pairs of objects of \( \catC \) and \( \catS \) as its objects and the homset \( \catC_\monad((X, S), (Y, T) \) is defined as \( \catC(X, \monad (S, T, Y))\).
Compositions of morphisms \( f \colon (X, S) \to (Y, T)\) and \( g \colon (Y, T) \to (Z, U) \) are defined by the Kleisli composition:
\begin{align*}
  X \xrightarrow{f} \monad(S, T, Y) \xrightarrow{\monad(S, T, g)} \monad(S, T, \monad(T, U, Z)) \xrightarrow{\mult_{S,T,U, Z}} \monad(S, U, Z).
\end{align*}
The (morphism part of the) functor \( \rtensorF \) is defined by
\begin{align}
  \label{eq:Freyd-premonoidal}
  f \rtensorF c \defeq X \times W \xrightarrow{f \times c} Y \times \monad(S, T, Z) \xrightarrow{\strength_{Y, S ,T , Z}} \monad(S, T, Y \times Z).
\end{align}
for \( f: X \to Y \), a morphism in \( \catC \), and \( c: (W, S) \to (Z, T) \).

\subsection{Lifting of the Premonoidal Structure}
\label{appx:lifitng}
In the main part of the paper, we claimed that the premonoidal structure is lifted to the indexed monad, parameterized Freyd category, and the category-graded monad.
Here we make precise what we mean by that.
This is just a review of an existing, but perhaps not so known, notion~\cite{Atkey09}, and we believe that having this supplementary section would help readers understand the technical details.
However, \emph{readers not interested in the technical definitions may safely ignore this section}.
To understand the interpretation, it suffices to understand that there is an operation, called the lifting of \( - \ltensor \mObjOne \), that acts on computations by modifying the underlying circuit \( \circuitOne \) to \( \circuitOne \ltensor \mObjOne \).
It may be worth mentioning that our interpretation, as well as the soundness result, does not use any specific property of the circuit monad except for circuit related operations such as \( \boxoperator \) and \( \liftoperator\).
The interpretation of the other constructs only relies on the parameterized Freyd structure \emph{and the existence of the premonoidal lifting}.

We start by reviewing the notion of lifting for indexed monads.
\begin{definition}[Lifting for indexed monads~\cite{Atkey09}]
  Let \( \monad \) be a \( \catS \)-indexed monad over a cartesian category \( \catC \) and \( F \colon \catS \to \catS \) be an endofunctor.
  A \emph{lifting of \( F \)} to \( \monad \) is a natural transformation \( F^\dagger_{S, T, X} \colon \monad(S, T, X) \to \monad (F S, F T, X) \) that commutes with the unit, multiplication (and strength of the monad):
  \[
  \begin{tikzcd}[row sep=large, column sep=large,wire types={n,n}]
	X \arrow[rd,"\unit_{FS, X}"']\arrow[r,"\unit_{S, X}"]  & \monad(S, S, X) \arrow[d,"F^{\dagger}_{S, S, X}"]\\
	& \monad(FS, FS, X)
  \end{tikzcd}
  \]
  \[
  \begin{tikzcd}[row sep=large, column sep=huge,wire types={n,n}]
	\monad(S, T, \monad(T, U, X)) \arrow[d,"\mult_{S, T, U, X}"']\arrow[rr,"F^\dagger_{S, T, \monad(T, U, X)}"] && \monad(FS, FT, \monad(T, U, X)) \arrow[d,"{\monad(FS, FT, F^{\dagger})}"] \\
	\monad(S, U, X) \arrow[r,"F^\dagger_{S, U, X}"'] & \monad(FS, S U, X) &  \arrow[l,"\mult_{FS, FT, FU, X}"]\monad(FS, FT, \monad(FT, FU, X))
  \end{tikzcd}
\]

A natural transformation \( \theta \colon F \to G \) is \emph{natural for liftings \( F^\dagger \) and \( G^\dagger \)} if the following diagram commutes.
\[
  \begin{tikzcd}[row sep=large, column sep=large,wire types={n,n}]
	\monad(S, T, X) \arrow[d,"{G^\dagger_{S, T, X}}"']\arrow[r,"{F^\dagger_{S, T, X}}"] & \monad(FS, FT, X) \arrow[d,"{\monad(FS, \theta_T, X)}"]\\
	\monad(GS, GT, X) \arrow[r,"{\monad(\theta_S, GT, X)}"'] & \monad(FS, GT, X)
  \end{tikzcd}
\]
\end{definition}

\begin{definition}
\label{def:premonoidal-lifting}
We say that an \( \catS \)-indexed monad \( \monad \) has \emph{premonoidal lifting} if there are liftings for the functors \( - \ltensor S \) and \( S \rtensor - \), written \(  (- \ltensor S )^\dagger \) and \( (S \rtensor -)^\dagger \), for every \( S \in \catS \), such that all the  associativity and left and
right unitors are natural for them.
Furthermore, the indexed monad is said to have a symmetric premonoidal lifting if the
symmetry natural transformations are also natural.
\end{definition}
As we briefly explained, the circuit monad \( \circmonad* \) has premonoidal lifting.

Next we review the notion of liftings for the parameterized Freyd category.
The liftings for an indexed monad and the liftings for the parameterized Freyd category obtained as the Kleisli category of that indexed monad are mutually related.

\begin{definition}[Lifting for parameterized Freyd categories~{\cite{Atkey09}}]
  Let \( F \colon \catS \to \catS \) be an endofunctor and \( (J, \rtensorF, \ltensorF) \) be a parametric Freyd category where \( J \colon \catC \times \catS \to \catK \).
  The parameterized Freyd category has a lifting of \( F \) if it has a functor \( F^\star \colon \catK \to \catK\) such that
  \begin{itemize}
    \item \( F^\star (J(X, S)) = J(X, FS)\) and \( F^\star(J(f, s)) = J(f, Fs) \) for \( f \in \catC(X, Y) \) and \(s \in \catS(S, T)\)
    \item \( F^\star \) respects the premonoidal structure: \( F^\star(X \rtensorF (Y, S)) = F^\star((X, S) \ltensorF Y)  = (X \times Y, F S)\) and \(F^\star(f \rtensorF c) = f \rtensor F^\star c \) (and similarly for \( \ltensorF \)).
  \end{itemize}

  A natural transformation \( \theta \colon F \to G\) is natural for liftings \( F^\star \) and \( G^\star \) if the following diagram commutes for all \( c \colon (X, S) \to (Y, T) \).
  \[
    \begin{tikzcd}[row sep=large, column sep=large,wire types={n,n}]
      J(X, FS)\arrow[d,"{F^\dagger f}"']\arrow[r,"{J(X, \theta_S)}"] & J(X, GS) \arrow[d,"{G^\dagger f}"]\\
	J(Y, FT) \arrow[r,"{J(Y, \theta_T)}"'] & J(Z, G T)
  \end{tikzcd}
  \]
\end{definition}
\begin{proposition}[{\cite[Theorem~3]{Atkey09}}]
Let \( \monad \) be an indexed monad over a cartesian category \( \catC \).
If \( F^\dagger \) is a lifting of an endofunctor \( F : \catS  \to \catS \), then we can construct a lifting \( F^\star \) on the parameterized Freyd category \( \catC_\monad \) and vice versa.
These operations are inverse.
If a natural transformation from \( F \) to \( G \) is natural for liftings \( F^\dagger \) and \( G^\dagger \), then it is also natural for liftings \( F^\star \)and \( G^\star \), and vice versa.
\end{proposition}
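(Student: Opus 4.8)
The plan is to write down explicit formulas in both directions and then verify, by unfolding the Kleisli structure recalled in Appendix~\ref{appx:monad-Freyd}, that they are mutually inverse. Given a lifting $F^\dagger$ of the monad, the object-action of $F^\star$ is forced: since $J$ is identity on objects and must be preserved, I set $F^\star(X,S) \defeq (X,FS)$. On a Kleisli morphism $c \colon (X,S) \to (Y,T)$---that is, a $\catC$-morphism $c \colon X \to \monad(S,T,Y)$---I define $F^\star$ by post-composition with the lifting, $F^\star(c) \defeq c; F^\dagger_{S,T,Y}$, which is a $\catC$-morphism $X \to \monad(FS,FT,Y)$ and hence a Kleisli morphism $F^\star(X,S) \to F^\star(Y,T)$. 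Conversely, given a Freyd-lifting $F^\star$, I recover $F^\dagger$ by evaluating $F^\star$ at the ``generic'' Kleisli arrow: regarding $\id_{\monad(S,T,X)}$ as the canonical morphism $\epsilon_{S,T,X} \colon (\monad(S,T,X),S) \to (X,T)$ of $\catK$, I set $F^\dagger_{S,T,X} \defeq F^\star(\epsilon_{S,T,X})$, which by the object-action of $F^\star$ is exactly a $\catC$-morphism $\monad(S,T,X) \to \monad(FS,FT,X)$.

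Next I would check that each construction lands where it should. For $F^\star$ built from $F^\dagger$: preservation of identities is literally the unit-commuting square of the lifting, since the Kleisli identity on $(X,S)$ is $\unit_{S,X}$ and $\unit_{S,X}; F^\dagger_{S,S,X} = \unit_{FS,X}$. Preservation of composition, for $d\colon (Y,T)\to(Z,U)$, unfolds the Kleisli composite $c;d = c; \monad(S,T,d); \mult_{S,T,U,Z}$ and reduces, after a chase, to the multiplication-commuting square for $F^\dagger$ together with the naturality of $F^\dagger$ in its $\catC$-argument; this is the longest step. Preservation of the premonoidal structure, $F^\star(f \rtensorF c) = f \rtensorF F^\star(c)$, follows from the compatibility of $F^\dagger$ with the strength $\strength$, because $\rtensorF$ is defined through $\strength$ in~\eqref{eq:Freyd-premonoidal}; and $F^\star(J(f,s)) = J(f,Fs)$ follows from the unit law and naturality. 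In the reverse direction, that $F^\dagger \defeq F^\star(\epsilon_{S,T,X})$ is natural in $X$ and commutes with $\unit$, $\mult$, and $\strength$ is read off from the functoriality and premonoidal-preservation of $F^\star$, applied to the factorizations of these structure maps through $J$ and $\epsilon$.

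Then I would verify the two passages are mutually inverse. Starting from $F^\dagger$, building $F^\star$ and re-applying the recovery gives $F^\star(\epsilon_{S,T,X}) = \id_{\monad(S,T,X)}; F^\dagger_{S,T,X} = F^\dagger_{S,T,X}$ at once. Starting from $F^\star$, the crux is that every Kleisli morphism factors as $c = J(c,\id_S); \epsilon_{S,T,Y}$---a consequence of the unit law $\unit_{S,\monad(S,T,Y)}; \mult_{S,S,T,Y} = \id$---so that functoriality of $F^\star$ and its preservation of $J$ yield $F^\star(c) = J(c,\id_{FS}); F^\star(\epsilon_{S,T,Y}) = c; F^\dagger_{S,T,Y}$, which is exactly the $F^\star$ rebuilt from the recovered $F^\dagger$ (again using the unit law to collapse the composite). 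Finally, the naturality correspondence for $\theta \colon F \to G$ is a direct diagram chase: the Freyd-level square evaluated at $\epsilon_{S,T,X}$ is the monad-level square, and conversely the monad-level square post-composed into an arbitrary $c$ recovers the Freyd-level square via the same factorization.

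I expect the main obstacle to be the composition-preservation computation for $F^\star$ and its converse---keeping the naturality and dinaturality indices of $\unit$, $\mult$, and $\strength$ correctly aligned as one threads them through the Kleisli composite---rather than any conceptual difficulty. Once the two formulas and the factorization $c = J(c,\id_S); \epsilon_{S,T,Y}$ are in hand, everything else is bookkeeping with the monad laws and the defining properties of the liftings.
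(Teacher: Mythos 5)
Your construction is correct and is exactly the standard one: post-composition with \( F^\dagger_{S,T,Y} \) in one direction, and evaluation of \( F^\star \) at the generic Kleisli arrow \( \epsilon_{S,T,X} = \id_{\monad(S,T,X)} \) together with the factorization \( c = J(c,\id_S);\epsilon_{S,T,Y} \) in the other. The paper itself gives no proof of this proposition---it is imported verbatim as Theorem~3 of \citet{Atkey09}---and your argument matches the proof given there, so there is nothing to flag beyond the bookkeeping you already anticipate.
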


The premonoidal lifting for parameterized Freyd category is defined as in Definition~\ref{def:premonoidal-lifting}, and it is easy to check that if \( \monad \) has a premonoidal lifting so does its Kleisli category.
We note that a parametrized Freyd category with a premonoidal lifting, in a sense, has two premonoidal structures one respect to \( \catC \) and the other respect to \( \catS \).
This allows us to define a binoidal functor \( (X, S) \rtensor - \colon \catK \to \catK \), for each \( (X, S) \in \catK \), defined as \( (X \rtensorF -); (S \rtensor -)^\star \).
The functor \(  - \ltensor (X, S)\colon \catK \to \catK \) is defined analogously.
It is not hard to see that this is makes \( \catK \) a premonoidal category.
In particular, for \( \Set_{\circmonad*} \), we have
\[
  ((X, S) \rtensor f)(x, y) = ((x, z), S \rtensor \circuitOne)
\]
provided that \( f \colon Y \to \circmonad T U Z\) maps \( y \) to \( (z, \circuitOne)\).
In other words, \(  (X, S) \rtensor f\) just performs the computation against the second element and, at the same time, augments wires to the underlying circuit of the computation.
It is this premonoidal structure that is used in the interpretation of the computational judgments given in Section~\ref{sec:categorica-semantics-st}.
We note that \( J(f, s) \) is a central morphism if \( s \) is with respect to the premonoidal structure of \( \catS \).
Therefore, the values are interpreted as central morphisms in the interpretation.

The notion of lifting of functors can be naturally extended to category-graded monads.

\begin{definition}[Lifting for category-graded monad]
  Let \( \monad \) be a \( \catA \)-graded monad over a cartesian category \( \catC \) and \( F \colon \catA \to \catA \) be an endofunctor.
  A \emph{lifting of \( F \)} to \( \monad \) is a family of natural transformation \( F^\dagger_{f, X} \colon \monad^f X \to \monad^{Ff} X \) indexed by morphisms \( f \) in \(\catA \) that commutes with the unit, multiplication (and strength of the monad):
  \[
  \begin{tikzcd}[row sep=large, column sep=large,wire types={n,n}]
	X \arrow[rd,"\unit_{F a, X}"']\arrow[r,"\unit_{a, X}"]  & \monad^{\id_a} \arrow[d,"F^{\dagger}_{\id_a, X}"]\\
	& \monad^{\id_{Fa}}
  \end{tikzcd}
  \quad
  \begin{tikzcd}[row sep=large, column sep=huge,wire types={n,n}]
	\monad^f(\monad^g X) \arrow[d,"\mult_{f, g X}"']\arrow[rr,"F^\dagger_{f, \monad^g X}"] && \monad^{F f} (\monad^{g} X) \arrow[d,"{\monad^{F f}(F^{\dagger})}"] \\
	\monad^{f; g} X  \arrow[r,"F^\dagger_{f;g, X}"'] & \monad^{Ff; Fg} X &  \arrow[l,"\mult_{Ff, Fg, X}"]\monad^{Ff}(\monad^{Fg} X)
  \end{tikzcd}
\]

A natural transformation \( \theta \colon F \to G \) is \emph{natural for liftings \( F^\dagger \) and \( G^\dagger \)} if there is a generalized unit~\cite{OrchardWE20} \(\unit_{\theta_a, X} \colon X \to \monad^{\theta_a} X \) for each component \( \theta_a \) such that
\[
  \begin{tikzcd}[row sep=large, column sep=large,wire types={n,n}]
    \monad^f  X  \arrow[dd,"{G^\dagger_{f, X}}"']\arrow[r,"{F^\dagger_{f, X}}"] & \monad^{F f} \arrow[r,"{\monad^{F f} \unit_{\theta_b, X}}"] & \monad^{F f} \monad^{\theta_b} X \arrow[d,"\mult"] & \\
    && \monad^{F f; \theta_b} \arrow[d,equal] \\
	\monad^{G f} X \arrow[r,"{\unit_{a, \monad^{G f} X}}"'] & \monad^{\theta_a} \monad^{G f} \arrow[r,"{\mult_{\theta_a; Gf}}"'] X  & \monad^{\theta_a; Gf} X
\end{tikzcd}
\]
commutes for each \( f \colon a \to b\).
\end{definition}
For the \( \circmonad*^{\effOne} \) the lift of the premonoidal product \( T \rtensor - \) is once again just the operation that maps the circuit part of a computation \( \circuitOne \) to \( T \rtensor C \).
The only natural transformation we are interested in is the symmetry (as we are working in a strict premonoidal category), and the generalized unit for the symmetry \( \symm \) is just the map \( x \mapsto (x, \symm) \).

\end{document}